\definecolor{labelkey}{rgb}{0,0,1}
\newtheorem{remark}{Remark}
\newtheorem{lemma}{Lemma}
\numberwithin{lemma}{section}
\newtheorem{theorem}{Theorem}
\numberwithin{theorem}{section}
\newtheorem{proposition}{Proposition}
\numberwithin{proposition}{section}
\numberwithin{equation}{section} 
\def\Tr@nsmogrify#1#2.{\expandafter\newcommand\csname #1#2\endcsname
{\mathchardef\Tr@ns@temp=\mathcode\lccode`#1\relax
\mathcode\lccode`#1=\mathcode`#1\lowercase{\csname#1#2\endcsname}%
\mathcode\lccode`#1=\Tr@ns@temp\relax}}
\newcommand{\lpj}{\triangle_j}
\newcommand{\lpk}{\triangle_k}
\newcommand{\ptl}{\partial}
\newcommand{\te}{\theta}
\newcommand{\ga}{\gamma}
\newcommand{\la}{\Lambda}
\newcommand{\ka}{\kappa}
\newcommand{\zz}{\mathbb{Z}^2}
\newcommand{\C}{\mathbb{C}}
\newcommand{\E}{\mathfrak{E}}
\newcommand{\Hdot}{\dot{H}}
\newcommand{\Acal}{\mathcal{A}}
\newcommand{\kbar}{\overline{\ka}}
\newcommand{\lbn}{\left (}
\newcommand{\rbn}{\right )}
\newcommand{\bkap}{\bar{\kap}}
\newcommand{\Sob}[2]{\lVert#1\rVert_{#2}}
\newcommand{\nrm}[1]{\lVert#1\rVert}
\newcommand{\kapb}{\underline{\kap}}
\newcommand{\kapd}{\kap_{\textrm{d}}}
\newcommand{\Gb}{\underline{G}}
\newcommand{\RR}{\mathbb{R}}
\newcommand{\TT}{\mathbb{T}}
\newcommand{\ZZ}{\mathbb{Z}}
\newcommand{\Bcal}{\mathcal{B}}
\newcommand{\Ecal}{\mathcal{E}}
\newcommand{\al}{\alpha}
\newcommand{\be}{\beta}
\newcommand{\de}{\delta}
\newcommand{\De}{\Delta}
\newcommand{\gam}{\gamma}
\newcommand{\kap}{\kappa}
\newcommand{\Lam}{\Lambda}
\newcommand{\om}{\omega}
\newcommand{\Om}{\Omega}
\newcommand{\si}{\sigma}
\newcommand{\tht}{\theta}
\newcommand{\bv}{\mathbf{v}}
\newcommand{\bw}{\mathbf{w}}
\newcommand{\bdy}{\partial}
\newcommand{\lb}{\langle}
\newcommand{\rb}{\rangle}
\DeclareMathOperator{\supp}{supp}
\begin{document}

\title{gSQG}
\date{October 16, 2025}
\title[gSQG turbulence]
{On Turbulent behavior of the generalized surface quasigeostrophic equations}
\author{Chengzhang Fu$^{1}$}
\author{Michael S. Jolly$^{1,\dagger}$}
\address{$^1$ Department of Mathematics, Indiana University, Bloomington, IN}
\author{Anuj Kumar$^{2}$}
\address{$^{2}$ New Delhi, India}
\author{Vincent R. Martinez$^{3,4}$}
\address{$^{3}$ Department of Mathematics and Statistics,
CUNY Hunter College, New York City, USA}
\address{$^{4}$ Department of Mathematics,
CUNY Graduate Center, New York City, USA}

\email[C. Fu]{fu6@iu.edu}
\email[M. S. Jolly]{msjolly@iu.edu}
\email[A Kumar]{akumar241@outlook.com}
\email[V. R. Martinez]{vrmartinez@hunter.cuny.edu}

\thanks{M.S.J. was supported by Simons grant MP-TSM-00002337; V.R.M. was in part supported by the National Science Foundation through DMS 2213363, DMS 2206491, DMS 2511403, the Simons Foundation through MP-TSM 00014320, and the Dolciani Halloran Foundation. The computational component was supported in part by Lilly Endowment, Inc., through its support for the Indiana University Pervasive Technology Institute.}

\date{October 16, 2025}

\subjclass[2010]{
35Q30, %
76F02, 
76F25} 

\keywords{Navier-Stokes equations, generalized surface quasi-geostrophic equations, turbulence, enstrophy cascade, dissipation law, dissipation wavenumber}
\begin{abstract}
Turbulent behavior of the two-parameter family of generalized surface quasigeostrophic equations is examined both rigorously and numerically.  We adapt a cascade mechanism argument to derive an energy spectrum that scales as $\ka^{2\beta/3-3}$ where $\beta$ controls the regularity of the velocity ($\beta=1$ in the special case of the SQG). Direct numerical simulations indicate that this fits better than $\ka^{\beta/3-3}$ which {was} derived in \cite{Pierrehumbert1994}. Guided by earlier work on the 2D Navier--Stokes equations, we prove a certain condition implies  a direct cascade of enstrophy, as well as an upper bound on the enstrophy dissipation rate, and sharp bounds on a dissipation wavenumber. The dependence of these rigorous results on the two parameters is demonstrated numerically.
\end{abstract}
\maketitle

\section{Introduction}
 We study the turbulent behavior of the generalized surface quasi-geostrophic equation (gSQG) over the domain $\Omega=[0,L]^2$. The gSQG equation is given by
\begin{align} \label{maineq}
    &\ptl_t \te + \ga \la ^{\alpha} \te + u \cdot \nabla \te = g,\quad u=\nabla^{\perp}\psi \overset{\text{def}}{=}(-\bdy_{x_2} \psi, \bdy_{x_1} \psi),\quad
    \Delta \psi = \ka_0^{-\beta}\la^{\beta} \te,
\end{align}
where $0\leq \beta < 2$, $0<\alpha \leq 2$, and $\ka_0=2\pi/L$. {The driving force $g$ is time-independent and given; this acts as a large-scale energy source to sustain turbulent behavior. The damping parameter, $\gam$, is positive, so the fractional laplacian, $\la^\al$, acts as a small-scale energy sink; recall that $\la\overset{\text{def}}{=}(-\Delta)^{\frac{1}{2}}$. Lastly, we equip \eqref{maineq} with periodic boundary conditions and assume that $\tht$, $g$ are mean-free over $\Om$.  Note that the factor $\ka_0^{-\beta}$ is included in the constitutive law so that $u$ retains the dimensions of velocity and the overall dimensional consistency of the equation is maintained.}   When $ \alpha = 2 $ and $ \beta = 0 $, the equation reduces to the vorticity formulation of the Navier--Stokes equations (NSE), while fixing $ \beta = 1 $ yields the special case of the surface quasi-geostrophic equation (SQG).

The theories for two-dimensional turbulence by Batchelor, Kraichnan, and Leith
\cite{B53, Kraichnan1967, Leith} are akin to that for 3D by Kolmogorov \cite{KO1991} in that they
are derived through scaling arguments without direct use of the equations of motion.  The main difference is that in 2D there are two invariances, one for energy and one for enstrophy, which enable cascades of both quantities toward larger and smaller scales, respectively.  Foias \cite{foias1997turbulence} made explicit use of the Navier-Stokes equations to provide rigorous support for certain elements of these theories.  That approach was continued in \cite{FJMR2002,Dascaliuc2008}, which contain arguments that are modified here for  
some of our results for
the gSQG.

The theoretical study of the inviscid gSQG family was introduced to the mathematical community in \cite{ChaeConstantinCordobaGancedoWu2012}, where the weak solution theory, local strong solution theory, and local theory for the corresponding patch problem were initially developed. Since then, several works have subsequently refined the understanding of when the initial value problem corresponding to \eqref{maineq} and associated modifications are well-posed \cite{ChaeWu2012, ChaeConstantinWu2012b, HuKukavicaZiane2015, KiselevYaoZlatos2017, YuZhenJiu2019, GancedoPatel2021, JKM2022, ChaeJeongOh2023b} or ill-posed \cite{KiselevRyzhikYaoZlatos2016, KukavicaVicolWang2016, JeongKim2021, ChaeJeongOh2023b, CordobaLucas-ManchonZoroa-Martinez2025, ChoiJungKim2025} in various contexts, as well as the construction of global solutions \cite{NahmodPavlovicStaffilaniTotz2018, Nguyen2018, CordobaGomez-SerranoIonescu2019, GeldhauserRomito2020, Rosenzweig2020, CaoQinZhanZou2023}. The global issue of whether solutions emanating from smooth initial data blow-up in finite time remains an outstanding open problem. 
On the other hand, in the presence of fractional dissipation, global regularity has been resolved within a subset of the family \cite{Resnick1995, ConstantinWu1999, ConstantinGlattHoltzVicol2013}, most notably in the presence of so-called \textit{critical dissipation}, where the dissipation power is related to the constitutive law in a particular way \cite{KiselevNazarovVolberg2007, KiselevNazarov2009, CaffarelliVasseur2010, ConstantinVicol2012, ConstantinIyerWu2008, LazarXue2019, MiaoXue2012}. Generally speaking, the dissipative SQG family whose turbulent behavior we study in this article is known to be locally well-posed for large initial data and globally well-posed for small initial data \cite{MiaoXue2011, ChaeConstantinWu2012a, JollyKumarMartinez2020a}.

The particular features 
of turbulence
considered in this paper are the energy spectrum, a direct cascade of enstrophy, and a 2D analogue of Kolmogorov's dissipation law.  There are two wavenumbers that play critical roles: $\ka_\eta$ where the spectrum is expected to start decaying exponentially and $\ka_\sigma$, a Dirichlet quotient which can determine the extent of the cascade range.    The relevant physical quantities are rescaled from the NSE case in terms of $\alpha$ and $\beta$ which also appear in the relations between them.  
We first use the Richardson/Kraichnan cascade mechanism to derive a power law for the energy spectrum reflecting this rescaling. That is followed by an estimate which guarantees a pronounced direct cascade of energy from the forcing scale to a fraction of $\ka_\sigma$.  We then show rigorously that if the power law holds, other features of turbulence follow.  The first is that $\ka_\sigma$ is comparable to $\ka_\eta$, up to a factor that scales as $\beta^{-1}$.  The second
is that both wave numbers should scale as
$G^{\frac{1}{2\alpha}}$ in the Grashof number $G$.

Two results are proved without assuming the spectrum.  One is the general bound $G^{\frac{1}{3\alpha}} \lesssim \ka_\eta/ \ka_0 \lesssim G^{\frac{2}{3\alpha}}$. The other is one side of the dissipation law, namely $\eta \lesssim U^3/L^3$, where $\eta$ is the enstrophy dissipation rate and $U$ is a suitably scaled quantity that reduces to the root mean square velocity in the case of the NSE.  
We note that while most of the proofs we provide are rescaled adaptations of results for the NSE case in \cite{FJMR2002, Dascaliuc2008}. However, the proofs of dissipation law $\eta \lesssim U^3/L^3$, as well as the tighter bounds on $\ka_\eta$, $\ka_\sigma$ involve a more delicate estimate of the nonlinear term involving commutators. Some technical background for this approach is included in the Appendix. 

Although there is no known rigorous derivation of the power law for the spectrum, it has been observed in countless numerical studies of the NSE.   We present here high-resolution simulations (up to $32,768^2$ collocation points) to test the rescaled power law for the gSQG.  We note where in the $\alpha,\beta$-plane the spectrum for gSQG starts to deviate from the heuristically predicted power law and in turn the extent to which the relations that would follow from that law fail.  We consistently find a marked breakdown when we cross a critical line where the gSQG changes from being quasilinear to fully nonlinear. 

\section{Mathematical Preliminaries}

We identify the domain $\Om$ with the two-dimensional torus:
    \begin{align}\notag
        \Omega=[0,L]^2=\TT^2
    \end{align}
and in the analysis retain the factor $\kap_0$ to help track the dependence of physical dimensions line-to-line. 
 We denote the phase space of \eqref{maineq} by $H$, which we define to be the subspace of $L^2(\TT^2)$ of real-valued, scalar functions which are mean-zero over $\TT^2$. Within this space-periodic setting, we make use the {standard Fourier series framework}. Thus, $H$ can be characterized as
\begin{equation*}
    H := \left\{ \te = \sum_{k\in \zz} \hat{\te}_k e^{i{\kap_0}k\cdot x} \in L^2(\TT^2): \hat{\te}_k \in \C,\  \hat{\te}_{\vec{0}} = 0,\ \hat{\te}_{-k} = \overline{\hat{\te}_k} \right\},
\end{equation*}
where $\hat{\te}_k$ denotes the Fourier coefficient of $\tht$ at wavenumber $k\in\ZZ^2$. The scalar product in $H$ is simply the $L^2$-inner product and we will denote by
\begin{equation*}
    (\te_1,\te_2) = \int_{\TT^2} \te_1(x)\te_2(x)dx.
\end{equation*}
We will denote the associated norm in $H$ by
\begin{equation*}
    |\te| = (\te,\te)^{\frac{1}{2}} = \left(\int_{\TT^2} \te^2dx\right)^{\frac{1}{2}} .
\end{equation*}
Parseval's identity can be read as
\begin{equation*}
   {|\te|}^2 = 4\pi^2 \sum_{k \in \zz} |{\hat{\te}_{k}}|^2 \text{ as well as } (\te,\te')=4\pi^2 \sum_{k \in \zz} {\hat{\te}_{k}} \cdot {\hat{\te}_{-k}'}
\end{equation*}
for $ \te'  =\sum_{k\in \zz} {\hat{\te}_{k}'}e^{ik\cdot x}$. 

The fractional laplacian operator $\la=(-\Delta)^{\frac{1}{2}}$ is self-adjoint and can be defined spectrally. Its eigenvalues are of the form ${\kap_0}|k|$ where $ k \in \zz \backslash \{0\} $; the eigenvalues are denoted and arranged as $0<\lambda_0 = 1 \leq \lambda_1 \leq \lambda_2 \leq ...$, where they are counted according to their multiplicities. Let $w_0,w_1,w_2,...$ be the corresponding normalized eigenvectors, i.e., $|w_i|=1$, for all $j$. {Then for each $\tht\in H$, we have}
    \begin{align*}
        {\tht(x)=\sum_{k\in\ZZ^2}\hat{\tht}_ke^{i\kap_0k\cdotp x}=\sum_{j=0}^\infty (\tht,w_j)w_j(x).}
    \end{align*}
For $\sigma \ge 0$, the {positive powers} of $\la$ are defined by linearity through
\begin{equation*}
    \la^{\sigma}w_j = \lambda_j^{\frac{\sigma}{2}}w_j, \ \ \textrm{for} \ \ j=0,1,2,...
\end{equation*}

We define projectors $P_{\kappa}: H \to \text{span}\{w_j : \lambda_j \leq \kappa\}$ by
\begin{equation} \label{proj:p}
    P_{\kappa}\te = \sum_{|k| \leq \kappa} \hat{\te}_ke^{ik\cdot x}
\end{equation}
with $Q_{\kappa}=I-P_{\kappa}$. {In our analysis, it will be useful to consider components of $\te$ within a range in wavenumbers, so we define}
\begin{equation*}
    \te_{\ka,\ka'} = (P_{\kappa'} - P_{\kappa})\te
\end{equation*}
for $0\leq \ka < \ka'$, with the convention that $\te_{\ka,\infty}=\te_{\ka}={Q_\ka\te}$ for all $0\leq \ka < 1$. {Note that}
    \begin{align}
        \tht_{\kap,\kap'}=Q_\kap P_{\kap'}\tht=P_{\kap'}Q_\kap\tht.\notag
    \end{align}


{Lastly, recalling that $u=\nabla^\perp\psi$, and 
is
thus divergence-free, we note that} the nonlinear term satisfies orthogonality relations similar to ones known for the NSE. In particular, for $u,\tht, w$ sufficiently smooth, one has
    \begin{align}\label{eq:identities}
        (u\cdotp\nabla\te,w) = -(u\cdotp\nabla w,\te)
    \end{align}
and hence
    \begin{align}\label{bterm_prod}
     (u\cdotp\nabla\te, \te)=0\;.
    \end{align}
Moreover, from the vector identity $\bv^\perp\cdot \bw=-\bw^\perp\cdot\bv$ and \eqref{eq:identities} one has
\begin{equation} \label{psi_orthog}
    (u\cdot\nabla \tht, \psi)=(\nabla^\perp \psi\cdot \nabla \tht, \psi)=-(\nabla^\perp \psi\cdot \nabla\psi, \tht) =0.
\end{equation}

\subsection{Apriori Estimates}

{Proceeding formally, if} we multiply \eqref{maineq} by $-\psi$ (respectively, $\te$), then integrate over $\TT^2$ and apply \eqref{bterm_prod}, we find that
    \begin{align} 
        \frac{1}{2}\frac{d}{dt}|\la^{\frac{\beta-2}{2}}\te|^2 + \ga |\la^{\frac{\alpha+\beta-2}{2}}\te|^2 &= (g,-\psi)\label{def_energy}
        \\
        \frac{1}{2}\frac{d}{dt}|\te|^2 + \ga |\la^{\frac{\alpha}{2}}\te|^2 &= (g,\te)\;. \label{def_enstrophy}
    \end{align}
We define 
\begin{equation*}
    \frac{1}{L^2}|\la^{\frac{\beta-2}{2}}\te|^2 \overset{\text{def}}{=} \text{2 times the total ``energy" per unit mass}
\end{equation*}
and
\begin{equation*}
    \frac{1}{L^2}|\te|^2 \overset{\text{def}}{=} {2\ \text{times}\ } \text{the total ``enstrophy" per unit mass} .
\end{equation*}
Note that {in the context of the NSE}, when $\beta=0$, these match the {conserved quantities of energy, $\frac{1}{L^2}|u|^2$, and enstrophy, $\frac{1}{L^2}|\om|^2$, respectively, (per unit mass)} {where $\te$ is interpreted as the fluid vorticity $\om=\nabla^\perp\cdotp u$}.

The relations \eqref{def_energy} and \eqref{def_enstrophy} are the balance equations for the energy and enstrophy, respectively. Applying the Poincar\'e, Cauchy-Schwarz and Young inequalities to \eqref{def_enstrophy}, we find that
\begin{equation}\label{prod_G_a} 
  \frac{d}{dt} |\te|^2 + \kappa_0^{\alpha} \ga |\te|^2 \leq \frac{d}{dt} |\te|^2 + \ga |\la^{\frac{\alpha}{2}}\te|^2 \leq \frac{|\la^{-\frac{\alpha}{2}}g|^2}{\ga} ,    
\end{equation}
so that the Gronwall lemma gives
\begin{equation}\label{prod_G}
    {\sup_{t\geq t_*}|\te(t)|^2} \le 2\frac{|\la^{-\frac{\alpha}{2}}g|^2}{\ga^2\kappa_0^{\alpha}}
\end{equation}
for $t_*$ sufficiently large, {depending on $\gam,\kap_0,\al, |\theta_0|, |\Lam^{-\frac{\al}2}g|$.} 

 {Arguing similarly, we derive}
\begin{equation*}  
     \frac{d}{dt} |\la^{\frac{\beta-2}{2}}\te|^2 + \ga |\la^{\frac{\alpha+\beta-2}{2}}\te|^2 \leq \frac{|\la^{\frac{\beta-\alpha-2}{2}}g|^2}{\ga},
\end{equation*}
{from which we deduce
\begin{align}\label{prod_Gstar}
 \sup_{t\geq t_*}|\la^{\frac{\beta-2}{2}}\te(t)|^2 \le 2\frac{|\la^{\frac{\beta-\alpha-2}{2}}g|^2}{\ga^2\kappa_0^{\alpha}} 
\end{align}
for sufficiently large $t_*$, depending on $\gam,\kap_0,\al, |\Lam^{\frac{\be-2}2}\theta_0|, |\Lam^{\frac{\be-\al-2}2}g|$. }

The inequalities in \eqref{prod_G} and \eqref{prod_Gstar} can {equivalently} be expressed in terms of the dimensionless {\it Grashof numbers} $G$, ${G_*}$ as 
 $$|\te| \leq 2\ga\kappa_0^{\alpha-1}G \quad \text{and}\quad |\la^{\frac{\beta-2}{2}}\te| \leq {2}\ga \kappa_0^{\frac{2\alpha+\beta-4}{2}} G_*$$ where
    \begin{equation} \label{def:G}
        G \overset{\text{def}}{=} \frac{|\la^{-\frac{\alpha}{2}}g|}{\ga^2 \kappa_0^{\frac{3\alpha-2}{2}}} \qquad \text{and} \qquad G_* \overset{\text{def}}{=} \frac{|\la^{\frac{\beta-\alpha-2}{2}}g|}{\ga^2 \kappa_0^{\frac{3\alpha+\beta-4}{2}}} .
    \end{equation}
{We note that}
\begin{equation}\label{eq:G:Gstar:equiv}
    \left(\frac{\kapb}{\kappa_0}\right)^{\frac{2-\beta}{2}}  G_*\leq G \leq \left(\frac{\bkap}{\kappa_0}\right)^{\frac{2-\beta}{2}} G_* 
\end{equation}
whenever $0<\kapb<\bkap$ and $g$ satisfies 
\begin{equation} \label{def:g_kbound}
    g=\sum_{{\kapb} < |k| \leq \bkap} \hat{g}_k e^{ik\cdot x}=g_{\kapb,\bkap}.
\end{equation} 
{In other words, \eqref{eq:G:Gstar:equiv} holds whenever the external driving force $g$ has finite spectral support.}

\subsection{Some Remarks on the Mathematical Framework}
It is common in the physics and engineering literature to assume for turbulent flows that the time averages of physically relevant quantities exist and are independent of the initial condition.   In the sequel, we adopt this view and define 
$$
\langle \Phi(\theta) \rangle = \lim_{t\to \infty} \frac{1}{t} \int_0^t \Phi (S(\tau)\theta_0) \ d\tau\;.
$$
While it is conceivable that this limit would not exist, it can be replaced with a generalized limit, denoted $\Lim$, that is guaranteed to exist by the Hahn-Banach theorem and matches the ordinary limit when {it does exist}.  Moreover, the $\Lim$ {functional} can typically be expressed as an integral with respect to an invariant measure {supported on the global attractor of the system provided that the global attractor exists}. {Thus, in contexts where a global attractor theory is available}, these technical adjustments make averaging over solutions {mathematically} rigorous. 

{In the particular case of the 2D NSE, such a framework exists and a systematic study of the Kraichnan theory from a first principles perspective can indeed be developed. We refer the reader to \cite{FJMR2002} for such a study and additional details regarding this rigorous framework. We do not address these concerns in the paper, although we remark that global attractors for certain subsets of the gSQG family of equations have been established, namely the critical and subcritical regimes of dissipative SQG, i.e., $\be=1$, $\al\in[1,2]$, in \cite{NJ2005, ConstantinTarfuleaVicol2015}. To our best knowledge, it remains an open direction to develop the global attractor theory for the gSQG family in general.}

{We conclude these remarks by emphasizing that, for our purposes, the absence of such a framework does not alter our formal calculations. {In fact, we point out that any theory of weak solutions for which global-in-time existence can be guaranteed from arbitrary initial data in $H$ and for which the energy balance \eqref{def_energy} and enstrophy balance \eqref{def_enstrophy} hold with equality is sufficient to justify the analysis performed in paper. We will not pursue the development of this solution theory here and instead focus on the consequences for turbulence that emanate from these putative solutions. Lastly, regarding our numerical results, the time-averages} that are calculated in our experiments are computed by simply taking $t$ large. The inherent error due 
to {implementing} finite time averages {can then be} estimated in terms of the Grashof number {(see \cite{FJM2005} for details in the case of 2D NSE)}.

}

\section{Energy spectrum}
\subsection{Heuristic derivation}
The energy spectrum, which describes how energy varies over different length scales,  plays a central role in turbulence.  Although it is widely observed in experiments and numerical simulations to satisfy a universal power law, a rigorous derivation, which would have to depend on the nature of the force, is not known even in the case of the NSE.   However, there are several heuristic arguments for it.   We carry out one adapted from Kraichnan's 2D interpretation of Richardson's cascade mechanism \cite{Kraichnan1967}.   To start, based on \eqref{def_enstrophy}, we define the \textit{total {enstrophy} dissipation rate per unit mass} as
\begin{equation} \label{def:eta}
    \eta = \frac{\ga}{L^2} \langle|\la^{\frac{\alpha}{2}} \te|^2\rangle \;.
\end{equation}

Let
\begin{equation*}
    e_{\kappa} = \text{2 times the average energy per unit mass of the eddies of linear size } \mathit{l} \in \left[\frac{1}{2\ka}, \frac{1}{\ka} \right) \;.
\end{equation*}
In terms of the solution of the gSQG equation, the quantity $e_{\kappa}$ can be expressed as  
\begin{equation} \label{def:e_k}
    e_{\kappa} = {\frac{1}{L^2}}\big\langle \,|\Lambda^{\tfrac{\beta - 2}{2}} \theta_{\kappa,2\kappa}|^2 \,\big\rangle .
\end{equation}
An analogue of the average velocity of eddies of size $l$ is defined by  
\begin{equation*}
    U_\kappa = {L^{\frac{\be}2}}
    e_\kappa^{1/2}
      = {L^{\frac{\be}2-1}}
        \big\langle \,|\Lambda^{\tfrac{\beta - 2}{2}} \theta_{\kappa,2\kappa}|^2 \,\big\rangle^{1/2}.
\end{equation*}
Correspondingly, the average time for these eddies to travel a distance of order $l$ is given by  
\begin{equation*}
    t_\kappa = \frac{l}{U_\kappa} \sim \frac{1}{\kappa U_\kappa}
      \sim
      \frac{\kappa_0^{\beta/2}}{\kappa e_\kappa^{1/2}}.
\end{equation*}
The average enstrophy per unit mass associated with scale $l$ is defined as
\[
E_\kappa = \kappa^{2-\beta} e_\kappa,
\]
which yields the enstrophy dissipation rate for eddies of length $l$:  
\begin{equation*}
    \eta_\kappa \sim \frac{E_\kappa}{t_\kappa}
      \sim \kappa_0^{-\beta/2}\,\kappa^{3-\beta}\, e_\kappa^{3/2}.
\end{equation*}
{In particular, this implies that}
    \begin{align}\notag
        e_\kappa \sim \eta_\kappa^{2/3}\,\kappa_0^{\beta/3}\,\kappa^{\tfrac{2\beta}{3}-2}
    \end{align}
{We recall that within the inertial range of wavenumbers, it is expected that the relation $\eta_\kappa \approx \eta$ holds. Upon assuming that $\eta_\kappa\approx\eta$ holds, one may deduce}
\begin{equation} \label{def:eta_k}
    e_\kappa \sim \eta^{2/3}\, \kappa_0^{\beta/3}\,\kappa^{\tfrac{2\beta}{3}-2}.
\end{equation}
{With this observation in hand, we will now use \eqref{def:eta_k} to \textit{define} the ``inertial range." In particular, we define {\it inertial range of wavenumbers} to be the interval of wavenumbers of amplitude $\kappa$ over which \eqref{def:eta_k} holds.  Note that this power law depends explicitly on $\beta$, but it depends on $\alpha$ only implicitly through $\eta$.}

The time average
\begin{equation*}
    \lim_{t \to \infty} \frac{1}{t} \int_0^t \frac{1}{L^2} |\la^{\sigma} (P_{\ka'}-P_{\ka}) \la^{\frac{\beta-2}{2}}S(\tau)\te_0|^2d\tau .
\end{equation*}
can be written as a Riemann sum (as $L\to \infty$, or equivalently $\ka_0\to 0$) for an integral in the wavenumbers
\begin{equation}\label{eq:spectrum:riemann}
    \int_{\ka}^{\ka'} \chi^{2\sigma} \mathcal{E} (\chi)\ d\chi
\end{equation}
of some function $\mathcal{E}$, called the {\it energy spectrum} of the turbulent flow sustained by the force $g$.  \
The spectrum is related to the average energy per unit mass $e_{\kappa}$ through
\begin{equation*}
     \int_{\ka}^{2\ka} \mathcal{E} (\chi) d\chi \sim e_{\ka}  .
\end{equation*}
{Thus, by \eqref{def:eta_k},
it is} expected to satisfy
    \begin{equation} \label{main:ES}
    \mathcal{E}(\kappa) \sim \eta^{2/3}\ka^{2\beta/{3}-3} .
    \end{equation}
in the inertial range.  
  
Let ${\kapd}$ denote {the wavenumber cut-off} where inertial effects {achieve a sustained balance with small-scale} viscous effects.  While {a} precise expression for this wavenumber is not known, {one may} again use dimensional analysis to {establish a putative relationship} to known quantities.  This is done by assuming {that $\kap_{\text{d}}$} depends on only  $ \gamma $ and $ \eta $ though some function $\varphi$:
\[
{\kapd} = \varphi(\gamma, \eta).
\]
We consider the following rescaling of $\kap_d'$:
\begin{equation*}
    {\kapd'} = \varphi(\ga',\eta'), \quad {\kapd} = \frac{{\kapd'}}{\xi},\quad  \ga' = \frac{\xi^{\alpha}}{\tau}\ga, \quad \eta' = \frac{1}{\tau^3} \eta.
\end{equation*}
It follows that
\begin{equation*}
    \frac{1}{\xi}\varphi(\ga',\eta') = \varphi(\xi^{\alpha}\tau^{-1}\ga,\frac{1}{\tau^3} \eta)\;.
\end{equation*}
Thus, if we choose $\frac{\xi^{\alpha}}{\tau} = \frac{1}{\ga}$ and $\tau^3 = \eta$, we obtain
\begin{equation} \label{def:keta}
    {\kapd} = \varphi(\ga,\eta) = \xi \varphi(1,1) \sim \left(\frac{\eta}{\ga^3} \right)^{1/(3\alpha)} \overset{\text{def}}{=} \ka_\eta \;.
\end{equation}
In contrast to the energy spectrum power law, $\kappa_\eta$ depends explicitly on  $\alpha$, {while the} dependence on $\beta$ {remains} implicit through $\eta$.

If we assume that 
$\mathcal{E}(\kappa) \sim \varphi(\eta, \ka)$
{holds} in an inertial range of { wavenumbers}, for some function $\varphi$, a similar dimensional argument results in the spectrum
    \begin{equation} \label{main:ES2}
    {\mathcal{E}_2}(\kappa) \sim \eta^{2/3}\ka^{\beta-3} \;.
    \end{equation}
Yet another dimensional analysis by Pierrehumbert, Held, and Swanson \cite{Pierrehumbert1994} based on locality in scale results in 
    \begin{equation} \label{main:ES3}
    {\Ecal_{PHS}}(\kappa) \sim \eta^{2/3}\ka^{\beta/3-3} \;.
    \end{equation}
All three {energy spectra, \eqref{main:ES}, \eqref{main:ES2}, \eqref{main:ES3}, are consistent} with Kraichnan's $\kappa^{-3}$ spectrum
in the {special case $\be=0$ corresponding to} the 2D NSE. 

\subsection{Numerical simulation setup}
All computations are done with a fully dealiased pseudospectral code with $N$ modes in each direction.  For the NSE we use $N=16384$ and for the gSQG $N=32768$.  The force is restricted to  $\kappa \in [9,12]$  as in \eqref{def:g_kbound}. The total number of forced modes is 94, which is half the number of lattice points in the annulus with inner and outer radius ${\kapb}$ and $\bkap$. Given that the force is real so that $\hat{g}_k = -\hat{g}_{-k}$, we only need to consider half of them independently.
We randomly assign Fourier coefficients to these modes, choosing values from $(-1,1)$ for both the real and imaginary parts. After selecting the coefficients, we multiply each by a factor of $10^{-5}$ to enable computations for smaller viscosity values.

In the NSE case, we calculate that the selected forcing satisfies $ |g| = 5.1354 \times 10^{-4} $ and $ |\Lambda^{-1} g| = 4.935 \times 10^{-5} $. We take the viscosity within the range $ 10^{-9} $ to $ 2 \times 10^{-7} $, which corresponds to a Grashof numbers in the range $ 1.2338 \times 10^{8} $ to $ 4.935 \times 10^{13} $. We show that this is sufficiently large to produce turbulent behavior. For the SQG, due to the influence of $\alpha$ the value $ |\Lambda^{-\frac{\alpha}{2}} g| $ lies in the range $ 4.935 \times 10^{-5} $ to $1.5942 \times 10^{-4}$.  When computing spectra for the SQG we use different viscosity values depending on the specific value of $\alpha$ to reach the dissipation range.  A more detailed explanation of how we select suitable viscosity ranges is given below. The Grashof numbers remain large (ranging from $ 6.377 \times 10^5 $ to $ 1.235 \times 10^{10} $). Similarly, in the case of the gSQG, we vary the viscosity depending on $ \alpha $ and $ \beta $. 

For all runs, we start with the same randomly chosen initial condition $\theta(-20{,}000) = \theta_0$,  and expect that by $t = 0$, when the averaging begins, the transient phase has already passed. The resolution is enhanced as time increases, as shown in the time series plots in Figure~\ref{Fig:Time_series_plot} (left).  We include a physical space plot in Figure~\ref{Fig:Time_series_plot} (right), taken at the time when the $\eta$ value reaches its maximum over the period when $N = 16384$.

\begin{figure}[H]
\centerline{
\includegraphics[width=8.5cm, height=6.5cm]{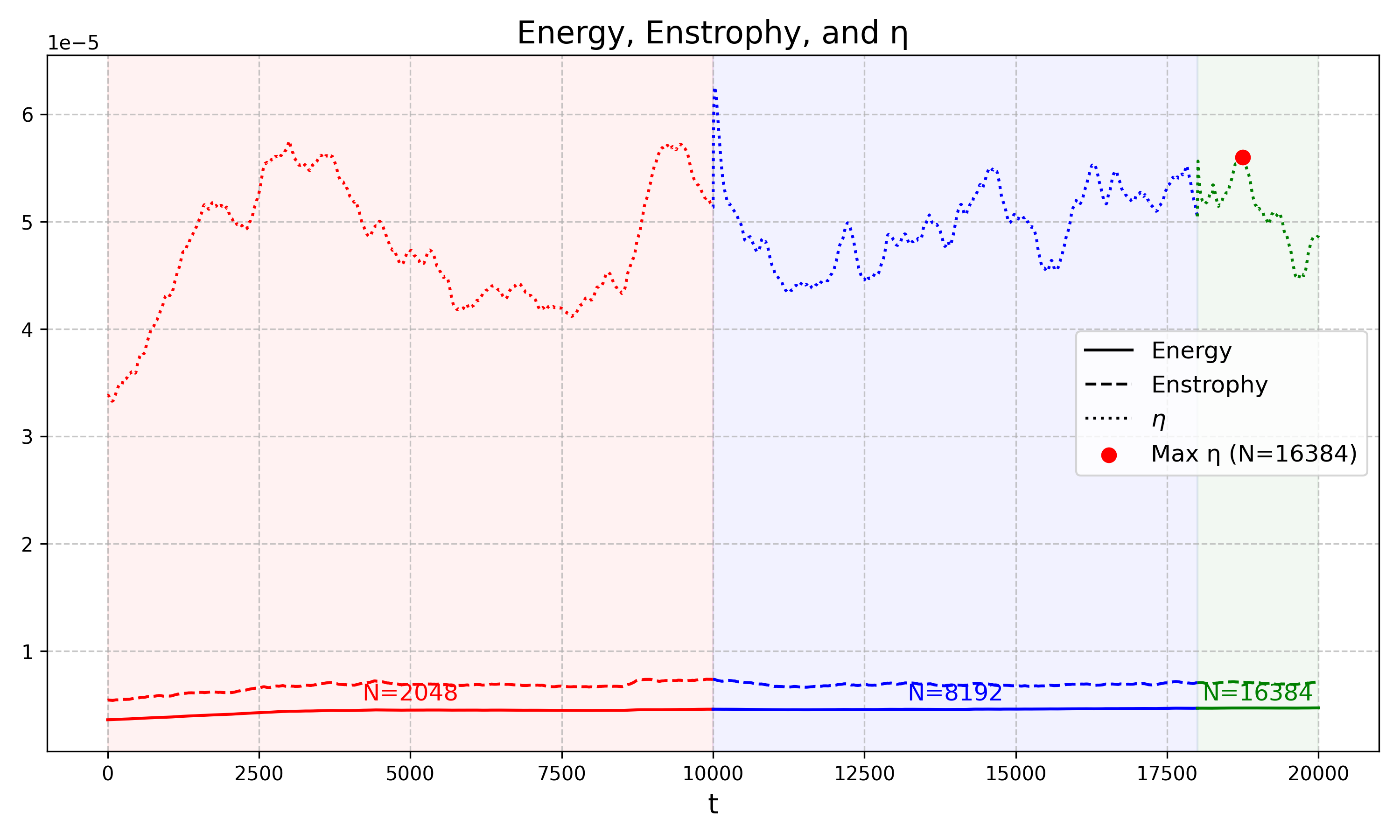}\
\includegraphics[width=8.5cm, height=6.5cm]{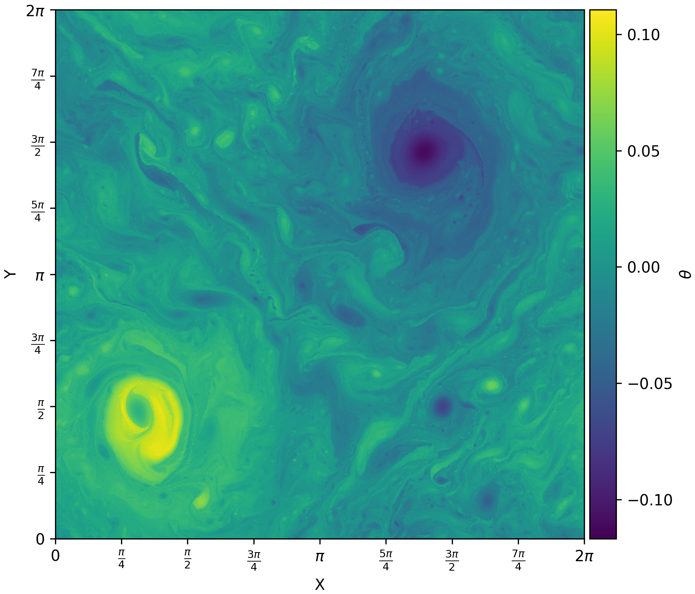}}
\caption{Left:
Time series plot for energy, enstrophy and $\eta$.  Right: Physical space plot at time 
of maximal $\eta$.}
\label{Fig:Time_series_plot}
\end{figure}

\subsection{Computed energy spectra} \label{computedspectrum}
Each computed energy spectra for the NSE in Figure~\ref{Fig:NSEspec} (left)  exhibit a clean $\kappa^{-3}$ scaling within some inertial range that extends at least through $\kappa_\eta$.   Starting at roughly $10\times \kappa_\eta$  there is a rapid fall-off characteristic of a dissipation range.  As expected, smaller viscosities yield wider inertial ranges.  The nearly horizontal plots of the compensated spectra in Figure ~\ref{Fig:NSEspec} (right) show the faithfulness with the power law more clearly.

\setlength{\intextsep}{1pt}
\begin{figure}[H]
\centerline{\includegraphics[width=8.5cm, height=6.5cm]{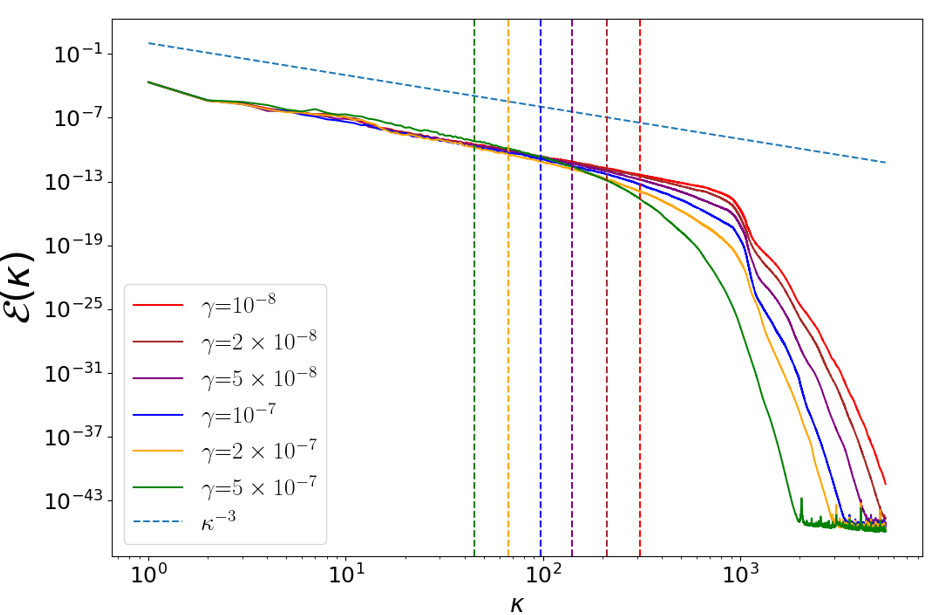}\ 
\includegraphics[width=8.5cm, height=6.5cm]{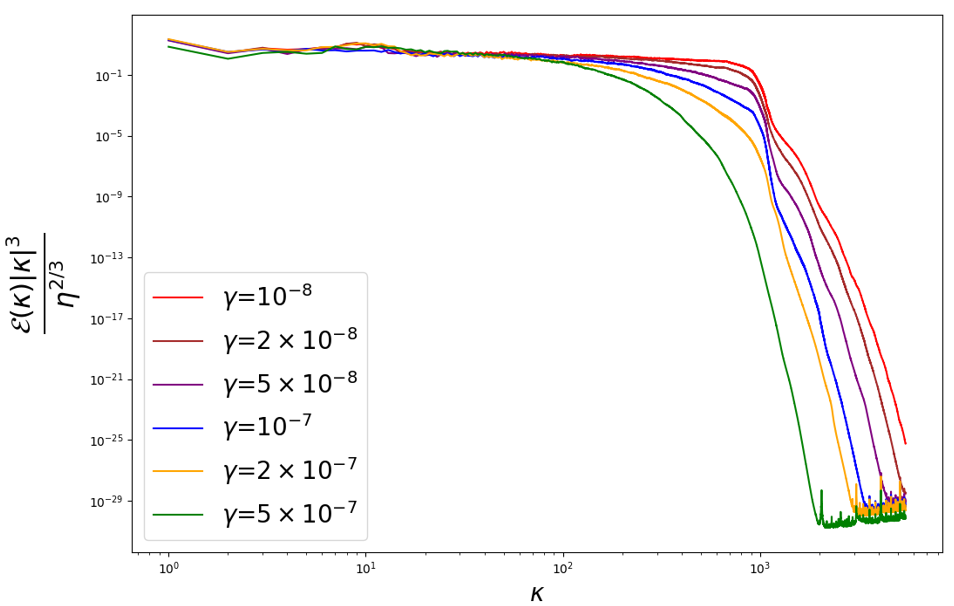}}
\caption{Left: energy spectra for the NSE.  Right: compensated spectra for NSE.  The values of $\ka_\eta$ are indicated by vertical lines, $N=16384$.}
\label{Fig:NSEspec}
\end{figure}

For the gSQG  we vary $\ga$ and $\alpha$ together to keep $\kappa_\eta$ within the interval [50,1000].  If $\kappa_\eta < 50 $, there would not be enough room for a significant inertial range. On the other hand, if $ \kappa_\eta > 1000 $, the flow would not be adequately resolved in that the dissipation range is beyond $10\times \kappa_\eta$.
In general, when the external force and all other conditions are fixed, a smaller $ \alpha $ tends to produce a larger $ \kappa_\eta $. To keep $ \kappa_\eta $ within a practical range, we use larger viscosity values for smaller $ \alpha $.  Based on this criterion, we choose the following values of $ \gamma $ for the SQG cases, as summarized in the table below.

\setlength{\intextsep}{1pt}
\begin{figure}[H]
\centerline{
\includegraphics[width=8cm, height=6.5cm]{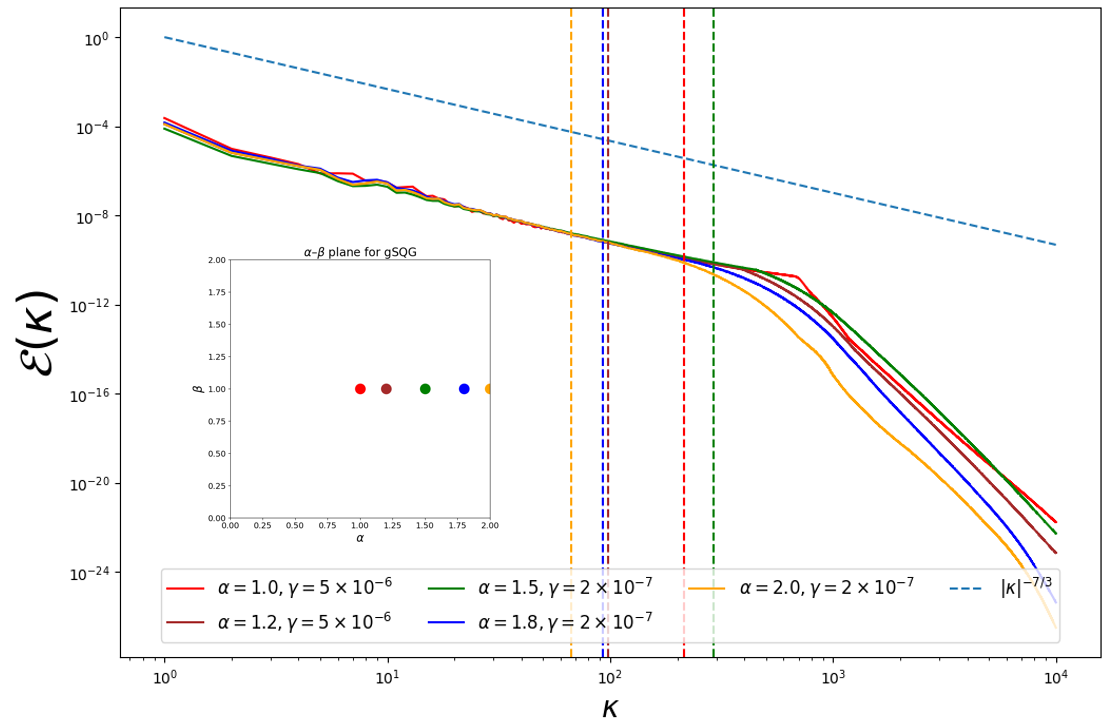} \
\includegraphics[width=8cm, height=6.5cm]{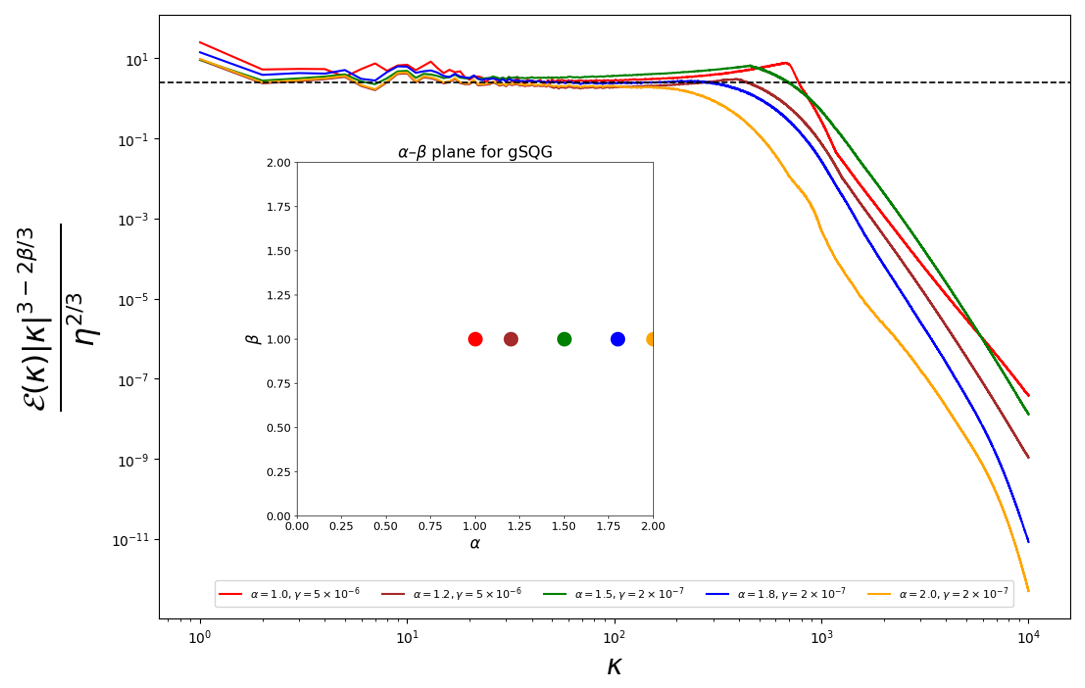}
}
\caption{Left: energy spectra for the SQG. Right: compensated spectra for SQG. The values of $\kappa_\eta$ are indicated by vertical lines, $N = 32768$.}
\label{Fig:SQGspec}
\end{figure}

\vspace{0.1in}

\setlength{\intextsep}{1pt}
\begin{figure}[H]
\centerline{
\includegraphics[width=5cm, height=4cm]{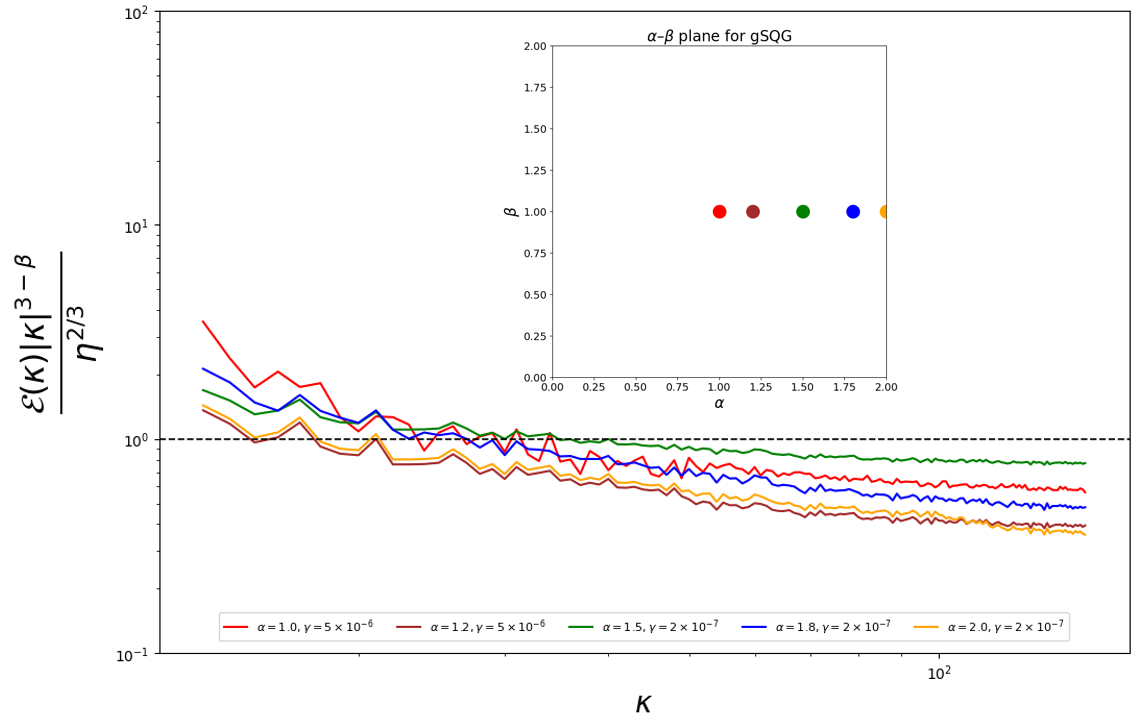} \
\includegraphics[width=5cm, height=4cm]{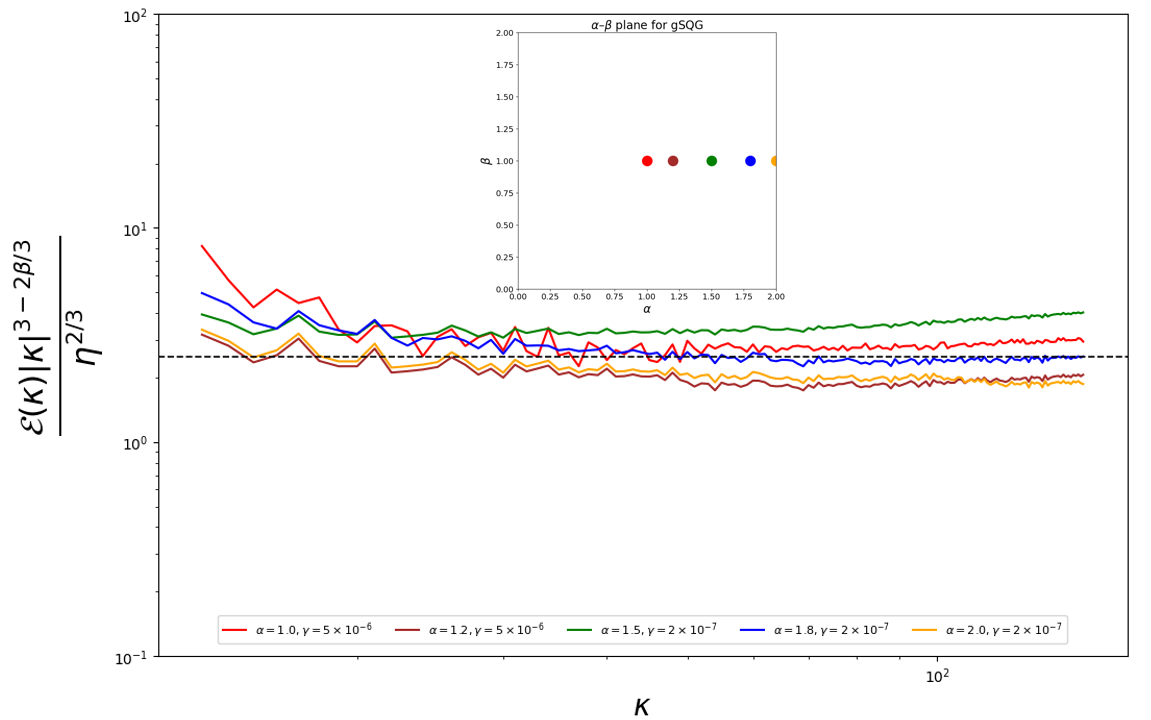} \
\includegraphics[width=5cm, height=4cm]{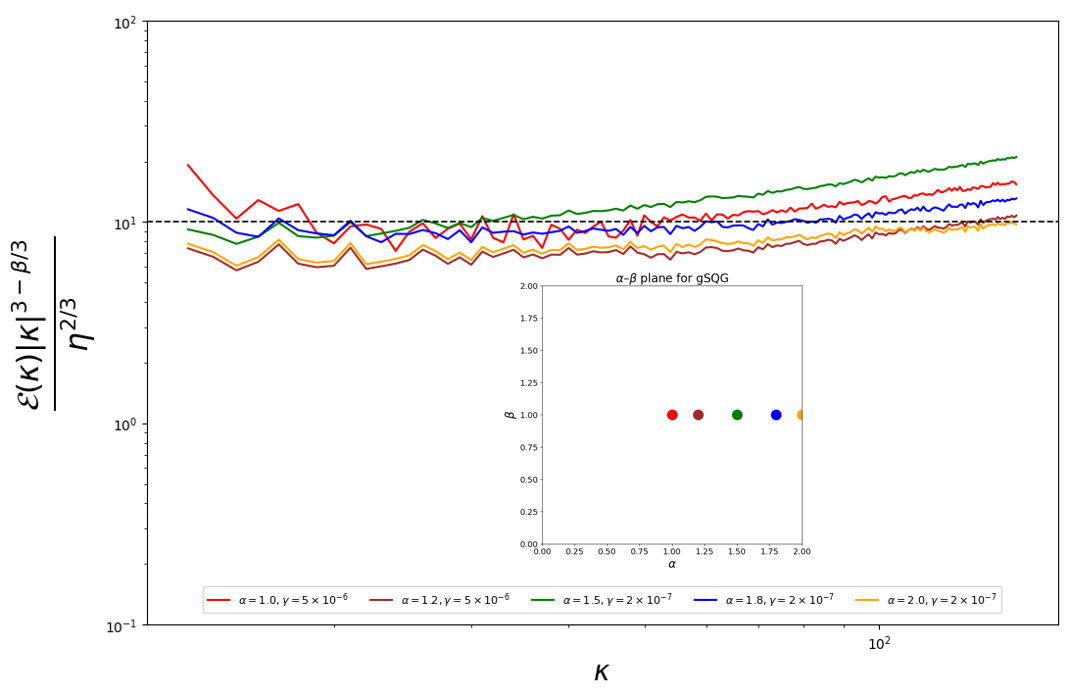}
}
\caption{Compensated spectrum plots for $\ka \in [12,150]$ with different compensated slopes Left: slope $3-\beta$ . Middle: slope $3-\frac{2\beta}{3}$ . Right:  $3-\frac{\beta}{3}$ as in \cite{Pierrehumbert1994}.}
\label{Fig:SQGspecs23}
\end{figure}

\vspace{0.1in}

\begin{table}[h!]
\centering
\begin{tabular}{|c|c|c|c|c|c|}
\hline
SQG & $\alpha = 1.0$ & $\alpha = 1.2$ & $\alpha = 1.5$ & $\alpha = 1.8$ & $\alpha = 2.0$ \\
\hline
$\gamma$ & $5 \times 10^{-6}$ & $5 \times 10^{-6}$ & $2 \times 10^{-7}$ & $2 \times 10^{-7}$ & $2 \times 10^{-7}$ \\
\hline
$\kappa_{\eta}$ & 214 & 98 & 292 & 93 & 67 \\
\hline
$\kappa_{\sigma}$ & 27 & 23 & 59 & 25 & 20 \\
\hline
\end{tabular}
\caption{Chosen viscosity values and corresponding $ \kappa_\eta $ for different $ \alpha $ values}
\end{table}

Figure~\ref{Fig:SQGspec}  shows that the spectra for the  subcritical SQG for a sampling of $\alpha$ values in $(1,2]$ nearly obey the power law in \eqref{main:ES} for $\beta=1$, at least up to $\kappa_\eta$.  The compensated spectra are not as close to horizontal as in the case of the NSE.  Note that there is a sharp corner formed at the start of the dissipation fall-off in the case of the critical SQG.  This presages a deviation from the expected power law for the gSQG when we take $\alpha<1$.  Considering the cases of $ \alpha = 1.0, 1.2 $ and $ \alpha = 1.5, 1.8, 2.0 $ separately, we note that for fixed viscosity, $ \kappa_\eta $ increases as $\alpha$ decreases.

We compare the compensated spectra in \eqref{main:ES}, \eqref{main:ES2} and \eqref{main:ES3} over a plausible inertial range in Figure \ref{Fig:SQGspecs23} and find that 
$\mathcal{E}(\kappa) \sim \eta^{2/3}\ka^{2\beta/{3}-3}$
fits the computed data best. This spectrum will be assumed as a condition in the rigrorous estimate in Proposition \ref{main:prop2}.

For the gSQG we sample along the diagonal line in the $ \alpha, \beta $ plane from $ \alpha = 1.8 $, $ \beta = 0.2 $ to $ \alpha = 0.1 $, $ \beta = 1.9$.  In order to ensure adequate resolution, $ \gamma $ is varied along with $\alpha$ and $\beta$ with the resulting values of $\kappa_\eta$ shown in Table \ref{tab:gsqg_parameters}.  Figure~\ref{Fig:gSQGspectra} (left) shows some adherence to the heuristic power law \eqref{main:ES}.   For the samples in the southeast corner ($\alpha\ge 1$) the spectra are similar to those for the SQG, though we note a difference in $\kappa_\eta$ values.  Compared to the SQG, $\kappa_\eta$ is larger on this diagonal for 
$\alpha=1.2$, and smaller for $\alpha=1.5$. Since the expected slopes vary with $\beta$,  plots of the compensated spectra are shown in Figure~\ref{Fig:gSQGspectra} (right).

\vspace{0.1in}

\begin{table}[h!]
\centering
\small
\begin{tabular}{|c|c|c|c|c|c|c|}
\hline
gSQG & \( \alpha=0.2,\beta=1.8 \) & \( \alpha=0.5,\beta=1.5 \) & \( \alpha=0.8,\beta=1.2 \) & \( \alpha=1.2,\beta=0.8 \) & \( \alpha=1.5,\beta=0.5 \) & \( \alpha=1.8,\beta=0.2 \) \\
\hline
\( \gamma \) & \( 5 \times 10^{-4} \) & \( 1 \times 10^{-4} \) & \( 5 \times 10^{-6} \) & \( 5 \times 10^{-6} \) & \( 2 \times 10^{-7} \) & \( 2 \times 10^{-7} \) \\
\hline
\( \kappa_{\eta} \) & 683 & 361 & 665 & 145 & 206 & 114 \\
\hline
\( \kappa_{\sigma} \) & 3 & 6 & 27 & 18 & 48 & 34 \\
\hline
\end{tabular}
\caption{Viscosity \( \gamma \) and dissipation scale \( \kappa_{\eta} \) for various \( (\alpha, \beta) \) in gSQG}
\label{tab:gsqg_parameters}
\end{table}

The discrepancy in the spectrum for the critical SQG is found to persist for the three samples in the northwest corner ($\alpha <1$).  For the two on and beyond the critical line $\beta=\alpha+1$,  where the gSQG changes from quasilinear to fully nonlinear,  the compensated spectra actually increase before falling off at a slow rate.  A comparison of compensated plots over the inertial range in Figure \ref{Fig:gSQGspecs23} again shows a reasonable fit for the power law $3-2\beta/3$.

\setlength{\intextsep}{1pt}
\begin{figure}[H]
\centerline{\includegraphics[width=8.5cm, height=6.5cm]{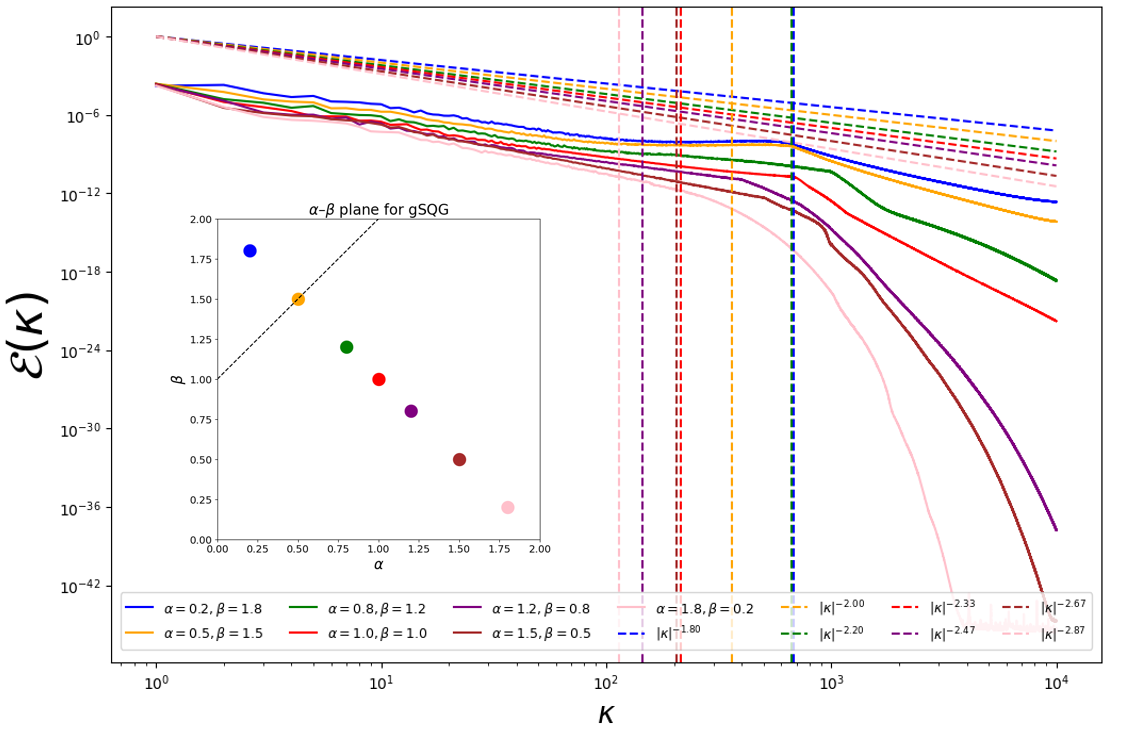}\ 
\includegraphics[width=8.5cm, height=6.5cm]{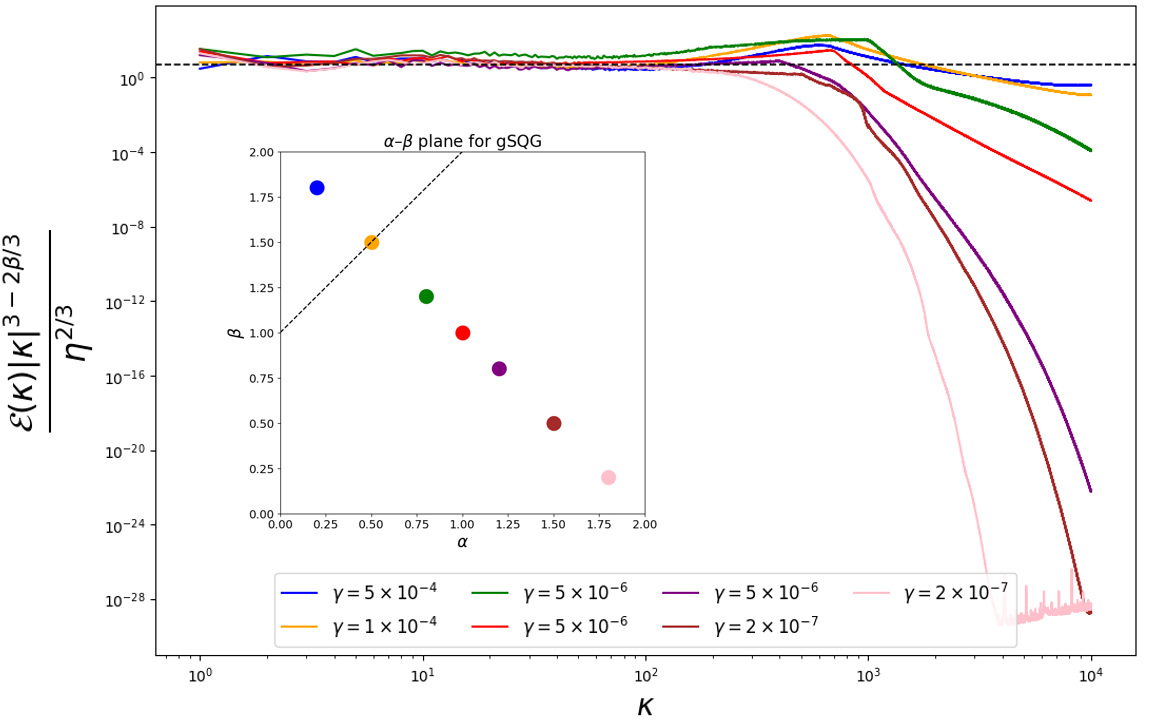}}
\caption{Left: energy spectra for the gSQG.  Right: compensated energy spectra. The values of $\ka_\eta$ are indicated by vertical lines, N=32768.}
\label{Fig:gSQGspectra}
\end{figure}

\setlength{\intextsep}{1pt}
\begin{figure}[H]
\centerline{
\includegraphics[width=5cm, height=4cm]{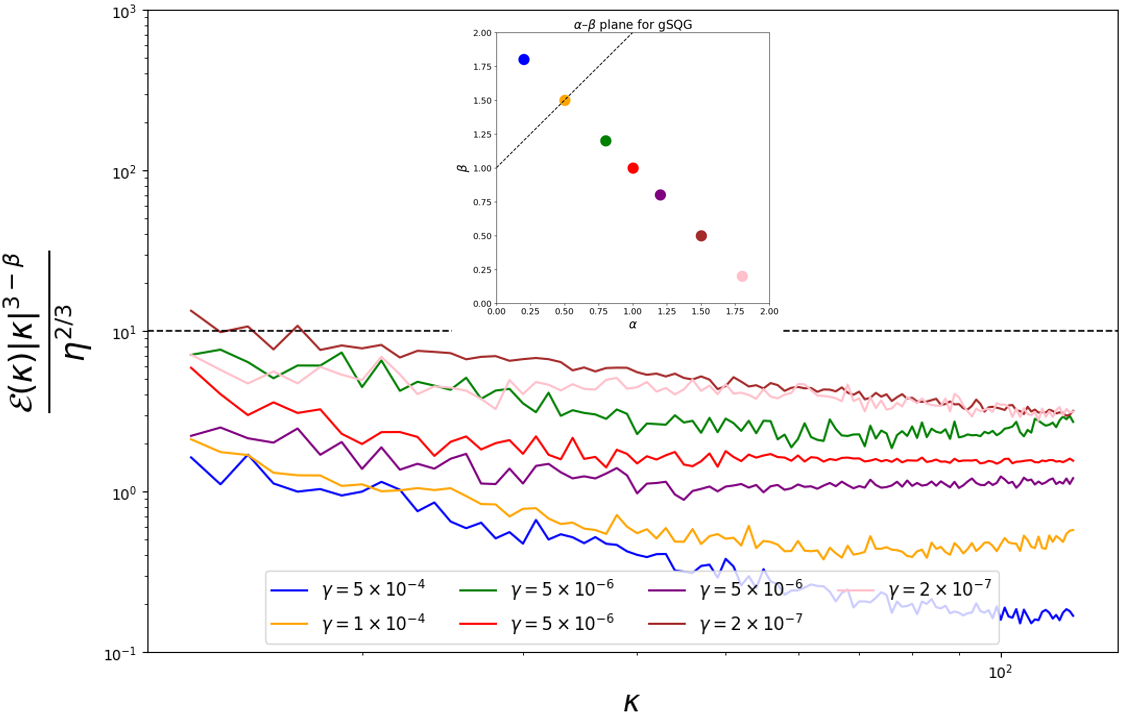} \
\includegraphics[width=5cm, height=4cm]{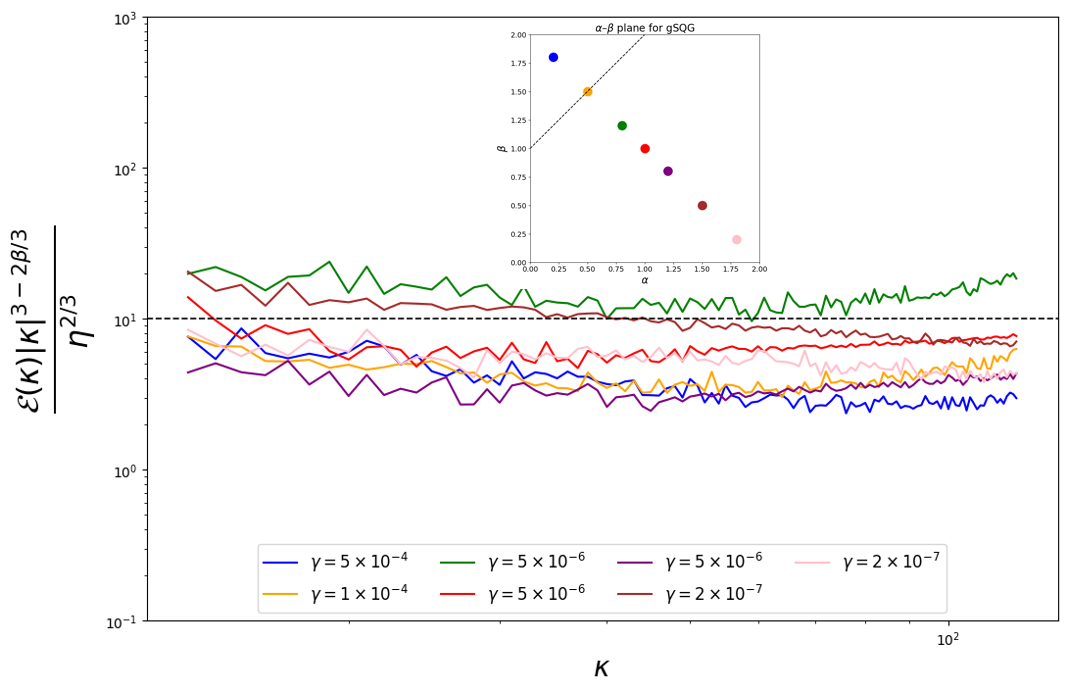} \
\includegraphics[width=5cm, height=4cm]{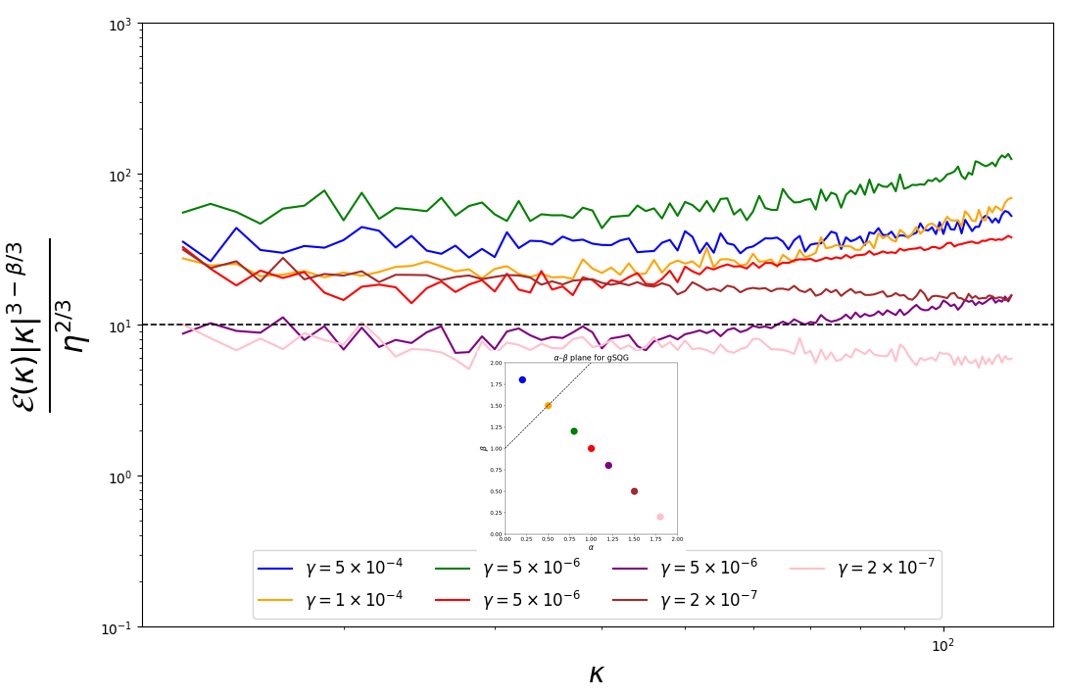}
}
\caption{Compensated spectrum plots for $\ka \in [12,150]$ with different compensated slopes Left: slope $3-\beta$ . Middle: slope $3-{2\beta}/{3}$ . Right:  $3-{\beta}/{3}$}
\label{Fig:gSQGspecs23}
\end{figure}

\section{The Enstrophy Cascade}
\subsection{Analytical Support}
Let $p_{\kappa}= P_{\kappa}\te \text{  and  } q_{\kappa}= Q_{\kappa}\te$.
{Taking the $H$-inner product of  \eqref{maineq} with} $p_{\kappa}$ 
and applying the first relation in \eqref{bterm_prod},
we obtain
\begin{equation} \label{prod_E_k}
\begin{split} 
        \frac{1}{2} \frac{d}{dt} |p_{\kappa}|^2 + \ga |\la^{\frac{\alpha}{2}} p_{\kappa}|^2 & = (P_{\kappa}u\cdotp\nabla p_{\kappa},q_{\kappa}) - (Q_{\kappa}u\cdotp\nabla q_{\kappa},p_{\kappa}) + (g,p_{\kappa}) \\
        & = - L^2\mathfrak{E}_{\kappa} + (g,p_{\kappa}) .
\end{split}    
\end{equation}
{While taking the $H$-inner product of \eqref{maineq} instead with} $q_{\kappa}$, we find
\begin{equation} \label{prod_e_k}
\begin{split}
        \frac{1}{2} \frac{d}{dt} |q_{\kappa}|^2 + \ga |\la^{\frac{\alpha}{2}} q_{\kappa}|^2 & = (Q_{\kappa}u\cdotp\nabla q_{\kappa},p_{\kappa}) - (P_{\kappa}u\cdotp\nabla p_{\kappa},q_{\kappa}) + (g,q_{\kappa}) \\
        & = L^2\mathfrak{E}_{\kappa} + (g,q_{\kappa}) ,
\end{split}    
\end{equation}
{In both of the above energy balances,  $\mathfrak{E}_{\kappa}$ is the quantity defined by
    \begin{align}\label{def:enstrophy:flux}
        \mathfrak{E}_{\kappa}\overset{\text{def}}{=}(Q_{\kappa}u \cdotp\nabla q_{\kappa},p_{\kappa}) - (P_{\kappa}u \cdotp\nabla p_{\kappa},q_{\kappa}),
    \end{align}
and it is interpreted to be the} \textit{net rate of enstrophy transfer} (or \textit{net enstrophy flux}) at the wavenumber $\kappa$, from the low modes to the high modes.

Similarly, {upon taking $H$-inner product of \eqref{maineq} with $-P_{\kappa}\psi$ (respectively $-Q_{\kappa}\psi$)}, we obtain
\begin{equation*}
        \frac{1}{2} \frac{d}{dt} |\la^{\frac{\beta-2}{2}} P_{\kappa}\te|^2 + \ga|\la^{\frac{\alpha+\beta-2}{2}} P_{\kappa}\te|^2 =  -L^2\mathfrak{e}_{\kappa} + (g,P_{\kappa}\la^{\beta-2}\te)
\end{equation*}
\begin{equation*}
        \frac{1}{2} \frac{d}{dt} |\la^{\frac{\beta-2}{2}} Q_{\kappa}\te|^2 + \ga|\la^{\frac{\alpha+\beta-2}{2}} Q_{\kappa}\te|^2 = L^2\mathfrak{e}_{\kappa} + (g,Q_{\kappa}\la^{\beta-2}\te)\;,
\end{equation*}
{where $\mathfrak{e}_{\kappa}$ is defined by
    \begin{align}\label{def:energy:flux}
        \mathfrak{e}_{\kappa}\overset{\text{def}}{=}(Q_\kap u\cdotp\nabla q_\kap,\Lam^{\be-2}p_\kap)-(P_\kap u\cdotp\nabla p_\kap,\Lam^{\be-2}q_\kap).
    \end{align}
Then $\mathfrak{e}_{\kappa}$ is interpreted as} the {\it net rate of energy transfer} (or {\it net energy flux}) at the wavenumber $\kappa$ {from low modes to high modes}. For additional details in deriving \eqref{def:energy:flux}, the reader is referred to Appendix \ref{app:energy}.

The following {result} states that for wavenumbers smaller than the injection range of the force, both enstrophy and energy are transferred to smaller wavenumbers. {The proof is a straightforward adaptation of the analogous result for the NSE which can be found in \cite{FJMR2002}.}

\begin{proposition} \label{def:mathfraks}
    Suppose that $\kappa \leq {\kapb}$, then the time average of net fluxes \eqref{prod_E_k} and \eqref{prod_e_k} satisfy
    \begin{equation} \label{mathfrakE_k}
        \langle\E_{\kappa}(\te)\rangle = -\frac{\ga}{L^2} \langle|\la^{\frac{\alpha}{2}} P_{\kappa}\te|^2\rangle
    \end{equation}
    and
    \begin{equation} \label{mathfrake_k}
        \langle\mathfrak{e}_{\kappa}(\te)\rangle = -\frac{\ga}{L^2} \langle|\la^{\frac{\alpha+\beta-2}{2}} P_{\kappa}\te|^2\rangle .
    \end{equation}
\end{proposition}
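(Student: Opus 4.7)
The plan is to obtain both identities by time-averaging the localized balance equations \eqref{prod_E_k} and its energy analogue, and then exploiting the spectral support hypothesis \eqref{def:g_kbound} on the force $g$.

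First, I would start from \eqref{prod_E_k}, divide through by the total time $t$, and integrate from $0$ to $t$. The telescoping of the time-derivative term contributes $\frac{1}{2t}(|p_\kappa(t)|^2-|p_\kappa(0)|^2)$, which tends to $0$ as $t\to\infty$ because $|p_\kappa|\leq|\te|$ and $|\te|$ is uniformly bounded in $t$ by the a~priori estimate \eqref{prod_G}. (If one insists on rigor when the ordinary limit is not guaranteed, invoke the generalized Banach limit $\Lim$ from the framework recalled in the remarks; the argument is identical because the difference quotient is bounded uniformly and its running average vanishes on the attractor.) This yields
\begin{equation*}
\gam\langle|\Lam^{\frac{\al}{2}}p_\kap|^2\rangle = -L^2\langle\E_\kap\rangle + \langle(g,p_\kap)\rangle.
\end{equation*}

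Next I would show $(g,p_\kap)=0$ whenever $\kap\le \kapb$. By hypothesis \eqref{def:g_kbound}, $\hat g_k=0$ for $|k|\le\kapb$, whereas $p_\kap=P_\kap\tht$ has Fourier support in $\{|k|\le\kap\}\subseteq\{|k|\le\kapb\}$. Parseval's identity then gives $(g,p_\kap)=4\pi^2\sum_{|k|\le\kap}\hat g_k\cdot\overline{\widehat{(p_\kap)}_k}=0$ since each term has at least one vanishing factor. Rearranging the averaged balance yields exactly \eqref{mathfrakE_k}.

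For the energy identity \eqref{mathfrake_k}, I would repeat the same argument starting from the $P_\kap\psi$ balance (the first displayed equation after \eqref{prod_e_k}). The time-derivative term again averages to zero because $|\Lam^{\frac{\be-2}{2}}P_\kap\tht|\le|\Lam^{\frac{\be-2}{2}}\tht|$ is bounded uniformly in $t$ by \eqref{prod_Gstar}. The forcing term is $(g,P_\kap\Lam^{\be-2}\tht)=(\Lam^{\be-2}P_\kap g,\tht)$; since $P_\kap g=0$ for $\kap\le\kapb$ by the same spectral-disjointness argument, this term vanishes and rearranging yields \eqref{mathfrake_k}.

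The only conceptual subtlety is the justification that $t^{-1}|p_\kap(t)|^2\to0$ (and analogously for the energy norm)---and this is handled immediately by the uniform bounds \eqref{prod_G} and \eqref{prod_Gstar} from the a~priori estimates. Everything else is bookkeeping: spectral support of $g$ kills the projection of the forcing, and what remains on the right-hand side is precisely the dissipation term that appears in \eqref{mathfrakE_k} and \eqref{mathfrake_k}. No delicate commutator or cascade estimate is needed for this proposition; the nontrivial estimates are deferred to the later rigorous results where the hypothesis $\kap\le\kapb$ is relaxed.
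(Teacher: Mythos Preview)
Your proof is correct and follows the same approach as the paper's: time-average the localized balance \eqref{prod_E_k}, use the uniform bound on $|p_\kap|$ to kill the time-derivative term, and observe that $(g,p_\kap)=0$ for $\kap\le\kapb$ by the spectral support assumption \eqref{def:g_kbound}. In fact you are more explicit than the paper, which silently drops the forcing term when writing down the time-averaged identity.
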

\begin{proof}
We take the time average of \eqref{prod_E_k} to obtain
    \begin{equation*}
        \frac{1}{2t} (|p_{\kappa}(t)|^2 - |p_{\kappa}(0)|^2) + \frac{\ga}{t} \int_0^t |\la^{\frac{\alpha}{2}} p_{\kappa}(\tau)|^2 d\tau = -\frac{L^2}{t} \int_0^t \mathfrak{E}_{\ka}(\te(\tau))d\tau .
    \end{equation*}
    We take $t \to \infty$ and by the boundedness of ${|p_\ka|}$, find
    \begin{equation*}
        \lim_{t\to \infty} \frac{\ga}{t} \int_0^t |\la^{\frac{\alpha}{2}} p_{\kappa}(\tau)|^2 d\tau = \lim_{t\to \infty} -\frac{L^2}{t} \int_0^t \mathfrak{E}_{\ka}(\te(\tau))d\tau
    \end{equation*}
    from which \eqref{mathfrakE_k} follows.
    A similar procedure applied to  \eqref{prod_e_k} yields \eqref{mathfrake_k}.
\end{proof}
At wavenumbers larger than the injection range, the {direction of transfer switches} to larger wavenumbers.
\begin{proposition}\label{directflux}
    If $\kappa \geq \bkap$, then the time average of net fluxes satisfy
    \begin{equation*}
        \langle\E_{\kappa}(\te)\rangle = \frac{\ga}{L^2} \langle|\la^{\frac{\alpha}{2}} Q_{\kappa}\te|^2\rangle
    \end{equation*}
    and
    \begin{equation*}
        \langle\mathfrak{e}_{\kappa}(\te)\rangle = \frac{\ga}{L^2} \langle|\la^{\frac{\alpha+\beta-2}{2}} Q_{\kappa}\te|^2\rangle \;.
    \end{equation*}
\end{proposition}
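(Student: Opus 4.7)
The plan is to mirror the argument used for Proposition \ref{def:mathfraks}, but starting from the $q_\kappa$-balance \eqref{prod_e_k} (and its analogue for the energy) rather than from the $p_\kappa$-balance. The key observation is that the forcing term is spectrally supported in the annulus $\kapb < |k| \leq \bkap$ (see \eqref{def:g_kbound}), so for $\kappa \geq \bkap$ one has $Q_\kappa g = 0$, which means that $(g,q_\kappa) = 0$ and also $(g, Q_\kappa \la^{\beta-2}\te) = 0$. This is the structural reason why the forcing drops out of the high-mode balance in exactly the range in question, just as it dropped out of the low-mode balance for $\kappa \leq \kapb$ in the preceding proposition.

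Concretely, for the enstrophy flux I would divide \eqref{prod_e_k} by $t$, integrate in time from $0$ to $t$, and discard $(g,q_\kappa)$ on the right using the spectral support of $g$. This yields
\begin{equation*}
\frac{1}{2t}\bigl(|q_\kappa(t)|^2 - |q_\kappa(0)|^2\bigr) + \frac{\ga}{t}\int_0^t |\la^{\frac{\alpha}{2}} q_\kappa(\tau)|^2\, d\tau = \frac{L^2}{t}\int_0^t \E_\kappa(\te(\tau))\, d\tau\;.
\end{equation*}
Sending $t \to \infty$, the boundary term vanishes because $|q_\kappa(t)| \leq |\te(t)|$ is uniformly bounded by the a priori estimate \eqref{prod_G}. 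The remaining two limits coincide and give the stated identity
\begin{equation*}
\langle \E_\kappa(\te)\rangle = \frac{\ga}{L^2}\langle |\la^{\frac{\alpha}{2}} Q_\kappa \te|^2\rangle\;.
\end{equation*}

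For the energy flux, I would apply the same procedure to the $Q_\kappa\psi$-balance
\begin{equation*}
\frac{1}{2}\frac{d}{dt}|\la^{\frac{\beta-2}{2}}Q_\kappa \te|^2 + \ga|\la^{\frac{\alpha+\beta-2}{2}} Q_\kappa \te|^2 = L^2 \mathfrak{e}_\kappa + (g, Q_\kappa \la^{\beta-2}\te)\;,
\end{equation*}
where the last term again vanishes because $Q_\kappa g = 0$ for $\kappa \geq \bkap$. Dividing by $t$, integrating, and invoking the analogous uniform bound \eqref{prod_Gstar} to kill the boundary contribution gives the second claim.

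I do not anticipate a genuine obstacle here: the argument is a symmetric counterpart of Proposition \ref{def:mathfraks}, and the only substantive input is the spectral cutoff of the force, together with the uniform-in-time bounds on $|\te|$ and $|\la^{\frac{\beta-2}{2}}\te|$ already established in the a priori estimates. The mildest care required is in verifying that the averaged forcing contributions vanish exactly (not merely in the limit); this is immediate since $Q_\kappa g = 0$ pointwise in time once $\kappa \geq \bkap$.
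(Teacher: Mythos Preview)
Your proposal is correct and follows exactly the approach the paper indicates: apply the argument of Proposition~\ref{def:mathfraks} to the high-mode balances \eqref{prod_e_k} and its energy analogue, using $Q_\kappa g=0$ for $\kappa\geq\bkap$ together with the a~priori bounds \eqref{prod_G}, \eqref{prod_Gstar} to eliminate the boundary terms. The paper's own proof merely sketches this and omits the details you have supplied.
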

\begin{proof}
    {One may argue in a similar fashion to the proof of Proposition~\ref{def:mathfraks}, except that one instead considers the high-mode energy (respectively enstrophy) balance; we omit the details.}
\end{proof}


By a \textit{direct cascade} of enstrophy, we mean that {the net enstrophy flux}, $\langle \E_\kappa \rangle$, is positive {(``direct")} and roughly constant {(``cascade")}  over some range in $\kappa$.  That it is positive {above the injection scale determined by $\bkap$} is guaranteed  by Proposition \ref{directflux}.  The expectation is that the enstrophy injected at $\bar\kappa$ should match the total that is dissipated, so that  $\langle \E_\kappa \rangle\approx \eta$ over this range.  The extent to which this approximation holds then quantifies how pronounced the direct enstrophy cascade is.  In contrast, an {\it inverse cascade} refers to a {roughly constant} transfer {towards} smaller wavenumbers {from higher wavenumbers. Proposition \ref{def:mathfraks} guarantees the correct direction of \textit{energy} transfer below the injection scale determined by $\kapb$, namely, that the net energy flux, $\lb \mathfrak{e}_\kap\rb$, is negative for $\kap\leq\kapb$.}
{Although Proposition \ref{def:mathfraks} and Proposition \ref{directflux} indicate that both} energy and enstrophy {are directed in the same way through scales}, a more detailed analysis in the case of the NSE in \cite{FJMR2002} shows that the direct cascade of enstrophy is {actually} more pronounced than that of energy, {while} the inverse cascade {of energy is more pronounced than that of enstrophy. In what follows, we will focus} on the \textit{direct cascade of enstrophy}.   

Let
\begin{equation} \label{def:ksigma}
    \kappa_{\sigma} = \left(\frac{\langle|\la^{\frac{\alpha}{2}}\te|^2\rangle}{\langle|\te|^2\rangle}\right)^ {1/\alpha} .
\end{equation}
{We claim that this} wavenumber can serve as an indicator for a strongly pronounced direct enstrophy cascade.  

\begin{proposition}
    \label{main_prop1}
For ${\kap>\bkap}$ we have
\begin{equation} \label{main:ROETsidethem}
    1-\left(\frac{\kappa}{\kappa_{\sigma}}\right)^{\alpha}
    \leq \frac{\langle \E_{\kappa}\rangle}{\eta} \leq 1 .
\end{equation}
Hence, given $\delta \in (0,1]$, and $\kappa \in (\bkap, \delta^{1/\alpha}\kappa_{\sigma}]$, 
\begin{equation*}
   0 \le 1 - \frac{\langle \E_{\kappa}\rangle}{\eta} \leq \delta .
\end{equation*}
\end{proposition}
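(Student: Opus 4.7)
The plan is to reduce both inequalities to routine spectral estimates using Proposition~\ref{directflux} and a Bernstein-type bound for the low-mode projector $P_\kap$. Since $\kap>\bkap$, Proposition~\ref{directflux} gives the identity
\begin{equation*}
\langle\E_\kap\rangle=\frac{\ga}{L^2}\langle|\la^{\alpha/2}Q_\kap\tht|^2\rangle,
\end{equation*}
while by definition $\eta=\frac{\ga}{L^2}\langle|\la^{\al/2}\tht|^2\rangle$. The orthogonality of $P_\kap\tht$ and $Q_\kap\tht$ in every $\dot H^s$ norm (immediate from the Fourier representation of $P_\kap$ in \eqref{proj:p}) yields, after taking time averages,
\begin{equation*}
\frac{\langle\E_\kap\rangle}{\eta}=1-\frac{\langle|\la^{\al/2}P_\kap\tht|^2\rangle}{\langle|\la^{\al/2}\tht|^2\rangle}.
\end{equation*}
The upper bound $\langle\E_\kap\rangle/\eta\le1$ follows at once, since the subtracted ratio is nonnegative.

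For the lower bound, the idea is to control the numerator on the right by the obvious Bernstein-type inequality for spectrally localized functions: since $P_\kap\tht$ is supported on wavenumbers $|k|\le\kap$, the Fourier representation gives
\begin{equation*}
|\la^{\al/2}P_\kap\tht|^2\le \kap^\al\,|P_\kap\tht|^2\le\kap^\al\,|\tht|^2.
\end{equation*}
Taking the time average of this pointwise-in-$t$ inequality and invoking the definition \eqref{def:ksigma} of $\kap_\sigma$, which is exactly $\kap_\sigma^\al=\langle|\la^{\al/2}\tht|^2\rangle/\langle|\tht|^2\rangle$, one obtains
\begin{equation*}
\frac{\langle|\la^{\al/2}P_\kap\tht|^2\rangle}{\langle|\la^{\al/2}\tht|^2\rangle}\le\frac{\kap^\al\langle|\tht|^2\rangle}{\langle|\la^{\al/2}\tht|^2\rangle}=\left(\frac{\kap}{\kap_\sigma}\right)^{\!\al}.
\end{equation*}
Substituting back gives $\langle\E_\kap\rangle/\eta\ge 1-(\kap/\kap_\sigma)^\al$, completing \eqref{main:ROETsidethem}.

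The corollary is a direct consequence: if $\kap\le \de^{1/\al}\kap_\sigma$ then $(\kap/\kap_\sigma)^\al\le\de$, so the two-sided estimate forces $0\le 1-\langle\E_\kap\rangle/\eta\le\de$. No genuine obstacle is expected here; the only point requiring a little care is book-keeping with the dimensional factor $\kap_0$ in the Bernstein step, which is absorbed consistently into the definitions of $\kap_\sigma$ and $P_\kap$ as set up in Section~2. The heart of the argument is simply the orthogonal decomposition together with the fact that $\kap_\sigma$ is, by design, the Dirichlet quotient that makes the Bernstein bound sharp at the scale where the low modes would account for all of the $\Hdot^{\al/2}$-mass of $\tht$.
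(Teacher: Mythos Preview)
Your proof is correct and follows essentially the same approach as the paper: invoke Proposition~\ref{directflux} to write $\langle\E_\kap\rangle=\frac{\ga}{L^2}\langle|\la^{\al/2}Q_\kap\tht|^2\rangle$, use the orthogonal splitting $|\la^{\al/2}\tht|^2=|\la^{\al/2}P_\kap\tht|^2+|\la^{\al/2}Q_\kap\tht|^2$, and then apply the Bernstein bound $|\la^{\al/2}P_\kap\tht|^2\le\kap^\al|\tht|^2$ together with the definition of $\kap_\sigma$. The paper's proof is slightly terser but identical in substance.
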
 
{In particular, this implies that
$ \langle \E_{\kappa}\rangle \approx \eta $ over the range ${\bkap < \kappa \ll \kappa_{\sigma}}$, whenever ${\kappa_{\sigma} \gg \bkap}$}.  The proof of Proposition 
\ref{main_prop1} is another straightforward adaptation of one for the NSE case in \cite{FJMR2002}.  Since it is brief, we include it for completeness.

\begin{proof}
Note that by the {Bernstein inequalities (Lemma \ref{T:Bernstein})}
\begin{equation*}
    \ga\langle|\la^{\frac{\alpha}{2}} P_{\kappa}\te|^2\rangle \leq \ga \kappa^{\alpha} \langle|P_{\kappa}\te|^2\rangle \leq  \ga \kappa^{\alpha}\langle|\te|^2\rangle \leq \ga \left(\frac{\kappa}{\kappa_{\sigma}}\right)^{\alpha}\langle|\la^{\frac{\alpha}{2}}\te|^2\rangle
\end{equation*}
and consequently, for $\kappa > \bkap$ we obtain that
\begin{equation*}
    L^2\langle \E_{\kappa}\rangle = \ga \langle|\la^{\frac{\alpha}{2}} Q_{\kappa}\te|^2\rangle = \ga \langle|\la^{\frac{\alpha}{2}}\te|^2\rangle - \ga \langle|\la^{\frac{\alpha}{2}} P_{\kappa}\te|^2\rangle \geq \ga \langle|\la^{\frac{\alpha}{2}}\te|^2\rangle \left(1-\left(\frac{\kappa}{\kappa_{\sigma}}\right)^{\alpha}\right)\;.
\end{equation*}
\end{proof}

\subsection{Computed enstrophy flux}

To gauge the strength of the enstrophy cascade we plot the compensated flux $\frac{\langle \E_{\kappa} \rangle}{\eta}$.
The result in the NSE case for varying values of $\gamma$ is shown in Figure~\ref{Fig:combined_ROET} (left).  As expected, the enstrophy cascade is more pronounced as the viscosity decreases, which is consistent with the increase in $\kappa_\sigma$.  Considering the $\kappa_\eta$ values in Figure \ref{Fig:NSEspec} (left), we see that ${\langle \E_{\kappa} \rangle}\approx {\eta}$ holds
for $\bar\kappa \le \kappa \le \kappa_\eta$.

For the SQG case, we fix \( \gamma = 10^{-7} \) and vary \( \alpha \) in Figure~\ref{Fig:combined_ROET} (right). The reason we choose a relatively small \( \gamma \) is that, if we use the viscosities as in the spectrum cases, the flow is fully resolved, but with small \( \kappa_\sigma \) values.   At smaller  \( \gamma \) values we find \( \kappa_\sigma \) to be large.  Since the spectra have not reached the dissipation fall-off at smaller  \( \gamma \) values, we expect that the values of  \( \kappa_\sigma \) would only be larger if the equation were fully resolved.  

Here we see the effect of the more singular constitutive law of the SQG.  The values of $\kappa_\sigma$ increase from nearly 40 to over 100 as $\alpha$ decreases, compared to a little over 30 for the NSE at the same viscosity.  This is consistent with the plots of the compensated flux which show a widening of the cascade range as the critical SQG is approached.  

\setlength{\intextsep}{1pt}
\begin{figure}[H]
\centerline{
\includegraphics[width=8.5cm, height=6.5cm]{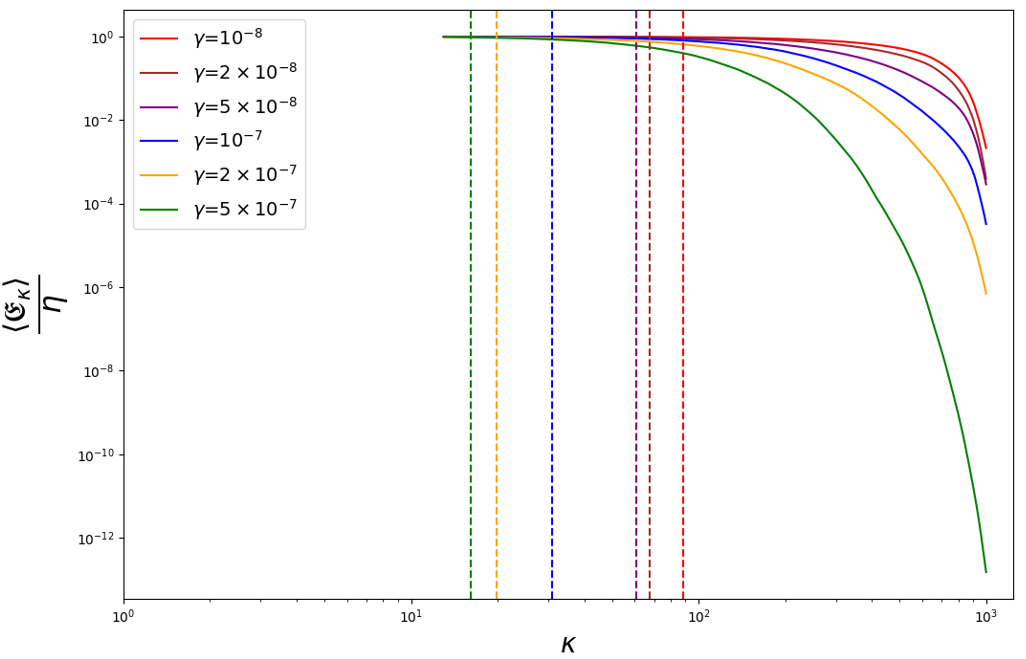} \
\includegraphics[width=8.5cm, height=6.5cm]{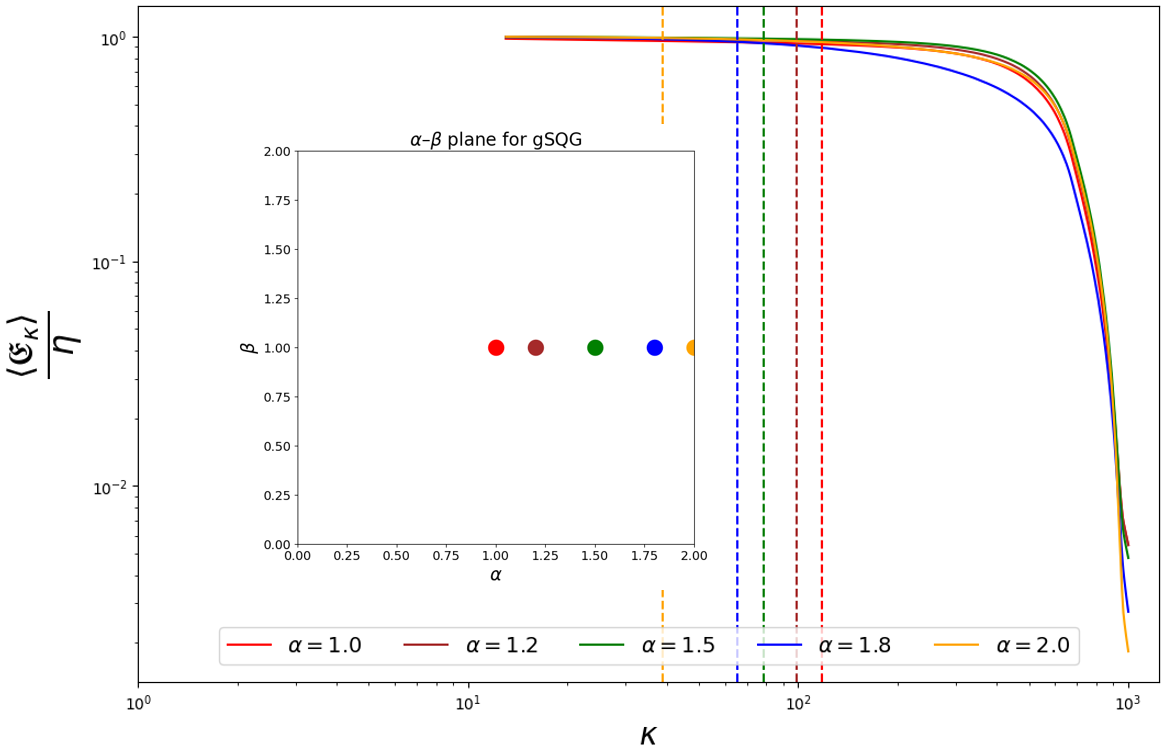}
}
\caption{
Compensated enstrophy flux.  Left: NSE, $N=16384$.  Right: SQG, $\gamma=10^{-7}$, $N=32768$.
 Dashed vertical lines indicate $ \kappa_{\sigma} $.}
\label{Fig:combined_ROET}
\end{figure}

We keep $\gamma = 10^{-7}$ and plot in Figure \ref{Fig:gSQG_DIA_ROET} the compensated flux for the same samples on the diagonal in the $\alpha,\beta$-plane as for the spectrum of the gSQG.
In the southeast quadrant as $ \alpha $ decreases and $ \beta $ increases, the value of $ \kappa_{\sigma} $ increases, reaching a peak in the case of the critical SQG $ \alpha = 1 $, $ \beta = 1 $.  Correspondingly, the enstrophy cascade strengthens as $\alpha$ decreases.  However, in the northwest quadrant (where $ \alpha \in (0, 1) $ and $ \beta \in (1, 2) $ ), this trend reverses: $ \kappa_{\sigma} $ decreases as $ \alpha $ decreases and the cascade weakens.

\setlength{\intextsep}{1pt}
\begin{figure}[H]
\centerline{
\includegraphics[width=8.5cm, height=6.5cm]{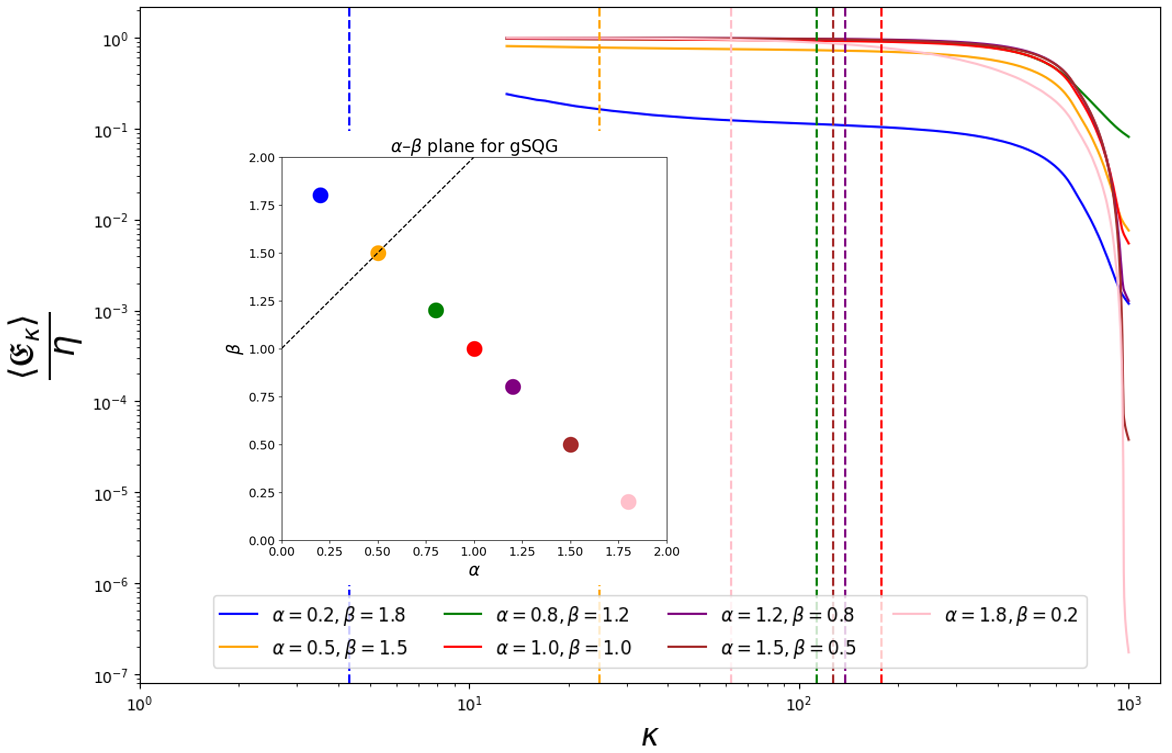} \ 
\includegraphics[width=8.5cm, height=6.5cm]{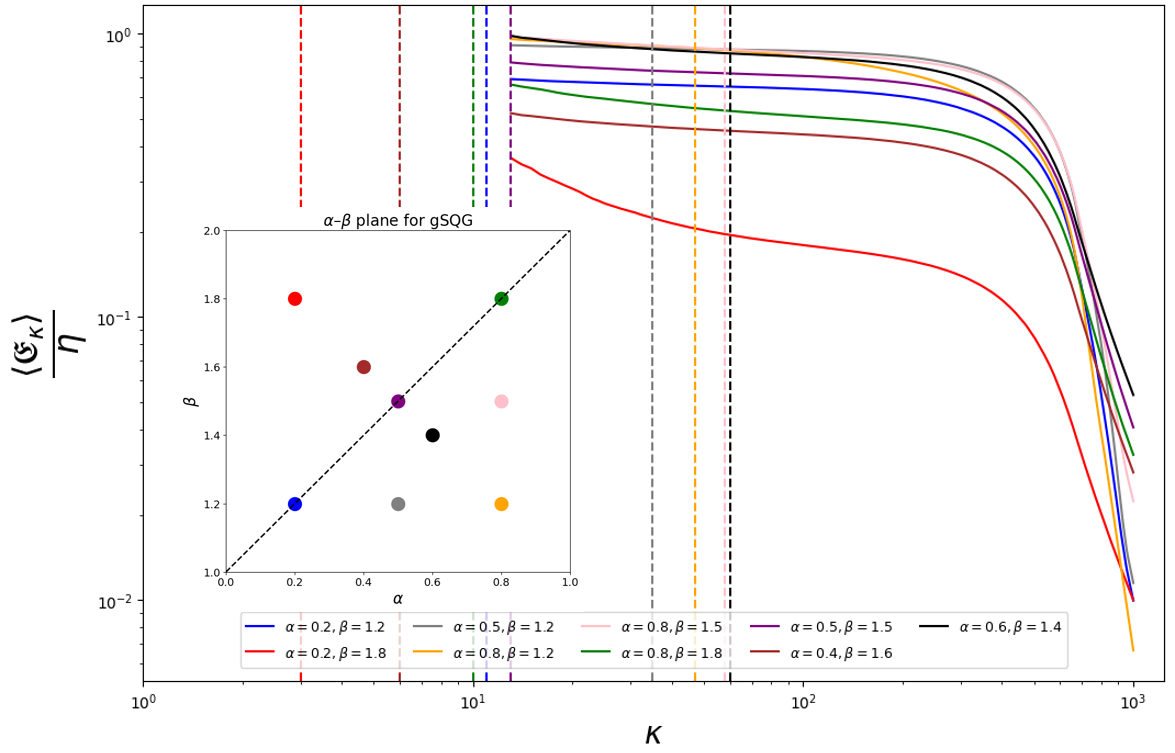}}
\caption{Compensated enstrophy flux for the gSQG. Dashed vertical lines indicate $ \kappa_{\sigma} $, $N=32768$.   $\gamma=10^{-7}$. }
\label{Fig:gSQG_DIA_ROET}
\end{figure}

In Figure \ref{fig:ROET_comparison} (left) we zoom in to observe the effect of $\alpha$ on the lower bound estimate in \eqref{main:ROETsidethem}.  This bounding curve becomes less concave as $\alpha$ decreases, until it is linear at $\alpha = 1$, and convex in the NW quadrant.

\setlength{\intextsep}{1pt}
\begin{figure}[H]
\centerline{
\includegraphics[width=8.5cm, height=6.5cm]{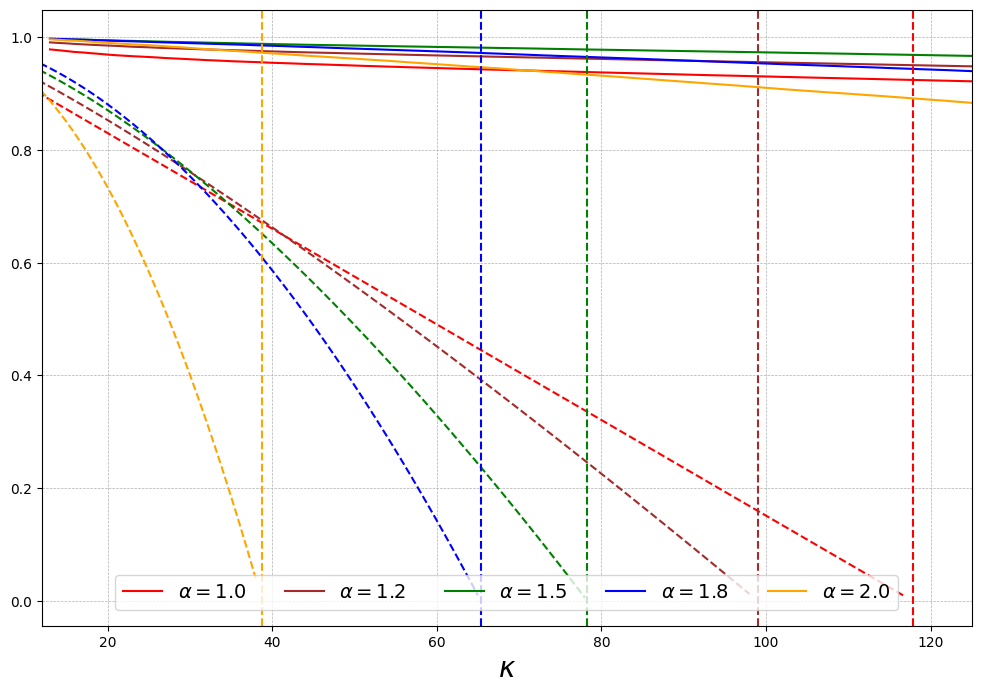} \
\includegraphics[width=8.5cm, height=6.5cm]{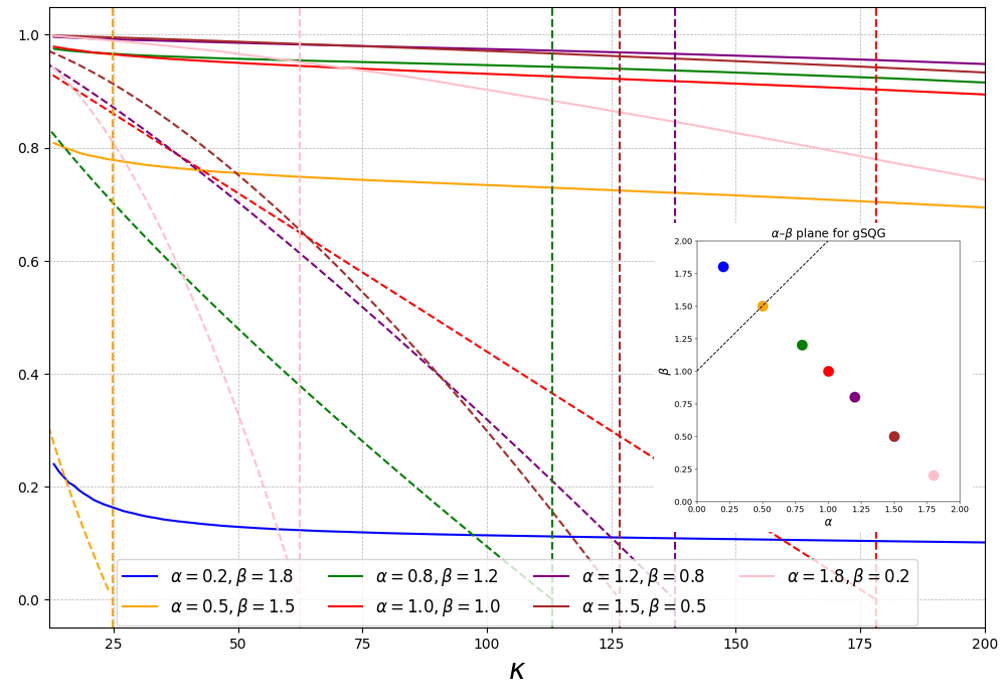}
}
\caption{
Amplified compensated enstrophy flux. Dashed vertical lines indicate $ \kappa_{\sigma}.  $, $N=32768$. Dashed curves indicates $1 - \left( \frac{\kappa}{\kappa_{\sigma}} \right)^{\alpha}$  Left: SQG.  Right: gSQG.
}
\label{fig:ROET_comparison}
\end{figure}

This degradation is most apparent above the ``critical'' line $\beta = \alpha + 1$, 
where the gSQG changes from quasilinear to fully nonlinear.  For those two sample points the value of $\kappa_\sigma$ has fallen below $\bar \kappa$ so Proposition \ref{main_prop1} does not apply.  The results in Figure \ref{fig:ROET_comparison} (right) show that for all samples on or above this line, $\kappa_\sigma < \bar\kappa$, while for those below
$\kappa_\sigma$ exceed $\bar\kappa$ though only slightly.

\section{Relating $\kappa_\sigma$ to $\kappa_\eta$}
\subsection{A rigorous result}

If we quantify the similarity in the spectral power law by specifying a prefactor, $c_{\text{Kr}}$, {up to which the following holds}
\begin{align}\label{withcKr}
\mathcal{E}(\ka) \approx c_{Kr} \eta ^{\frac{2}{3}} \ka_0^{\frac{\beta}{3}} \ka^{\frac{2\beta}{3}-3}, \qquad\
\bar{\ka} \le \ka \le \ka_\eta\;,
\end{align} 
 then we can relate $\ka_{\sigma}$ to $\kappa_{\eta}$.  {We establish such a relation through Proposition \ref{main:prop2} below.} From the plots of the compensated spectra in Figures \ref{Fig:SQGspecs23} and \ref{Fig:gSQGspecs23} it appears that  $c_{\text{Kr}} = \mathcal{O}(1)$.  {Later, in Section \ref{sect:kapeta:G}, we will develop} results that give lower bounds on $\kappa_{\eta}$ in terms of the Grashof number; {these lowers bounds subsequently provide lower bounds on $\ka_\sigma$ through Proposition \ref{main:prop2}, thereby ensuring} a strong enstrophy cascade.
 
\begin{proposition}\label{main:prop2}
Suppose that $\beta>0$, $\ka_\eta \ge 2^{3/\beta}\bar{\ka}$ and ${\delta_1, \de_2} \in (0,1)$ satisfy
    \begin{enumerate}[label=(\roman*)]
        \item $\langle|\te_{\kbar,\ka_\eta}|^2\rangle \ge (1-\delta_1)\langle|\te|^2\rangle$
        \item $\left|\frac{\mathcal{E}(\ka)}{c_{Kr} \eta^{2/3} \ka_0^{\beta/3} \ka^{2\beta/3-3}} - 1\right| \le \delta_2\;,  \qquad
\bar{\ka} \le \ka \le \ka_\eta$ \;.
    \end{enumerate}
Then
    \begin{equation} \label{rel:ksigmaeta}
        \frac{3}{\beta} c_{Kr} \left( \frac{\ka_0}{\bar{\ka}}\right)^{\beta/3} {\left(\frac{1 - \delta_2}{2}\right)} \ka_\sigma^{\al} \le \ka_\eta^{\al} \le \frac{3}{\beta} c_{Kr} \left( \frac{\ka_0}{\bar{\ka}}\right)^{\beta/3} \left(\frac{1+\delta_2}{1-\delta_1}\right) \ka_\sigma^{\al}.
    \end{equation}
\end{proposition}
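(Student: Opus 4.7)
The plan is to exploit a closed-form identity for the numerator of $\ka_\sigma^\al$ embedded in the definitions \eqref{def:eta} and \eqref{def:keta}, and then to estimate the denominator $\langle|\te|^2\rangle$ via the assumed inertial-range power law.

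\emph{Step 1 (rewrite the numerator).} Definitions \eqref{def:eta} and \eqref{def:keta} combine directly to give
\begin{equation*}
\langle|\la^{\al/2}\te|^2\rangle = \frac{L^2\eta}{\ga} = L^2\ga^2\ka_\eta^{3\al}, \qquad \eta^{2/3} = \ga^2\ka_\eta^{2\al}.
\end{equation*}
These identities guarantee that every dimensional prefactor will cancel at the end of the argument.

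\emph{Step 2 (estimate the denominator).} The spectral Riemann sum \eqref{eq:spectrum:riemann} with $\sigma=(2-\be)/2$ yields $\tfrac{1}{L^2}\langle|\te_{\bar{\ka},\ka_\eta}|^2\rangle = \int_{\bar{\ka}}^{\ka_\eta}\chi^{2-\be}\mathcal{E}(\chi)\,d\chi$. Substituting the two-sided bound from hypothesis (ii) reduces the integrand to $c_{Kr}\eta^{2/3}\ka_0^{\be/3}\chi^{-1-\be/3}$ up to a factor in $[1-\delta_2,1+\delta_2]$, and it integrates explicitly to $\tfrac{3}{\be}(\bar{\ka}^{-\be/3}-\ka_\eta^{-\be/3})$. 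The assumption $\ka_\eta \ge 2^{3/\be}\bar{\ka}$ forces $\ka_\eta^{-\be/3} \le \tfrac12\bar{\ka}^{-\be/3}$, so this integral lies between $\tfrac{3}{2\be}\bar{\ka}^{-\be/3}$ and $\tfrac{3}{\be}\bar{\ka}^{-\be/3}$. Hypothesis (i) then transfers these into bounds on $\langle|\te|^2\rangle$ itself, losing only a factor $(1-\delta_1)^{-1}$ in the upper direction.

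\emph{Step 3 (combine).} From $\ka_\sigma^\al = \langle|\la^{\al/2}\te|^2\rangle / \langle|\te|^2\rangle$ and Step~1,
\begin{equation*}
\frac{\ka_\eta^\al}{\ka_\sigma^\al} = \frac{\langle|\te|^2\rangle}{L^2\ga^2\ka_\eta^{2\al}}.
\end{equation*}
Substituting the two-sided bound on $\langle|\te|^2\rangle$ from Step~2 and using $\eta^{2/3}=\ga^2\ka_\eta^{2\al}$ to cancel the remaining $\eta$-factors produces precisely the inequalities in \eqref{rel:ksigmaeta}: the $\tfrac{1-\delta_2}{2}$ originates from the factor $\tfrac12$ in the lower endpoint bound of Step~2 (the signature of the scale-separation hypothesis), while $\tfrac{1+\delta_2}{1-\delta_1}$ comes from stacking (i) atop (ii) in the upper direction. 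There is no substantive analytic obstacle: once one recognizes Step~1 as a purely algebraic consequence of the definitions of $\eta$ and $\ka_\eta$, the rest reduces to evaluating one elementary power-law integral and tracking multiplicative constants, with the scale-separation hypothesis playing the role of ensuring that the lower-endpoint term in the integral is not annihilated by near-cancellation.
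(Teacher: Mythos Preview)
Your proof is correct and follows essentially the same route as the paper's: bound $\langle|\te|^2\rangle$ above and below via the spectral integral $\int_{\bar\ka}^{\ka_\eta}\chi^{2-\be}\Ecal(\chi)\,d\chi$ using hypotheses (i)--(ii) and the scale-separation assumption, then convert to a relation between $\ka_\eta^\al$ and $\ka_\sigma^\al$ using $\eta^{1/3}/\ga=\ka_\eta^\al$. The only cosmetic difference is that you front-load the algebraic identity $\eta^{2/3}=\ga^2\ka_\eta^{2\al}$ in Step~1, whereas the paper achieves the same cancellation at the end by multiplying the bound on $E=\tfrac{1}{L^2}\langle|\te|^2\rangle$ through by $\eta^{1/3}\ga^{-1}E^{-1}$.
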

\begin{proof}
{Let $E: = \frac{1}{L^2} \langle|\te|^2\rangle$. Then}
\begin{align*}
    E & {\le} \frac{1}{(1-\delta_1)L^2} \langle|\te_{\kbar,\ka_\eta}|^2\rangle = {\frac{1}{1-\de_1}\int_{\kapb}^{\kap_\eta}\kap^{2-\be}\Ecal(\kap)\ d\kap}
    \\
    &{\le}  c_{Kr} \eta ^{\frac{2}{3}} \ka_0^{\frac{\beta}{3}} \left(\frac{1+\delta_2}{1-\delta_1}\right) \int_{\bar{\ka}}^{\ka_\eta} \ka^{-1 - \beta/3}\ {d\kap}  \\
    & = \frac{3}{\beta} c_{Kr} \left( \frac{\ka_0}{\bar{\ka}}\right)^{\beta/3} \left(\frac{1+\delta_2}{1-\delta_1}\right) \left[ 1 - \left(\frac{\bar{\ka}}{\ka_\eta}\right)^{\beta/3}\right] \eta ^{2/3} \\
    & \le \frac{3}{\beta} c_{Kr} \left( \frac{\ka_0}{\bar{\ka}}\right)^{\beta/3} \left(\frac{1+\delta_2}{1-\delta_1}\right) \eta ^{2/3}\;.
\end{align*}
    Multiplying both sides by $\eta^{1/3}\ga^{-1} E^{-1}$, {then recalling \eqref{def:eta} and \eqref{def:keta}}, we obtain
    \begin{equation*}
        \ka_{\eta}^{\alpha} = \frac{\eta^{1/3}}{\ga} 
        \le \frac{{3}}{\beta} c_{Kr} \left( \frac{\ka_0}{\bar{\ka}}\right)^{\beta/3} \left(\frac{1+\delta_2}{1-\delta_1}\right)  \frac{\langle|\la^{\frac{\alpha}{2}}\te|^2\rangle}{\langle|\te|^2\rangle} =  \frac{3}{\beta} c_{Kr} \left( \frac{\ka_0}{\bar{\ka}}\right)^{\beta/3} \left(\frac{1+\delta_2}{1-\delta_1}\right) \ka_\sigma^{\al} .
    \end{equation*}
Conversely, we have
    \begin{align*}
        E &\ge \langle|\te_{\kbar,\ka_\eta}|^2\rangle 
        {\ge} c_{Kr} \eta ^{2/3} \ka_0^{\frac{\beta}{3}} (1 - \delta_2) \int_{\bar{\ka}}^{\ka_\eta} \ka^{-1 - \beta/3} d\ka \\
        & = \frac{3}{\beta} c_{Kr} \eta ^{2/3} \left( \frac{\ka_0}{\bar{\ka}}\right)^{\beta/3} (1 - \delta_2) \left[ 1 -  \left(\frac{\bar{\ka}}{\ka_\eta}\right)^{\beta/3}\right]  \\
        & {\ge} \frac{3}{\beta} c_{Kr}  \left( \frac{\ka_0}{\bar{\ka}}\right)^{\beta/3} \left(\frac{1-\delta_2}{2}\right) \eta ^{2/3} \;.
    \end{align*}
Similarly, multiplying both sides by $\eta^{1/3}\ga^{-1} E^{-1}$, we obtain the lower bound in \eqref{rel:ksigmaeta}.

\end{proof}

The estimates in \eqref{rel:ksigmaeta} differ somewhat from the {following analogous result} for the NSE {that was proved} in \cite{Dascaliuc2008}.

\begin{proposition}\label{NSEcase} Consider the NSE ($\alpha=2, \beta=0$) {and suppose that there exists $4\kap_1\leq \kap_\eta$ such that
    \begin{align*}
    \langle |P_{\ka_1}\theta|^2 \rangle \lesssim \langle |Q_{\ka_1}\theta|^2 \rangle.
    \end{align*}
If}
\begin{equation*}
   \mathcal{E}(\ka)\sim\eta^{2/3} \ka^{-3}, \quad \text{for} \quad  \ka_1 \le \ka \lesssim \ka_\eta  \;,\quad 
\end{equation*}    
 then
\begin{equation}\label{etasiglog}
    \ka_\eta^2 \sim \ka_\sigma^2 \ln \frac{\ka_\eta}{\ka_1} \;.
\end{equation}
\end{proposition}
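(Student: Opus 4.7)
The plan is to adapt the strategy of Proposition \ref{main:prop2} to the logarithmic setting forced by the critical exponent $\beta=0$. Specializing the spectral Riemann-sum identity \eqref{eq:spectrum:riemann} at $\sigma=1$ gives
\begin{equation*}
\frac{1}{L^2}\langle |\theta_{\kap_1,\ka_\eta}|^2\rangle \;\sim\; \int_{\kap_1}^{\kap_\eta}\chi^{2}\Ecal(\chi)\,d\chi,
\end{equation*}
and inserting the hypothesized power law $\Ecal(\chi)\sim \eta^{2/3}\chi^{-3}$ turns the integral into $\eta^{2/3}\ln(\kap_\eta/\kap_1)$. The rest of the argument consists of sandwiching $\langle|\theta|^2\rangle$ between constant multiples of $L^{2}\eta^{2/3}\ln(\kap_\eta/\kap_1)$ and then converting the resulting bounds on $\langle|\theta|^2\rangle$ into bounds on $\kap_\sigma$ via the identities $\kap_\eta^{2}=\eta^{1/3}/\gam$ (from \eqref{def:keta}) and $\kap_\sigma^{2}=\gamma\langle|\Lam\theta|^2\rangle/(\gamma\langle|\theta|^2\rangle)=L^{2}\eta/(\gam\langle|\theta|^2\rangle)$, where the last equality uses the definition \eqref{def:eta} of $\eta$.

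For the lower bound on $\kap_\eta^2$, I would simply note that $\langle|\theta|^2\rangle\ge \langle|\theta_{\kap_1,\kap_\eta}|^2\rangle \gtrsim L^{2}\eta^{2/3}\ln(\kap_\eta/\kap_1)$, which after dividing into $L^{2}\eta/\gam$ yields $\kap_\sigma^{2}\lesssim \kap_\eta^{2}/\ln(\kap_\eta/\kap_1)$. For the upper bound, I would invoke the hypothesis $\langle|P_{\kap_1}\theta|^2\rangle\lesssim \langle|Q_{\kap_1}\theta|^2\rangle$ so that
\begin{equation*}
\langle|\theta|^2\rangle \;\lesssim\; \langle|Q_{\kap_1}\theta|^2\rangle \;=\; \langle|\theta_{\kap_1,\kap_\eta}|^2\rangle+\langle|Q_{\kap_\eta}\theta|^2\rangle,
\end{equation*}
then control the tail by a Bernstein inequality (Lemma \ref{T:Bernstein}): $\langle|Q_{\kap_\eta}\theta|^2\rangle\le \kap_\eta^{-2}\langle|\Lam\theta|^2\rangle = L^{2}\gam^{-1}\kap_\eta^{-2}\eta = L^{2}\eta^{2/3}$. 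The assumption $4\kap_1\le \kap_\eta$ gives $\ln(\kap_\eta/\kap_1)\ge \ln 4 > 1$, so the tail contribution is dominated by the inertial-range term up to an absolute constant, yielding $\langle|\theta|^2\rangle\lesssim L^{2}\eta^{2/3}\ln(\kap_\eta/\kap_1)$. Dividing $L^{2}\eta/\gam$ by this estimate recovers $\kap_\sigma^{2}\gtrsim \kap_\eta^{2}/\ln(\kap_\eta/\kap_1)$, completing \eqref{etasiglog}.

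The main technical point is the tail estimate $\langle|Q_{\kap_\eta}\theta|^2\rangle\lesssim L^{2}\eta^{2/3}$: this is precisely where one needs the hypothesis that the inertial range has positive width ($4\kap_1\le \kap_\eta$), so that the $L^{2}\eta^{2/3}$ coming from dissipation-range modes may be absorbed into $L^{2}\eta^{2/3}\ln(\kap_\eta/\kap_1)$ without introducing an additive constant that would spoil the logarithmic scaling in \eqref{etasiglog}. Every other step is a direct rescaling of the computation carried out in the proof of Proposition \ref{main:prop2}, with the essential difference being that the integral $\int \chi^{-1-\beta/3}d\chi$ now degenerates to $\int\chi^{-1}d\chi=\ln(\kap_\eta/\kap_1)$ rather than producing a finite, $\beta$-dependent prefactor.
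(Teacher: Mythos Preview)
Your argument is correct. Note, however, that the paper does not supply its own proof of this proposition; it merely cites \cite{Dascaliuc2008} for the NSE case, so there is nothing to compare against line by line. That said, your approach is exactly the natural $\beta\to 0$ degeneration of the proof of Proposition~\ref{main:prop2} that the paper \emph{does} give: the integral $\int_{\kap_1}^{\kap_\eta}\chi^{-1-\beta/3}\,d\chi$ collapses to $\ln(\kap_\eta/\kap_1)$, and the rest is the same sandwich on $\langle|\theta|^2\rangle$ followed by the identity $\kap_\eta^{2}=\eta^{1/3}/\gam$.

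One structural difference worth noting: Proposition~\ref{main:prop2} assumes directly that the inertial-range enstrophy captures a fixed fraction of the total (hypothesis (i)), which simultaneously controls both the low-mode and high-mode tails. The NSE statement you are proving only assumes control of the low-mode tail via $\langle|P_{\kap_1}\theta|^2\rangle\lesssim\langle|Q_{\kap_1}\theta|^2\rangle$, so you correctly supply the missing high-mode estimate $\langle|Q_{\kap_\eta}\theta|^2\rangle\le \kap_\eta^{-2}\langle|\Lam\theta|^2\rangle=L^2\eta^{2/3}$ via Bernstein and absorb it using $\ln(\kap_\eta/\kap_1)\ge\ln 4$. This is the right observation and explains exactly why the hypothesis $4\kap_1\le\kap_\eta$ appears.
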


The log term in \eqref{etasiglog} effectively replaces in \eqref{rel:ksigmaeta} the factor $(\ka_0/\overline{\ka})^{\beta/3}/\beta$  
which blows up as $\beta \to 0$.

\subsection{Numerical tests of  \eqref{rel:ksigmaeta}, \eqref{etasiglog}} \label{ksigketacompute}

The assumptions in Proposition~\ref{NSEcase} in the NSE case appear to be well satisfied.   The spectra adhere to the expected power law and the $\ka_{\eta}$ values range from roughly 5 to over 30 times $\bar \kappa$.   In  Figure~\ref{fig:sigmaovereta_NSE_SQG} (left) we plot the quotient of the two sides of \eqref{etasiglog} for $\alpha=2$, varying viscosity and resolutions.  There is less than a 4\% relative change in this quotient over $\gamma \in [10^{-9},10^{-7}]$ in the highest resolution tested $N=16384$.

\begin{figure}[H]
\centerline{
\includegraphics[width=8.5cm, height=6.5cm]{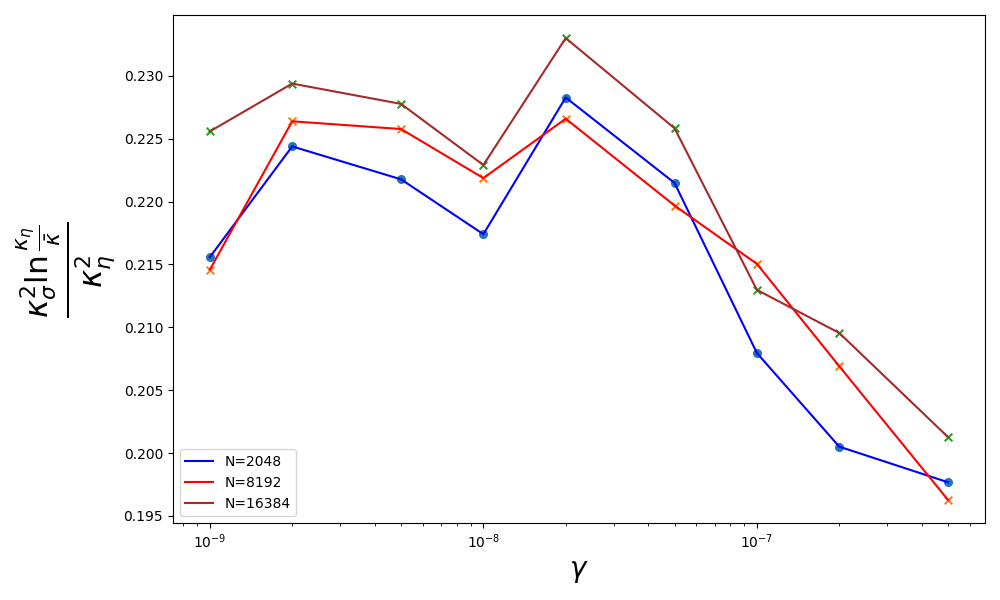}
\includegraphics[width=8.5cm, height=6.5cm]{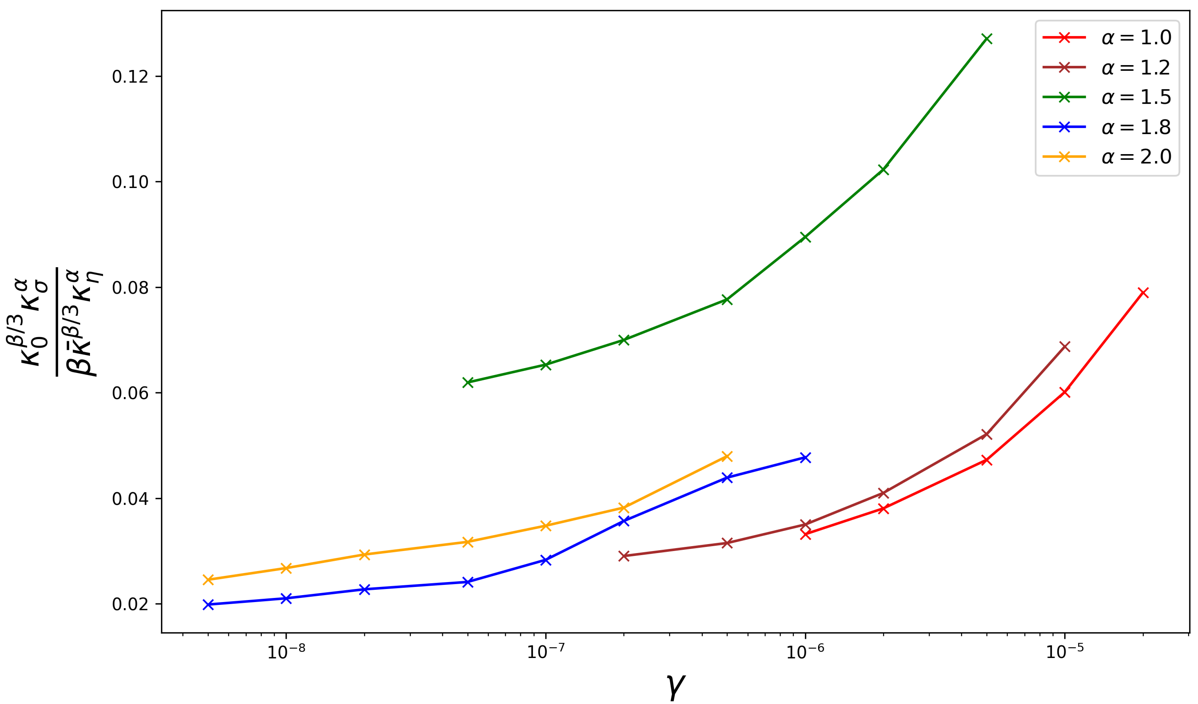}}
\caption{Left: test of \eqref{etasiglog} (NSE). Right: test of  \eqref{rel:ksigmaeta} (SQG), $N=16384$
}
\label{fig:sigmaovereta_NSE_SQG}
\end{figure}

In the case of the SQG, the spectra sampled for those values of $\alpha\ge 1$ in Figure \ref{Fig:SQGspec} match the power law rather faithfully, though not as well as for the NSE.  In particular, we note a hint of inflation  just before the dissipation fall-off for $\alpha=1$.  The gap between $\bar \ka$ and $\ka_\eta$ is even wider than in the NSE case.   Figure~\ref{fig:sigmaovereta_NSE_SQG} (right) shows the relevant quotient for \eqref{rel:ksigmaeta} to decrease as $\gamma$ decreases over the ranges where we computed $\ka_\eta$.  However the convexity of each plot suggests there is a leveling off at yet smaller viscosity values. 

In the case of the gSQG the computational results  are mixed.
In the southeast corner of the inset in Figure \ref{fig:sigmaeta_gSQG_NW} the quotient is relatively flat for larger $\alpha$, smaller $\beta$ and steepening as the critical SQG case (in red) is approached. 
It is in the northwest corner that we see the pronounced bump in the spectrum near the dissipation fall-off (see Figure~\ref{Fig:gSQGspectra}). Proposition \ref{main:prop2} then should not apply. This inflation of energy at smaller scales results in larger $\ka_\eta$, which is consistent with the steeper plots of the quotient. 
In Figure \ref{fig:sigmaeta_gSQG_NW} (right) we plot the quotient for 9 sample points within this quadrant. The range over which we vary viscosity is again taken to be that which enables the computation of $\ka_\eta$.  The behavior of the quotient seems disassociated from critical line $\beta=\alpha+1$.   

\setlength{\intextsep}{1pt}
\begin{figure}[H]
\centerline{
\includegraphics[width=8.5cm, height=6.15cm]{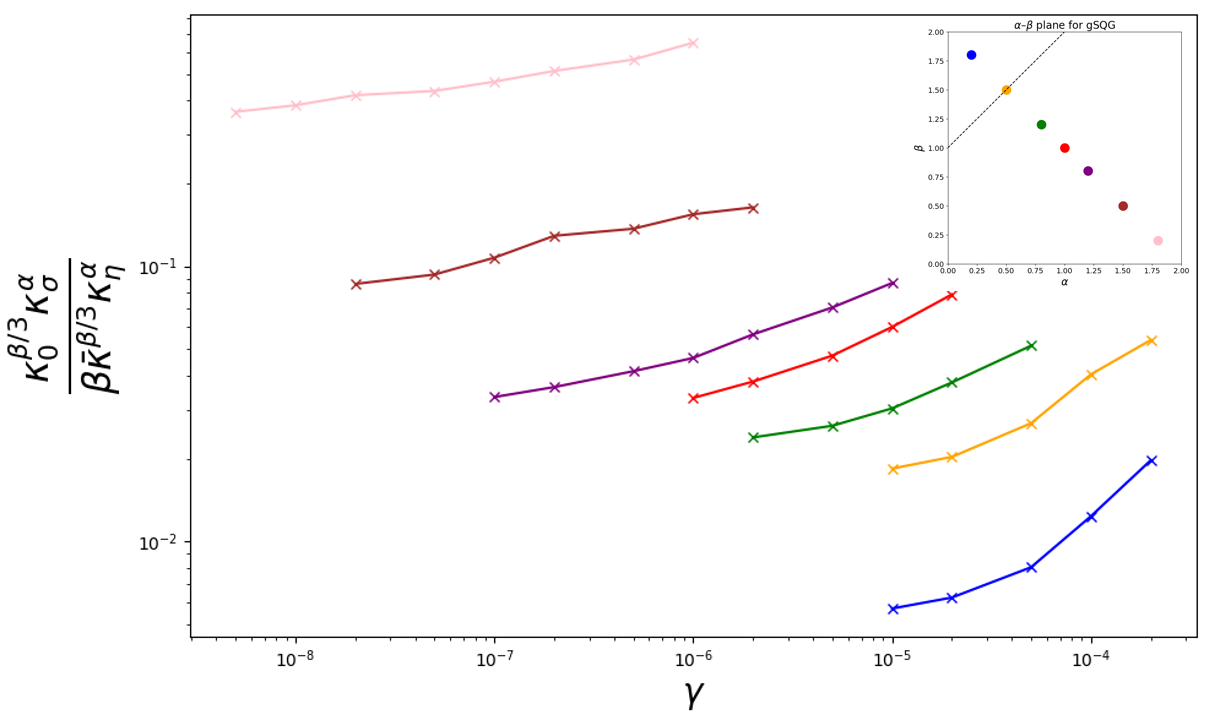} \
\includegraphics[width=8.5cm, height=6.5cm]{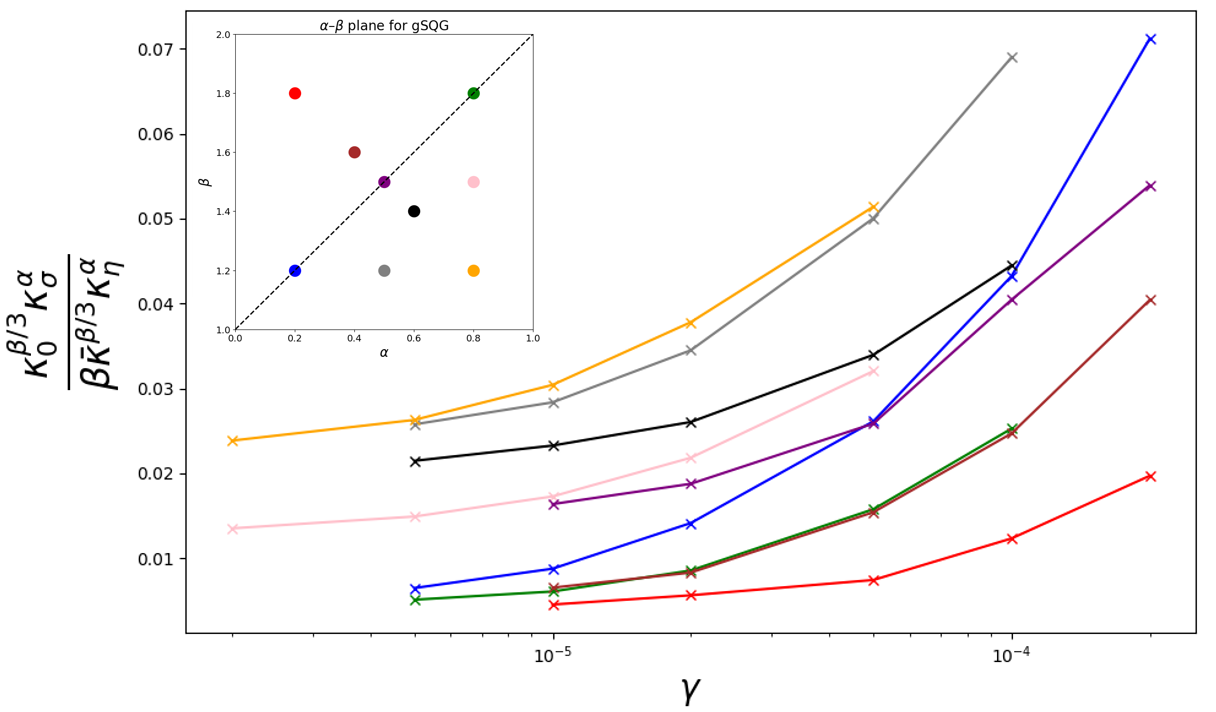}
}
\caption{
Test of \eqref{rel:ksigmaeta}. Left: gSQG. Right: NW Quadrant, $N=16384$
}
\label{fig:sigmaeta_gSQG_NW}
\end{figure}

\section{dissipation law}
Kolmogorov's dissipation law for the 3D NSE
    \begin{equation} \label{rel:etau_3DNSE}
        \epsilon \sim \frac{U_{\text{3D}}^3}{L}\;,
    \end{equation}
where 
$$\epsilon=\frac{\gamma}{L^3}\langle |\theta|^2\rangle\qquad \text{and} \qquad U_{\text{3D}}=\ka_0^{3/2} \langle |u|^2 \rangle^{1/2}$$
are the {\it energy dissipation rate} and {\it root mean square velocity}.  The 2D analogue of Kolmogorov's 3D dissipation law is
    \begin{equation} \label{rel:etau_NSE}
        \eta \sim \frac{U_{\text{2D}}^3}{L^3}\qquad \text{with} \qquad U_{\text{2D}}={\frac{1}{L}}
        \langle |u|^2 \rangle^{1/2}\;.
    \end{equation}
If this law were to extend to the gSQG, it would take the form 
    \begin{equation} \label{disslawgSQG}
    \eta \sim \frac{U^3}{L^3}\qquad \text{but with} \qquad     U ={L^{\frac{\be}2-1}}
    \langle|\la^{\frac{\beta-2}{2}}\te|^2\rangle^{1/2} .
    \end{equation}

 \subsection{Analytical support}\label{rigdisslaw}
   
 The upper bound  $\epsilon \lesssim {U_{\text{3D}}^3}/{L}$ was proved in \cite{constantin1994shear} for shear and channel flow and in \cite{foias1997turbulence} for the periodic NSE.  {We show that the} analogous upper bound for gSQG holds provided that the Grashof number is sufficiently large. 
\begin{theorem} \label{thm:etau}
 There exists $\Gb_1$, depending only on $\kap_0,\bkap,\al,\be$, such that if $ G \gtrsim \Gb_1 $, then
    \begin{equation} \label{rel:etau_gSQG}
        \eta \lesssim \frac{U^3}{L^3}.
    \end{equation}    
    \end{theorem}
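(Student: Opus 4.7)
\emph{Plan.} I begin from the time-averaged enstrophy balance
\begin{equation*}
\eta L^2 = \gamma \langle |\Lambda^{\alpha/2}\theta|^2\rangle = \langle (g,\theta)\rangle,
\end{equation*}
and note that, by the spectral support of $g$ in \eqref{def:g_kbound}, one has $(g,\theta) = (g,\theta_{\kapb,\bkap})$. A direct Cauchy--Schwarz plus Bernstein bound would give only $\eta L^2 \lesssim |g|(\bkap L)^{(2-\beta)/2}U$, which is merely linear in $U$. To promote this to $U^3$, the idea is to use the gSQG equation itself to trade one power of the force for two powers of $\theta$.

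Let $\mathcal{P}:=P_{\bkap}Q_{\kapb}$. Projecting \eqref{maineq} onto the active band and using $g_{\kapb,\bkap}=g$, one obtains
\begin{equation*}
\partial_t\theta_{\kapb,\bkap}+\gamma\Lambda^\alpha\theta_{\kapb,\bkap}+\mathcal{P}(u\cdot\nabla\theta)=g.
\end{equation*}
Taking the $H$-inner product with $\theta_{\kapb,\bkap}$, time-averaging (the $\partial_t$ term drops), and using $(g,\theta_{\kapb,\bkap})=(g,\theta)$ yields
\begin{equation*}
\eta L^2 = \gamma\langle|\Lambda^{\alpha/2}\theta_{\kapb,\bkap}|^2\rangle + \langle(u\cdot\nabla\theta,\theta_{\kapb,\bkap})\rangle.
\end{equation*}
The first term is handled by Bernstein and the definition of $U$: $\gamma\langle|\Lambda^{\alpha/2}\theta_{\kapb,\bkap}|^2\rangle \le \gamma\bkap^{\alpha+2-\beta}L^{2-\beta}U^2$. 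For the second term, self-adjointness of $\mathcal{P}$ together with the antisymmetry relation \eqref{eq:identities} (which forces $(u\cdot\nabla\theta_{\kapb,\bkap},\theta_{\kapb,\bkap})=0$) yields the exact identity
\begin{equation*}
(u\cdot\nabla\theta,\theta_{\kapb,\bkap})=\tfrac{1}{2}([\mathcal{P},u\cdot\nabla]\theta,\theta),
\end{equation*}
so the nonlinear contribution collapses to a commutator.

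Bounding the commutator is the core of the argument. Because $\mathcal{P}$ is frequency-localized in $[\kapb,\bkap]$, the operator $[\mathcal{P},u\cdot\nabla]$ effectively acts only near the band edges and thereby recovers a derivative relative to the naive estimate. Applying the commutator / Kato--Ponce-type estimates collected in the Appendix, together with the constitutive law $u = -\kap_0^{-\beta}\nabla^\perp\Lambda^{\beta-2}\theta$ (which relates $u$ to $\theta$ with a shift of $\beta-1$ derivatives), one expects a bound of the form
\begin{equation*}
|\langle([\mathcal{P},u\cdot\nabla]\theta,\theta)\rangle|\lesssim \frac{U^3}{L}.
\end{equation*}
Combining, $\eta L^2 \le C_1\,\gamma\bkap^{\alpha+2-\beta}L^{2-\beta}U^2 + C_2\,U^3/L$. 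The Grashof hypothesis $G\ge \Gb_1$ provides a lower bound on $U$ (obtained by feeding the a~priori estimates \eqref{prod_G}--\eqref{prod_Gstar} into the definition of $U$) that is sharp enough to absorb the first term into the second, giving $\eta\lesssim U^3/L^3$.

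\emph{Main obstacle.} The delicate step is the commutator estimate. The operator $[\mathcal{P},u\cdot\nabla]\theta$ nominally loses one full derivative coming from $\nabla$, and one must exploit both the frequency localization of $\mathcal{P}$ and the constitutive law linking $u$ to $\theta$ to recover it. Arriving at precisely $U^3/L$, with no spurious residual factor of $\bkap$ or $L$, is the most subtle point, since any such residual factor would destroy the scaling of the putative dissipation law. A secondary concern is ensuring that the Grashof threshold $\Gb_1$ depends only on $\kap_0,\bkap,\alpha,\beta$ (and not on the solution), which forces one to express the lower bound on $U$ purely through the driving data and the parameters of the equation.
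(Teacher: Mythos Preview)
Your overall strategy---project onto the forcing band, test against $\tht_{\kapb,\bkap}$, and reduce the nonlinear term to a commutator---differs from the paper's, which instead tests \eqref{maineq} against $\Lam^{-\al}g$ to obtain an auxiliary relation bounding $|\Lam^{-\al/2}g|$ in terms of $U$ and $U^2$, then feeds this back into the linear bound $\eta\lesssim U|\Lam^{-\al/2}g|$. The difference matters because of two genuine gaps in your plan.

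\emph{The commutator bound does not close in terms of $U$.} Your claimed estimate $|\lb([\mathcal{P},u\cdotp\nabla]\tht,\tht)\rb|\lesssim U^3/L$ would require controlling a trilinear form by three copies of the \emph{negative}-order norm $|\Lam^{(\be-2)/2}\tht|$. This is impossible: the high--high to low paraproduct piece $\mathcal{P}(\lpj u\cdotp\nabla\lpj\tht)$ with $2^j\gg\bkap$ contributes at order $2^{2j}|\lpj\Lam^{(\be-2)/2}\tht|^2$, and no amount of frequency localization of $\mathcal{P}$ kills this. Writing the term as a commutator does not help, since the commutator form contains exactly this high--high piece without cancellation. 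The Appendix lemmas you invoke (Lemmas~\ref{JKM22lemma} and~\ref{lem:commutator4}) place the ``rough'' factor $f_1$ in an $\ell^1$ or $H^2$-type norm; they are usable precisely because in the paper's approach that role is played by $\nabla\Lam^{-\al}g$, which is band-limited and absorbs arbitrarily many derivatives. In your setup all three slots are occupied by $\tht$, so there is nowhere to put the derivative excess. The paper's key manipulation---rewriting $(u\cdotp\nabla\tht,\Lam^{-\al}g)$ as a commutator with $\nabla\Lam^{-\al}g$ (not with $\tht$) acting on two copies of $\Lam^{(\be-2)/2}\tht$---is exactly what makes the estimate close as $|\Lam^{(\be-2)/2}\tht|^2\cdot[\text{norms of }g]$.

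\emph{The lower bound on $U$ does not come from the a~priori estimates.} The inequalities \eqref{prod_G}--\eqref{prod_Gstar} are \emph{upper} bounds on $|\tht|$ and $|\Lam^{(\be-2)/2}\tht|$; they cannot yield a lower bound on $U$. In the paper, the lower bound $U\gtrsim\gam\bkap^{\al+\be/2-2}L^{\be/2-1}$ (for $G\geq\Gb_1$) is derived from the $\Lam^{-\al}g$-tested balance itself, which bounds $|\Lam^{-\al/2}g|$ from above by a combination of $U$ and $U^2$. Without that auxiliary relation you have no mechanism to make $\Gb_1$ depend only on $\kap_0,\bkap,\al,\be$.
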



{Our} proof of Theorem \ref{thm:etau} involves {the use of commutator operators that allow us to obtain more precise estimates on dyadic triad interactions that arise through localizing the nonlinear term in the enstrophy balance. This localization is facilitated by the Littlewood-Paley decomposition, which we} provide a brief introduction to 
in the appendix. {For efficiency of the present goal of proving Theorem \ref{thm:etau}, we simply state the commutator estimates invoked its proof (Lemma \ref{JKM22lemma} and Lemma \ref{lem:commutator4} below). We remark that} Lemma \ref{lem:commutator4} for the setting of $\Omega=\mathbb{R}^2$ and $i=j$ was proved in \cite{JKM2022}. The proof in the current setting of $\Omega=\mathbb{T}^2$ and $i\sim j$ is almost identical and therefore we omit it to avoid redundancy. The proof of Lemma \ref{JKM22lemma} follows an argument analogous to Lemma 3.1 in \cite{JKM2022}, but with enough modifications {that justify supplying additional details for (see Appendix \ref{sect:appendix})}. 

\begin{lemma}
\label{JKM22lemma}
    For $s\in (0,1)$, let $\rho \in [0,s]$. Let $f_1,f_2,f_3\in H$. Given $i, j\in \mathbb{Z}$ such that $|i-j|\le k$ for a positive integer $k$, suppose that $\text{supp} \ \hat{f}_2\subset \mathcal{A}_i$ and $\text{supp} \ \hat{f}_3\subset \mathcal{A}_j$. Then there exists a constant $C>0$, depending on $s,k$ and $\rho$ such that
    \[|\lbn [\la^{-s}\nabla,f_1]f_2,f_3\rbn|\le C\Sob{\widehat{\la^{1-s+\rho} f_1}}{\ell^1(\ZZ^2)}|\la^{-\rho}f_2||f_3|.\]
\end{lemma}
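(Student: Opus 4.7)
The plan is to reduce the commutator to a Fourier symbol estimate and close by Parseval together with Young's convolution inequality. Writing $\xi,\eta,\zeta\in\ZZ^2$ for Fourier indices (to avoid conflict with the integer $k$ in the hypothesis), a direct computation gives
\[
\widehat{[\la^{-s}\nabla,f_1]f_2}(\xi) \;=\; i\sum_{\eta}\bigl(\xi|\xi|^{-s}-\eta|\eta|^{-s}\bigr)\hat f_1(\xi-\eta)\hat f_2(\eta),
\]
up to universal factors of $\kap_0$. Pairing against $f_3$ via Parseval and substituting $\zeta=\xi-\eta$ then reduces the whole statement to controlling
\[
K(\zeta,\eta)\;:=\;(\zeta+\eta)|\zeta+\eta|^{-s}-\eta|\eta|^{-s}
\]
on the dyadically localized support, since the quantity of interest becomes $C\sum_{\zeta,\eta}K(\zeta,\eta)\hat f_1(\zeta)\hat f_2(\eta)\overline{\hat f_3(\zeta+\eta)}$.

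The key step is the pointwise bound $|K(\zeta,\eta)|\lesssim|\zeta|^{1-s+\rho}|\eta|^{-\rho}$. The annulus hypotheses force $|\eta|\sim 2^i$ and $|\zeta+\eta|\sim 2^j$ with $|i-j|\le k$, so these two magnitudes are comparable up to a constant depending on $k$, and in particular $|\zeta|\lesssim 2^k|\eta|$. I would split into two regimes. When $|\zeta|\le|\eta|/2$, the segment from $\eta$ to $\zeta+\eta$ stays away from the origin, so the mean value theorem applied to $F(x):=x|x|^{-s}$, whose gradient is homogeneous of degree $-s$, yields $|K|\lesssim|\zeta||\eta|^{-s}$; the assumption $\rho-s\le 0$ together with $|\zeta|\lesssim|\eta|$ then converts this into $|\zeta|^{1-s+\rho}|\eta|^{-\rho}$. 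When $|\zeta|>|\eta|/2$, the triangle inequality gives $|K|\le|\zeta+\eta|^{1-s}+|\eta|^{1-s}\lesssim|\eta|^{1-s}$, and $|\zeta|\sim|\eta|$ combined with $1-s+\rho\ge 0$ produces the same bound.

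With the kernel estimate in hand, the full sum is dominated by
\[
\sum_{\zeta,\eta} B(\zeta)\,A(\eta)\,|\hat f_3(\zeta+\eta)|, \qquad B(\zeta):=|\zeta|^{1-s+\rho}|\hat f_1(\zeta)|,\quad A(\eta):=|\eta|^{-\rho}|\hat f_2(\eta)|,
\]
which is a convolution pairing $(A*B,|\hat f_3|)_{\ell^2}$. Cauchy--Schwarz and Young's convolution inequality give $\|A*B\|_{\ell^2}\le\|B\|_{\ell^1}\|A\|_{\ell^2}$, and Parseval identifies the three factors as $\Sob{\widehat{\la^{1-s+\rho}f_1}}{\ell^1(\ZZ^2)}$, $|\la^{-\rho}f_2|$, and $|f_3|$, which is exactly the claimed estimate.

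The main obstacle I anticipate is the kernel bound in the transition regime $|\zeta|\sim|\eta|$, where neither the mean-value nor the triangle-inequality argument alone is sharp. The annulus support hypothesis is precisely what prevents $|\zeta+\eta|$ from collapsing to zero despite $|\zeta|$ and $|\eta|$ being comparable, and the range $\rho\in[0,s]$ is what ensures both branches of the case split assemble into a single homogeneous power structure.
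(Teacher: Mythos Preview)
Your argument is correct and follows essentially the same route as the paper: pass to the Fourier side, bound the commutator symbol $K(\zeta,\eta)$ pointwise by $|\zeta|^{1-s+\rho}|\eta|^{-\rho}$ via a case split on the relative size of $|\zeta|$ and $|\eta|$, then close with Cauchy--Schwarz and Young's convolution inequality. The one notable difference is in the comparable-frequency regime $|\zeta|\gtrsim|\eta|$: the paper continues to work with the integral form $\int_0^1|\mathbf{A}(\tau)|^{-s}\,d\tau$ and invokes an auxiliary lemma (Lemma~3.2 of \cite{JKM2022}) to control the possible near-singularity along the segment, whereas you sidestep this entirely by using the crude triangle-inequality bound $|K|\le|\zeta+\eta|^{1-s}+|\eta|^{1-s}$ together with the support constraint $|\zeta+\eta|\sim 2^j\sim|\eta|$. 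Your version is more self-contained and arguably cleaner; the paper's version, in exchange, keeps a uniform mean-value structure across both cases.
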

    \begin{lemma}\label{lem:commutator4}
Let $s \in [0,1)$, $\rho \in \mathbb{R}$. Let $f_1,f_2,f_3\in H$. Given $i, j\in \mathbb{Z}$ such that $|i-j|\le k$ for a positive integer $k$, suppose that $\supp\hat{f}_3\subset{\Acal}_j$. Then there exists a sequence $\{c_i\}\in\ell^2(\ZZ)$ such that $\Sob{\{c_i\}}{\ell^2(\mathbb{Z})}\leq1$ and
    \begin{align*}
        |([\Lam^{s+\rho+1}\Delta_i,f_1]f_2,f_3)|\leq Cc_{i}
              (\Sob{\Lam \widehat{f}_1}{\ell^1(\ZZ^2)}+|\Lam^2f_1|)|\Lam^sf_2||\Lam^\rho f_3|,
    \end{align*}     
for some constant $C>0$, depending only on $s,\rho,k$. 
\end{lemma}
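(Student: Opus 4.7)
The plan is to mirror the commutator analysis developed in \cite{JKM2022} for $\Omega=\mathbb{R}^2$ and $i=j$, transporting it to the periodic setting and adapting it to the case $|i-j|\le k$. First, I would write the convolution representation of $\Lam^{s+\rho+1}\Delta_i$: since $\Delta_i$ is a Littlewood--Paley projector onto frequencies of size $\sim 2^i$, its composition with $\Lam^{s+\rho+1}$ is a Fourier multiplier whose kernel on $\TT^2$ can be written, via Poisson summation, as the periodization of $2^{i(s+\rho+3)}K(2^i\cdot)$ for a fixed Schwartz function $K$ with the usual rapid decay. The commutator then takes the form
\begin{equation*}
    [\Lam^{s+\rho+1}\Delta_i,f_1]f_2(x)=\int_{\TT^2} K_i(x-y)(f_1(y)-f_1(x))f_2(y)\,dy,
\end{equation*}
which is the starting point for all subsequent estimates.

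Next, I would decompose $f_1=P_{2^i}f_1+Q_{2^i}f_1$, handling each piece by a different argument. For the low-frequency part $P_{2^i}f_1$, a first-order Taylor expansion gives $(P_{2^i}f_1)(y)-(P_{2^i}f_1)(x)=(y-x)\cdot\int_0^1\nabla(P_{2^i}f_1)(x+t(y-x))\,dt$; the factor $(y-x)$ compensates one power of $2^i$ against the scaling of $K_i$, while $\|\nabla P_{2^i}f_1\|_{L^\infty}\lesssim\Sob{\widehat{\Lam f_1}}{\ell^1(\ZZ^2)}$ follows directly from summing Fourier coefficients. For the high-frequency part $Q_{2^i}f_1$, the Taylor trick is no longer efficient, so I would instead bound each of the two commutator terms separately by Cauchy--Schwarz, pulling $Q_{2^i}f_1$ into $L^2$ and using Bernstein to exchange the excess derivative for a factor of $2^{-i}$, producing the $|\Lam^2 f_1|$ contribution. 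Together these yield a pointwise-in-$i$ estimate of the form
\begin{equation*}
    |([\Lam^{s+\rho+1}\Delta_i,f_1]f_2,f_3)|\leq C\bigl(\Sob{\widehat{\Lam f_1}}{\ell^1(\ZZ^2)}+|\Lam^2f_1|\bigr)\,\|\Delta_if_2\|_{L^2}\,\|f_3\|_{L^2}\cdot 2^{i(s+\rho)}\cdot(\text{scale factor}).
\end{equation*}

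Finally, to extract $|\Lam^s f_2|$ and $|\Lam^\rho f_3|$ I would invoke the frequency localization. Since $\supp\hat{f}_3\subset\Acal_j$ with $|i-j|\le k$, Bernstein's lemma gives $\|f_3\|_{L^2}\sim 2^{-j\rho}|\Lam^\rho f_3|\sim 2^{-i\rho}|\Lam^\rho f_3|$ up to constants depending only on $k$. The remaining dependence on $i$ is absorbed into the sequence
\begin{equation*}
    c_i:=\frac{2^{is}\|\Delta_if_2\|_{L^2}}{|\Lam^s f_2|},
\end{equation*}
whose $\ell^2(\ZZ)$-summability (with norm $\lesssim 1$) is precisely the Littlewood--Paley characterization of $|\Lam^s f_2|$.

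The main obstacle will be bookkeeping the dyadic scales so that the low/high-frequency split of $f_1$ at threshold $2^i$ produces \emph{exactly} the two norms $\Sob{\widehat{\Lam f_1}}{\ell^1}$ and $|\Lam^2 f_1|$ on the right-hand side, independently of $i$. The periodic setting introduces no essential new difficulty, since periodization preserves the relevant decay and moment bounds of $K$, and the condition $|i-j|\le k$ is handled by the almost-orthogonality of Littlewood--Paley projectors together with the dyadic equivalence $2^i\sim 2^j$ (with constants depending on $k$). This is why the authors view the argument as being a direct transposition of the one in \cite{JKM2022} and omit a formal presentation.
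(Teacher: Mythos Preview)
The paper does not actually supply a proof of this lemma; it simply remarks that the argument is ``almost identical'' to the corresponding result in \cite{JKM2022} for $\Omega=\RR^2$ and $i=j$, and omits the details. Your outline is a correct and standard route to the estimate, and you correctly note at the end why the authors feel justified in omitting it.

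That said, judging from the proof of the companion Lemma~\ref{JKM22lemma} that \emph{is} written out in the appendix (and which is said to follow \cite{JKM2022} closely), the approach taken there is purely Fourier-side: one writes the commutator as a bilinear sum $\sum_{u,v} m(u,v)\hat f_1(u)\hat f_2(v-u)\overline{\hat f_3(v)}$ with an explicit multiplier $m$, applies a mean-value argument to $m$, and then splits into cases according to the relative sizes of $|u|$ and $|v-u|$. Your proposal instead works on the physical side via the periodized convolution kernel $K_i$, a first-order Taylor expansion of $P_{2^i}f_1$, and a direct $L^2$ bound on $Q_{2^i}f_1$. Both strategies are classical for Kato--Ponce-type commutators and lead to the same conclusion; the Fourier-side version has the advantage that the periodic setting requires no separate discussion of Poisson summation or kernel periodization, while your physical-space version makes the gain of one derivative on $f_1$ (and hence the appearance of $\Sob{\widehat{\Lam f_1}}{\ell^1}$ versus $|\Lam^2 f_1|$) more transparently tied to the low/high split.
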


With these in hand, we now prove Theorem \ref{thm:etau}. 
\begin{proof}[Proof of Theorem \ref{thm:etau}]
    First, multiply equation \eqref{maineq} by $\te$, integrate over $\mathbb{T}^2$ and average in time to obtain
    \begin{equation} \label{eq:eq1_theta}
        \ga \langle |\la^{\frac{\alpha}{2}}\te|^2 \rangle = \langle (g, \te) \rangle .   
    \end{equation}
    From \eqref{def:eta}, \eqref{disslawgSQG}, \eqref{eq:eq1_theta}, and  Jensen's inequality, we have
    \begin{equation} \label{ineq:etaineq}
        \eta = \frac{1}{L^2} \langle (g, \te) \rangle \le \frac{1}{L^2} \langle|\la^{\frac{\beta-2}{2}}\te|\rangle |\la^{\frac{2-\beta}{2}}g| 
        \le \frac{U}{L^2} \bar{\ka}^{\frac{2+\alpha-\beta}{2}}{L^{1-\frac{\beta}{2}}} |\la^{-\frac{\alpha}{2}}g| .
    \end{equation}
    We then take the {$H$-scalar} product of \eqref{maineq} by $\la^{-\alpha}g$ and average in time to get
    \begin{equation*} 
     \ga \langle (\la^{\alpha}\te, \la^{-\alpha}g) \rangle + \langle (u\cdotp\nabla \te, \la^{-\alpha}g) \rangle = \langle (g, \la^{-\alpha}g) \rangle .
    \end{equation*}
Upon dividing by $L^2$ and {using self-adjointness of} the fractional laplacian operator, we get
    \begin{equation} \label{eq:eq1_la_nalpha_g}
        \frac{\ga}{L^2} \langle (\te,g) \rangle + \frac{1}{L^2}\langle (u\cdotp\nabla \te, \la^{-\alpha}g) \rangle  = \frac{1}{L^2} |\la^{-\frac{\alpha}{2}} g|^2  .
    \end{equation}
For $0<\be<2$ we have
\begin{align*}
    \ka_0^\be\lbn u\cdot \nabla \te, \la^{-\alpha}g \rbn=&\lbn \nabla^{\perp}\la^{\beta-2}\te\cdot \nabla \te, \la^{-\alpha}g \rbn\\
    =&\lbn \nabla^{\perp}\la^{\beta-2}\te\cdot \nabla \te, \la^{-\alpha}g \rbn-\underbrace{\lbn \nabla^{\perp}\la^{\frac{\beta}{2}-1}\te\cdot \nabla \la^{\frac{\beta}{2}-1}\te,\la^{-\alpha}g \rbn}_{=0}\\
    =&-\lbn \nabla^{\perp}\la^{\be-2}\te\cdot \nabla\la^{-\al}g,\te\rbn+\lbn \nabla^{\perp}\la^{\frac{\be}{2}-1}\te \cdot \nabla \la^{-\al}g,\la^{\frac{\be}{2}-1}\te \rbn\\
    =&-\lbn \la^{\frac{\be}{2}-1}(\nabla^{\perp}\la^{\frac{\be}{2}-1}\te)\cdot \nabla \la^{-\al}g,\te\rbn+\lbn \la^{\frac{\be}{2}-1}(\nabla^{\perp}\la^{\frac{\be}{2}-1}\te\cdot \nabla \la^{-\al}g),\te\rbn\\
    =&\lbn [\la^{\frac{\be}{2}-1}\nabla^{\perp}\cdot, \nabla \la^{-\al}g]\la^{\frac{\be}{2}-1}\te,\te\rbn\\
    =& \sum_{|j-k|\le 1} \lbn \lpj ([\la^{\frac{\be}{2}-1}\nabla^{\perp}\cdot, \nabla \la^{-\al}g]\la^{\frac{\be}{2}-1}\te),\lpk \te\rbn.
\end{align*}
We now use the decomposition
\begin{align*}
      \ka_0^\be\lbn u\cdot \nabla \te, \la^{-\alpha}g \rbn=&\sum_{|j-k|\le 1}\left \{ \lbn [\nabla^{\perp}\la^{\frac{\be}{2}-1}\cdot,\nabla\la^{-\al}g]\lpj \la^{\frac{\be}{2}-1}\te,\lpk \te\rbn \right.\\& \hspace{4 em}  \left.-\lbn [\lpj,\nabla\la^{-\al}g]\cdot \nabla^{\perp} \la^{\be-2}\te,\lpk \te\rbn \right.\\& \hspace{4 em}  \left.  +\lbn [\lpj,\nabla\la^{-\al}g]\cdot \nabla^{\perp} \la^{\frac{\be}{2}-1}\te,\lpk \la^{\frac{\be}{2}-1}\te\rbn\right\}\\
      =&\sum_{|j-k|\le 1} I+II+III
\end{align*}
which can be verified by expanding each of $I$, $II$ and $III$.

Since $\be>0$, we may apply Lemma \ref{JKM22lemma} with $s=\rho=1-\frac{\be}{2}$ and then Bernstein's inequality to find
\begin{align*}
    |I|\le& C\|\widehat{\nabla \la^{1-\al}g}\|_{\ell^1(\ZZ^2)}
    |\lpj\la^{\be-2}\te||\lpk\te|\\
    \le&C\|\widehat{\nabla \la^{1-\al}g}\|_{\ell^1(\ZZ^2)}|\lpj\la^{\frac{\be}{2}-1}\te||\lpk \la^{\frac{\be}{2}-1}\te|.
\end{align*}
Next note that 
\[II=\lbn [\lpj,\nabla\la^{-\al}g]\cdot \nabla^{\perp} \la^{\be-2}\te,\lpk \te\rbn =-\lbn [\lpj,\nabla\la^{-\al}g] \la^{\be-2}\te \cdot,\nabla^{\perp}\lpk \te\rbn. \]
Applying Lemma \ref{lem:commutator4} with $s=1-\frac{\be}{2},\, \rho=\frac{\be}{2}-2$, we get
\begin{align*}
    |II|\le Cc_j \left(\Sob{ \widehat{\nabla \la^{1-\al}g}}{\ell^1(\ZZ^2)}+|\nabla \la^{2-\al}g|\right)|\la^{\frac{\be}{2}-1}\te||\lpk \la^{\frac{\be}{2}-1}\te|,
\end{align*}
{for some $c_j \in \ell^{2}(\mathbb{Z})$}. We next apply Lemma \ref{lem:commutator4} with $s=0,\, \rho=-1$ on
\[III=-\lbn [\lpj,\nabla\la^{-\al}g]\cdot \nabla^{\perp} \la^{\frac{\be}{2}-1}\te,\lpk \la^{\frac{\be}{2}-1}\te\rbn =\lbn [\lpj,\nabla\la^{-\al}g] \la^{\frac{\be}{2}-1}\te \cdot,\nabla^{\perp}\lpk \la^{\frac{\be}{2}-1}\te\rbn,\]
to obtain an identical upper bound on $|III|$.

From $\|\hat f\|_{\ell^1(\ZZ^2)} \le (1/\sqrt{\pi})|f|^{1/2}|\la^2 f|^{1/2}$ we have $\|\widehat{\nabla \la^{1-\alpha} g}\|_{\ell^1(\ZZ^2)} \le \bar{\ka}^{3-\alpha/2}|\la^{-\alpha/2}g|$. {We then treat the sum of the resulting terms from $I$ over $|j-k|\leq1$} using \eqref{T:Bernstein} with $\sigma=0$, {and the Cauchy-Schwarz inequality}. {Similarly, we bound the sum of the resulting terms from $II$ and $III$ over $|j-k|\leq1$ using the Cauchy-Schwarz inequality and the fact that $c_j\in\ell^2(\ZZ)$. Finally, upon taking time-averages}, we conclude 
\begin{align} \label{ineq:bterm}
 \langle (u\cdotp\nabla \te, \la^{-\alpha}g) \rangle \le  C \bar{\ka}^{3-\frac{\alpha}{2}}{L^{\beta}} |\la^{-\frac{\alpha}{2}}g| \langle |\la^{\frac{\beta-2}{2}}\te|^2 \rangle 
\end{align}
for {some} dimensionless constant $C$.

 Now \eqref{ineq:etaineq}, \eqref{eq:eq1_la_nalpha_g} and \eqref{ineq:bterm} gives us
    \begin{equation*} 
    \begin{split}
    \frac{1}{L^2} |\la^{-\frac{\alpha}{2}} g|^2 
    & \le \ga \eta + \frac{C}{L^2} \bar{\ka}^{3-\frac{\alpha}{2}}{L^{\beta}} |\la^{-\frac{\alpha}{2}}g| \langle |\la^{\frac{\beta-2}{2}}\te|^2 \rangle 
    \\
    & \le \ga \frac{U}{L^2}\bar{\ka}^{\frac{2+\alpha-\beta}{2}} {L^{1-\frac{\beta}2}} |\la^{-\frac{\alpha}{2}}g| + \frac{C}{L^2} \bar{\ka}^{3-\frac{\alpha}{2}}{L^{\beta}} |\la^{-\frac{\alpha}{2}}g| \langle |\la^{\frac{\beta-2}{2}}\te|^2 \rangle .
    \end{split} 
    \end{equation*}
{The definition of $U$ from \eqref{disslawgSQG} then implies}
    \begin{equation} \label{ineq:g}
        \frac{1}{L^2} |\la^{-\frac{\alpha}{2}} g| \le \ga \frac{U}{L^2}\bar{\ka}^{\frac{2+\alpha-\beta}{2}}{L^{1-\frac{\beta}2}} + C \bar{\ka}^{3-\frac{\alpha}{2}} U^2.
    \end{equation}
Suppose that $U < \ga \bar{\ka}^{\alpha+\frac{\beta}{2}-2} {L^{\frac{\beta}2-1}}$, then apply \eqref{def:G} and \eqref{ineq:g} to get
    \begin{equation*}
     \frac{1}{L^2} |\la^{-\frac{\alpha}{2}} g|  <  \frac{\ga^2}{L^2}\bar{\ka}^{\frac{3\alpha-2}{2}} + C{\frac{\gam^2}{L^2}\bkap^{\frac{3\al-2+2\be}2}L^\be=\frac{\gam^2}{L^2}\bkap^{\frac{3\al-2}{2}}\left(1+C\bkap^\be L^\be\right)}.
    \end{equation*}
{It follows that}
    \begin{align}\label{ineq:g1}
  |\la^{-\frac{\alpha}{2}} g| < \ga^2 \bkap^{\frac{3\al-2}{2}}\left[1 + {\frac{C}{(2\pi)^\be}}\left(\frac{\bar{\ka}}{\kap_0}\right)^{\beta}\right]
    \end{align}
{Now, define}
    \begin{align}\notag
        \Gb_1\overset{\text{def}}{=}\left(\frac{\bar{\ka}}{\kap_0}\right)^{\frac{3\alpha-2}{2}}\left[1 + C\left(\frac{\bar{\ka}}{\kap_0}\right)^{\beta}\right].
    \end{align}
{It follows from \eqref{ineq:g1} that}
   \begin{equation*}
    G  < \Gb_1
    \end{equation*}            
{In other words, we have shown that if $G \ge \underline{G}$, then $U \ge \ga \bar{\ka}^{\alpha+\frac{\beta}{2}-2} L^{\frac{\beta}2-1}$

Therefore, if $G\geq \Gb_1$, then upon returning to  \eqref{ineq:g} we deduce}
    \begin{equation} \label{ineq:g2}
        \frac{1}{L^2} |\la^{-\frac{\alpha}{2}} g| \le U^2 \bar{\ka}^{3-\frac{\alpha}{2}} \left[{\frac{1}{(\bar{\ka}L)^{\beta}}}+C\right].
    \end{equation}
{Finally, by applying \eqref{ineq:g2} in \eqref{ineq:etaineq} we deduce}
    \begin{equation*}
    \begin{split}
        \eta 
        \leq\frac{U}{L^2} \bar{\ka}^{\frac{2+\alpha-\beta}{2}}{L^{1-\frac{\beta}{2}}} |\la^{-\frac{\alpha}{2}}g|
        \leq {\frac{U^3}{L^3}\left(\bkap L\right)^{4-\frac{\be}2}\left[\frac{1}{(\bkap L)^\be}+C\right]},
    \end{split}
    \end{equation*}
{as desired.}

For the case $\be=0$, we can instead use the following:
\begin{align}
    (u\cdotp\nabla\tht,\Lam^{-\al}g)&=(\nabla^\perp\psi\cdotp\nabla\De\psi,\Lam^{-\al}g)=-(\nabla^\perp\bdy_j\psi\cdotp\nabla\bdy_j\psi,\Lam^{-\al}g)-(\nabla^\perp\psi\cdotp\nabla\bdy_j\psi,\Lam^{-\al}\bdy_jg)\notag
    \\
    &=-(\nabla^\perp\psi\cdotp\nabla\bdy_j\psi,\Lam^{-\al}\bdy_jg)=(\nabla^\perp\psi\cdotp\nabla\Lam^{-\al}\bdy_jg,\bdy_j\psi).\notag
\end{align}
Then
    \begin{align}
        |\lb(u\cdotp\nabla\tht,\Lam^{-\al}g)\rb|&\leq \lb|u|^2\rb|\Lam^{2-\al}g|_\infty\notag
    \end{align}
Note that when $\be=0$, $|u|^2=|\Lam^{\be-1}\tht|^2=|\Lam^{\frac{\be-2}{2}}\tht|^2$. This eventually yields the analog of \eqref{ineq:bterm}. The analysis then proceeds as above.
\end{proof}


It was shown in \cite{Dascaliuc2009} that if the Kolmogorov $5/3$ spectrum holds, then \eqref{rel:etau_3DNSE} holds for the 3D NSE.  Similarly, for the 2D NSE it was shown in \cite{Dascaliuc2008} that if $\mathcal{E}(\ka) \sim \eta^{2/3}\ka^{-3}$ holds, then 
\begin{equation}\label{uptolog}
 \frac{U_{\text{2D}}^3}{L^3} \lesssim (\log G)^{15/4} \eta 
\end{equation}
so that \eqref{rel:etau_NSE} holds up to a logarithm in $\gamma$ for fixed $L$ and force $g$.  That we should be unable to extend this result to the gSQG is suggested by the numerical tests which follow.

\subsection{Numerical tests of \eqref{rel:etau_NSE}, \eqref{rel:etau_gSQG}}
We test these relations by plotting the quotient of the two sides of the relation over a range of $\gamma$ values.

Figure \ref{fig:etau_side_by_side} (right - plotted in gray) confirms that  \eqref{rel:etau_NSE} holds for the NSE as the quotient remains nearly constant over several decades of viscosity values.  The slight increase as $\gamma$ decreases is consistent with the log term in \eqref{uptolog}.  The quotients plotted for the SQG and gSQG in Figure \ref{fig:etau_side_by_side} (left and right, resp.) show marked increase as $\gamma$ decreases, which is consistent however with  \eqref{rel:etau_gSQG}.

\setlength{\intextsep}{1pt}
\begin{figure}[H]
\centerline{
\includegraphics[width=8.5cm, height=6.5cm]{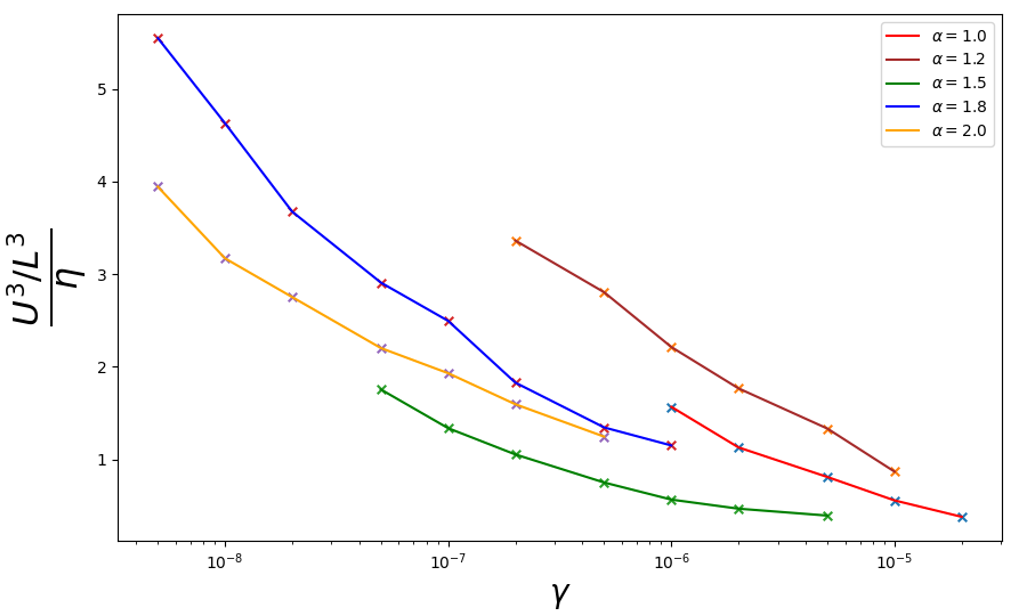} \
\includegraphics[width=8.5cm, height=6.5cm]{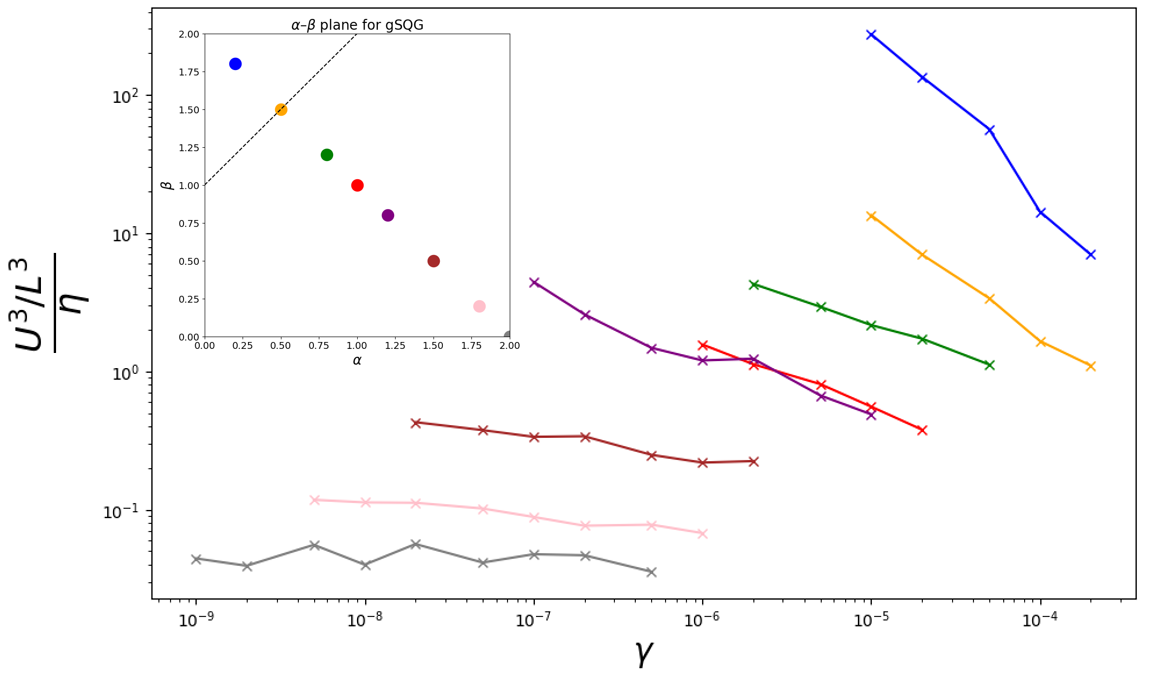}
}
\caption{
Test of \eqref{rel:etau_NSE}, \eqref{rel:etau_gSQG}. Left: SQG. Right: gSQG, $N=16384$
}
\label{fig:etau_side_by_side}
\end{figure}

\section{Relating $\ka_\eta$ to $G$}\label{sect:kapeta:G}
\subsection{Analytical Support}
 We start with upper and lower bounds on $\ka_\eta$
that {essentially} hold {unconditionally in a context consistent with turbulence}.  {These bounds in the} NSE case {were} proved in \cite{FMT93}. Our proof for the gSQG {case exploits}
the estimate \eqref{ineq:bterm} for the nonlinear term {from} subsection \ref{rigdisslaw}.

\begin{proposition} \label{prop:keta_G}
{Suppose that $\kap_\eta\geq\kap_0$. Then}
    \begin{equation}\label{bounds:kaeta}
        G^{\frac{1}{3\alpha}} \lesssim \frac{\kappa_{\eta}}{\ka_0} \le G^{\frac{2}{3\alpha}}\;.
    \end{equation}
\end{proposition}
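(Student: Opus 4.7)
The plan is to derive the two inequalities from two different time-averaged balance identities, with the lower bound leveraging the nonlinear-term estimate \eqref{ineq:bterm} proved earlier.

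For the upper bound, I would begin from the time-averaged enstrophy balance $\gamma\langle|\Lambda^{\alpha/2}\theta|^2\rangle = \langle(g,\theta)\rangle$, which combined with \eqref{def:eta} gives $L^2\eta = \langle(g,\theta)\rangle$. Applying Cauchy--Schwarz through duality,
\begin{equation*}
L^2\eta = \langle(\Lambda^{-\alpha/2}g, \Lambda^{\alpha/2}\theta)\rangle \leq |\Lambda^{-\alpha/2}g|\,\langle|\Lambda^{\alpha/2}\theta|^2\rangle^{1/2} = |\Lambda^{-\alpha/2}g|\,(L^2\eta/\gamma)^{1/2},
\end{equation*}
and rearranging gives $\eta \leq |\Lambda^{-\alpha/2}g|^2/(\gamma L^2)$. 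Since $\kappa_\eta^{3\alpha} = \eta/\gamma^3$, $|\Lambda^{-\alpha/2}g|^2 = G^2\gamma^4\kappa_0^{3\alpha-2}$, and $L^2 = 4\pi^2/\kappa_0^2$, this at once yields $\kappa_\eta/\kappa_0 \leq G^{2/(3\alpha)}$.

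For the lower bound, I would use the companion balance obtained by taking the $H$-inner product of \eqref{maineq} with $\Lambda^{-\alpha}g$ and time-averaging (this is exactly \eqref{eq:eq1_la_nalpha_g} from the proof of Theorem~\ref{thm:etau}):
\begin{equation*}
\gamma L^2\eta + \langle(u\cdot\nabla\theta,\Lambda^{-\alpha}g)\rangle = |\Lambda^{-\alpha/2}g|^2.
\end{equation*}
The trilinear term is controlled via \eqref{ineq:bterm} (using also its $\beta = 0$ counterpart from the end of the proof of Theorem~\ref{thm:etau}), and the resulting $\langle|\Lambda^{(\beta-2)/2}\theta|^2\rangle$ factor is eliminated using the Poincar\'e inequality
\begin{equation*}
\langle|\Lambda^{(\beta-2)/2}\theta|^2\rangle \leq \kappa_0^{\beta-\alpha-2}\langle|\Lambda^{\alpha/2}\theta|^2\rangle = \kappa_0^{\beta-\alpha-2}L^2\eta/\gamma,
\end{equation*}
valid since $\theta$ is mean-zero and $\beta < 2 \leq \alpha + 2$. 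Using $L\kappa_0 = 2\pi$ together with $|\Lambda^{-\alpha/2}g| = \gamma^2\kappa_0^{(3\alpha-2)/2}G$, all dimensional factors should collapse and the inequality should reduce to
\begin{equation*}
|\Lambda^{-\alpha/2}g|^2 \leq \gamma L^2\eta\bigl(1 + C(\bar\kappa/\kappa_0)^{3-\alpha/2}G\bigr)
\end{equation*}
for a dimensionless constant $C$. Solving for $\eta$ and dividing by $\gamma^3$ gives $\kappa_\eta^{3\alpha} \gtrsim \kappa_0^{3\alpha}G^2/(1+G)$.

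To finish, I would perform a case split. When $G \geq 1$, the linear term dominates the denominator, producing $\kappa_\eta/\kappa_0 \gtrsim G^{1/(3\alpha)}$ directly. When $G < 1$, one has $G^{1/(3\alpha)} \leq 1$ and the hypothesis $\kappa_\eta \geq \kappa_0$ immediately yields $\kappa_\eta/\kappa_0 \geq 1 \geq G^{1/(3\alpha)}$. The principal obstacle is the dimensional bookkeeping: one must verify that combining the prefactor $C\bar\kappa^{3-\alpha/2}L^\beta$ from \eqref{ineq:bterm} with the Poincar\'e factor $\kappa_0^{\beta-\alpha-2}$ and with $|\Lambda^{-\alpha/2}g| = \gamma^2\kappa_0^{(3\alpha-2)/2}G$ really does consolidate into a dimensionless multiple of $G$ times a harmless ratio $(\bar\kappa/\kappa_0)^{3-\alpha/2}$, which is precisely what enables the dichotomy above to close the proof.
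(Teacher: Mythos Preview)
Your proposal is correct and follows essentially the same route as the paper: the upper bound comes from Cauchy--Schwarz on the enstrophy balance, and the lower bound from the $\Lambda^{-\alpha}g$-tested identity \eqref{eq:eq1_la_nalpha_g} together with the nonlinear estimate \eqref{ineq:bterm} and Poincar\'e. The only difference is a harmless algebraic rearrangement---the paper first bounds $\gamma L^2\eta$ above via \eqref{ineq:etaineq2} to reach $G \le C_1(\kappa_\eta/\kappa_0)^{3\alpha/2} + C_2(\kappa_\eta/\kappa_0)^{3\alpha}$ and then applies Young's inequality, whereas you keep that term and obtain $(\kappa_\eta/\kappa_0)^{3\alpha} \gtrsim G^2/(1+G)$ directly; both versions close via the same case split on $G$ using the hypothesis $\kappa_\eta \ge \kappa_0$.
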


\begin{proof}
{Recall $\eta$ as defined in \eqref{def:eta}}. We revisit \eqref{ineq:etaineq} to {alternatively estimate} 
    \begin{equation} \label{ineq:etaineq2}
        \eta = \frac{1}{L^2} \langle (g, \te) \rangle \le \frac{1}{L^2} |\langle|\la^{\frac{\alpha}{2}}\te|\rangle| |\la^{-\frac{\alpha}{2}}g| 
        = \frac{1}{L\ga^{1/2}} \eta^{\frac{1}{2}} |\la^{-\frac{\alpha}{2}}g| .
    \end{equation}
{In particular}
     \begin{equation} \label{ineq:etahalf}
        \eta \le \frac{1}{L^2 \ga} |\la^{-\frac{\alpha}{2}}g|^2.
    \end{equation}
{The claimed upper bound then follows from \eqref{def:keta} and \eqref{def:G}.}
    
{For the lower bound, observe that since $\alpha \ge \beta-2$, we} have from \eqref{ineq:bterm} {that}, 
    \begin{align} \label{ineq:bterm2}
    \langle (u \cdotp\nabla \te, \la^{-\alpha}g) \rangle  & \le C \bar{\ka}^{3-\frac{\alpha}{2}}\ka_0^{-\beta} |\la^{-\frac{\alpha}{2}}g| \langle |\la^{\frac{\beta-2}{2}}\te|^2 \rangle \notag
    \\
    & \le C \bar{\ka}^{3-\frac{\alpha}{2}} \ka_0^{-2-\alpha} |\la^{-\frac{\alpha}{2}}g| \langle |\la^{\frac{\alpha}{2}}\te|^2 \rangle \notag
    \\
    &\leq C \bar{\ka}^{3-\frac{\alpha}{2}} L^{2+\alpha} |\la^{-\frac{\alpha}{2}}g| \langle |\la^{\frac{\alpha}{2}}\te|^2 \rangle.   
    \end{align}   
Applying \eqref{ineq:etaineq2} and \eqref{ineq:bterm2} to \eqref{eq:eq1_la_nalpha_g}, we get
    \begin{equation*} 
         \frac{1}{L^2} |\la^{-\frac{\alpha}{2}} g|^2 \le \frac{\ga^{1/2}}{L} \eta^{\frac{1}{2}} |\la^{-\frac{\alpha}{2}}g| + \frac{C}{L^2} \bar{\ka}^{3-\frac{\alpha}{2}} \ka_0^{-2-\alpha} |\la^{-\frac{\alpha}{2}}g| \langle |\la^{\frac{\alpha}{2}}\te|^2 \rangle.
    \end{equation*}
In particular
    \begin{align}\notag
        \frac{|\la^{-\frac{\alpha}{2}}g|}{\ga^2 \kappa_0^{\frac{3\alpha-2}{2}}} \le \frac{L}{\kappa_0^{\frac{3\alpha-2}{2}}} \left(\frac{\eta}{\ga^3}\right)^{\frac{1}{2}} + \frac{CL^2}{\kappa_0^{\frac{3\alpha-2}{2}}}\bar{\ka}^{3-\frac{\alpha}{2}} \ka_0^{-2-\alpha}\left(\frac{\eta}{\ga^3}\right)
    \end{align}
Therefore
    \begin{align}\label{ineq:g3}
        G \le C_1 \left(\frac{\ka_\eta}{\ka_0}\right)^{\frac{3\alpha}{2}} + C_2 \left(\frac{\ka_\eta}{\ka_0}\right)^{3\alpha}\leq \frac{1}2G+\left(\frac{C_1^2}{2G}+C_2\right)\left(\frac{\ka_\eta}{\ka_0}\right)^{3\alpha},
    \end{align}
{for some non-dimensional constants $C_1, C_2$. It is now clear that if $\kap_\eta\geq\kap_0$, then $G^{\frac{1}{3\alpha}} \lesssim \frac{\ka_\eta}{\ka_0}$.}  
\end{proof}

{
\begin{remark}
Observe that if $\kap_\eta<\kap_0$, then \eqref{ineq:g3} implies that $G\leq (C_1+C_2)(\kap_\eta/\kap_0)^{3\al/2}$. Thus, in this case
    \begin{align}\notag
        G^{\frac{2}{3\al}}\sim\frac{\kap_\eta}{\kap_0}.
    \end{align}
\end{remark}
}

\subsection{Refinement of analytical bounds under turbulence}
In the case of the NSE, the upper and lower bounds on $\ka_\sigma$ and $\ka_\eta$ can be sharpened to the same power in $G$ up to a logarithm, provided \eqref{etasiglog} holds.  
The following {is}  proved in \cite{Dascaliuc2008}.
\begin{theorem}\label{DFJketaksig}
    If \eqref{etasiglog} holds and $G\gtrsim(\kbar/\ka_0)^2 $, then
for the 2D NSE
\begin{equation}\label{ksigNSE}
    \left(\frac{\ka_0}{\kbar}\right)^{5/4} \frac{G^{1/4}}{(\ln G)^{3/2}}  \lesssim \frac{\ka_\sigma}{\ka_0} \lesssim 
\left(\frac{\kbar}{\ka_0}\right)^{5/4} {G^{1/4}}{(\ln G)^{3/8}}
\end{equation}
\begin{equation}\label{ketaNSE}
\left(\frac{\ka_0}{\kbar}\right)^{1/4} \frac{G^{1/4}}{(\ln G)^{3/2}}  \lesssim \frac{\ka_\eta}{\ka_0} \lesssim 
\left(\frac{\kbar}{\ka_0}\right)^{1/4} {G^{1/4}}{(\ln G)^{1/8}}.
    \end{equation}
\end{theorem}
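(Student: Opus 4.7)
The plan is to exploit the assumed Kraichnan spectrum $\mathcal{E}(\ka)\sim\eta^{2/3}\ka^{-3}$ on $[\ka_1,\ka_\eta]$ together with the energy and enstrophy balances, Bernstein's inequality for the spectrally localized forcing, and the relation \eqref{etasiglog}, to extract the advertised $G^{1/4}$ bounds. The identities to keep at hand are $\ka_\sigma^2 = \eta/(\gam H)$, where $H := L^{-2}\lb|\theta|^2\rb$ (from \eqref{def:eta} and \eqref{def:ksigma} at $\al=2$), along with the two spectral integrals obtained by integrating $\mathcal{E}$ against $\chi^0$ and $\chi^2$ over the inertial range:
\begin{align*}
    U_{\text{2D}}^2 = \int\mathcal{E}(\chi)\,d\chi \sim \eta^{2/3}\ka_1^{-2},\qquad H \sim \eta^{2/3}\ln(\ka_\eta/\ka_1).
\end{align*}

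For the upper bound on $\ka_\eta$, I would begin with the time-averaged enstrophy balance $\eta L^2 = \lb(g,\theta)\rb$, apply Cauchy-Schwarz to obtain $\eta^2 L^2 \leq |g|^2 H$, bound $|g|\lesssim\kbar|\Lam^{-1}g|\sim\kbar\gam^2\ka_0^2 G$ via Bernstein (since $\supp\hat g\subset[\kapb,\kbar]$), and then substitute the spectral expression for $H$. Solving for $\eta$ produces $\ka_\eta^6\lesssim(\kbar/\ka_0)^{3/2}\ka_0^6 G^{3/2}\ln^{3/4}(\ka_\eta/\ka_1)$, and invoking the a priori bound $\ka_\eta/\ka_0\lesssim G^{1/3}$ from Proposition~\ref{prop:keta_G} allows one to replace the logarithm by $\ln G$, yielding the stated upper bound on $\ka_\eta$. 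The upper bound on $\ka_\sigma$ then follows by rearranging \eqref{etasiglog} as $\ka_\sigma^2\sim\ka_\eta^2/\ln(\ka_\eta/\ka_1)$; this in fact gives a sharper bound than the stated one, so the stated form is simply conservative.

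The harder direction is the lower bound, since the standard a priori estimates supply only upper bounds on $\eta$, $H$, and $U_{\text{2D}}$. The central input is the reverse dissipation inequality \eqref{uptolog}, $U_{\text{2D}}^3/L^3\lesssim(\log G)^{15/4}\eta$, which together with Theorem~\ref{thm:etau} effectively pins $\eta$ to $U_{\text{2D}}^3/L^3$ up to logarithmic factors. To turn this into a lower bound in $G$, I would combine it with the inertial-range lower bound $U_{\text{2D}}^2\gtrsim\eta^{2/3}/\kbar^2$ (obtained from the positive contribution of $[\ka_1,\ka_\eta]$ to $\int\mathcal{E}$) and with the energy balance $\gam HL^2=\lb(\Lam^{-2}g,\theta)\rb$ to relay $\eta$ back to $|\Lam^{-2}g|\sim\gam^2\ka_0 G$. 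The principal obstacle I anticipate is that the lower bounds on $\ka_\eta$ and $\ka_\sigma$ cannot both be recovered from each other via \eqref{etasiglog} alone — the logarithmic powers do not match — so each must be derived through its own independent chain of Cauchy-Schwarz, Bernstein, and spectrum-integration estimates. The delicate bookkeeping lies in tracking the $15/4$ exponent from \eqref{uptolog}, the intrinsic $\ln$ from the spectral integral for $H$, and the $\ln G$ from Proposition~\ref{prop:keta_G} so that they consolidate into the single $(\log G)^{3/2}$ factor appearing in the denominators of the stated lower bounds.
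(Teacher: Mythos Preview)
The paper does not prove Theorem~\ref{DFJketaksig}; it is quoted from \cite{Dascaliuc2008}. The natural point of comparison is the paper's proof of the gSQG analogue, Theorem~\ref{main:thm_G_wavenumbers}. Your scheme for the upper bounds is correct and tracks that proof: Cauchy--Schwarz plus Bernstein in the enstrophy balance gives \eqref{thm4:lathetabd}, and inserting this into \eqref{etasiglog} (which plays the role of \eqref{thm4:sim1} when $\be=0$) closes the upper estimate on $\ka_\eta$ with the correct $(\ln G)^{1/8}$.

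The lower-bound plan has a real gap. Combining \eqref{uptolog} with $U_{\text{2D}}^2\gtrsim\eta^{2/3}/\kbar^2$ is self-referential: substituting one into the other produces only $\eta\gtrsim\eta/\bigl((\kbar L)^3(\log G)^{15/4}\bigr)$, which contains no information about $G$. The energy balance $\ga\lb|\theta|^2\rb=\lb(\Lam^{-2}g,\theta)\rb$ you cite yields, via Cauchy--Schwarz, only \emph{upper} bounds on $H$, not lower. What is missing is the balance obtained by testing \eqref{maineq} against $\Lam^{-2}g$, namely \eqref{eq:eq1_la_nalpha_g} at $\al=2$: from $|\Lam^{-1}g|^2=\ga\lb(g,\theta)\rb+\lb(u\cdotp\nabla\theta,\Lam^{-2}g)\rb$ one bounds both right-hand terms \emph{above} by powers of $\langle|\theta|^2\rangle^{1/2}$ (the nonlinear one via \eqref{ineq:bterm} or its $\be=0$ variant), solves the resulting quadratic inequality, and obtains $\langle|\theta|^2\rangle\gtrsim\ga^2\ka_0^2(\ka_0/\kbar)^2 G$ directly. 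This is exactly the mechanism behind the lower half of \eqref{thm4:q1}, and it is the step your outline lacks. Note also that \eqref{uptolog} is itself a consequence of the same circle of estimates in \cite{Dascaliuc2008}, so invoking it to establish Theorem~\ref{DFJketaksig} is likely circular.
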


Here, we have an analogue for the gSQG.

\begin{theorem} \label{main:thm_G_wavenumbers}
   {There exists $\Gb_2$, depending only on $\kap_0,\bkap,\al$, such that if $ G \gtrsim \Gb_2 $ and \eqref{rel:ksigmaeta} holds}, then
    \begin{align} \label{G:ksigma}
      \beta^{\frac{3}{\al}} \left(\frac{\ka_0}{\bar{\ka}}\right)^{\frac{18-\al-4\beta}{4\al}} G^{1/2\alpha} &\lesssim \frac{\ka_\sigma}{\ka_0} \lesssim \left(\frac{1}{\beta}\right)^{\frac{3}{4\al}} \left(\frac{\bar{\ka}}{\ka_0}\right)^{\frac{12+\al-\beta}{4\al}} G^{1/2\alpha}
        \\
        \beta^{\frac{1}{2\al}} \left(\frac{\ka_0}{\bar{\ka}}\right)^{\frac{18-3\al-2\beta}{12\alpha}} G^{1/2\alpha} &\lesssim \frac{\ka_\eta}{\ka_0} \lesssim \left(\frac{1}{\beta}\right)^{\frac{1}{4\al}} \left(\frac{\bar{\ka}}{\ka_0}\right)^{\frac{3\alpha-\beta}{12\alpha}} G^{1/2\alpha}. \label{G:keta}
    \end{align}
\end{theorem}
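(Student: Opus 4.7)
The plan is to combine three ingredients from the preceding analysis: (i) the spectrum-based relation \eqref{rel:ksigmaeta} between $\ka_\eta$ and $\ka_\sigma$ coming from Proposition \ref{main:prop2}; (ii) the dissipation law $\eta \lesssim U^3/L^3$ of Theorem \ref{thm:etau}, valid once $G \gtrsim \Gb_1$; and (iii) the a priori bounds \eqref{prod_G} and \eqref{prod_Gstar} on $|\te|$ and $|\la^{(\be-2)/2}\te|$ in terms of the Grashof numbers $G$ and $G_*$ respectively, together with the conversion \eqref{eq:G:Gstar:equiv}. This parallels the structure of Theorem \ref{DFJketaksig} for the NSE from \cite{Dascaliuc2008}.

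For the upper bound on $\ka_\eta$, I would start from $\eta = \ga^3\ka_\eta^{3\al}$ (by definition \eqref{def:keta}). Applying $\eta \lesssim U^3/L^3$, I bound $U$ through the definition \eqref{disslawgSQG} together with \eqref{prod_Gstar}, obtaining $U \lesssim \ga \ka_0^{\al-1}G_*$; the conversion \eqref{eq:G:Gstar:equiv} replaces $G_*$ by $G$. Independently, the argument in the proof of Proposition \ref{main:prop2} yields the Kraichnan identity
\begin{equation*}
E = L^{-2}\langle|\te|^2\rangle \sim \frac{c_{Kr}}{\be}\left(\frac{\ka_0}{\bar\ka}\right)^{\be/3}\eta^{2/3}.
\end{equation*}
Combining this with the enstrophy bound $E \lesssim \ga^2\ka_0^{2\al}G^2$ from \eqref{prod_G}, together with the definitional identity $\eta = \ga\ka_\sigma^\al E$ (from \eqref{def:ksigma}), produces a system that closes on $\ka_\eta$ and yields the sharpened exponent $G^{1/(2\al)}$ with the indicated $\be$ and $(\bar\ka/\ka_0)$ prefactors. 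The lower bound on $\ka_\eta$ follows analogously from the two-sided Kraichnan bound in Proposition \ref{main:prop2} and a matching lower bound on $E$ coming from the enstrophy balance \eqref{def_enstrophy} once $G \gtrsim \Gb_2$ is large enough to ensure a nontrivial level of forcing. The bounds \eqref{G:ksigma} on $\ka_\sigma$ then follow immediately by applying \eqref{rel:ksigmaeta} to these $\ka_\eta$ bounds, which contributes the additional $\be^{\pm 3/(4\al)}$ and $(\bar\ka/\ka_0)^{\pm \be/(3\al)}$ prefactors distinguishing them from the $\ka_\eta$ bounds.

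The main obstacle is the subtle algebraic combination that yields the sharp $G^{1/(2\al)}$ exponent rather than the weaker $G^{1/\al}$ exponent obtainable from either the dissipation law or the Kraichnan identity individually; this requires a self-consistent use of $\eta = \ga\ka_\sigma^\al E$ analogous to the logarithmic self-consistency \eqref{etasiglog} in the NSE case. Tracking the precise exponents in $\be$ and $\bar\ka/\ka_0$, such as $(18-\al-4\be)/(4\al)$ and $(3\al-\be)/(12\al)$ in \eqref{G:ksigma}--\eqref{G:keta}, requires careful bookkeeping of each conversion step. Finally, the threshold $\Gb_2$ must be chosen at least as large as $\Gb_1$ from Theorem \ref{thm:etau} and also large enough that the derived upper bound on $\ka_\eta$ automatically implies the hypothesis $\ka_\eta \ge 2^{3/\be}\bar\ka$ of Proposition \ref{main:prop2}; these two requirements together give an explicit expression for $\Gb_2$ in terms of $\ka_0, \bar\ka, \al, \be$.
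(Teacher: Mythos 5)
There is a genuine gap, and you have in fact put your finger on it yourself without resolving it: the ingredients you propose do not close to the exponent $G^{1/(2\al)}$. The a priori bound \eqref{prod_G} gives $\langle|\te|^2\rangle\lesssim \ga^2\ka_0^{2\al-2}G^2$, and feeding that into the Kraichnan identity $E\sim \be^{-1}(\ka_0/\bar\ka)^{\be/3}\eta^{2/3}$ yields $\eta\lesssim G^3$, i.e.\ $\ka_\eta/\ka_0\lesssim G^{1/\al}$; the dissipation law of Theorem \ref{thm:etau} combined with \eqref{prod_Gstar} gives exactly the same power, and the identity $\eta=\ga\ka_\sigma^\al E$ combined with the Kraichnan identity merely reproduces \eqref{rel:ksigmaeta} rather than producing new information. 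The step the paper actually uses for the upper bound is the time-averaged enstrophy balance \eqref{eq:eq1_theta} as an \emph{identity}: $\ga\langle|\la^{\al/2}\te|^2\rangle=\langle(g,\te)\rangle\le|g|\langle|\te|^2\rangle^{1/2}$, which bounds the dissipation by $G\,\langle|\te|^2\rangle^{1/2}$ (a square root of the enstrophy, not the a priori $G^2$). Inserting this into \eqref{rel:ksigmaeta}, rewritten as $\langle|\te|^2\rangle\lesssim\be^{-1}(\ka_0/\bar\ka)^{\be/3}(\ga/\ka_0)^{2/3}\langle|\la^{\al/2}\te|^2\rangle^{2/3}$, closes a self-consistent inequality that yields $\langle|\te|^2\rangle\lesssim G$ — one full power of $G$ better than \eqref{prod_G} — and from there $\langle|\la^{\al/2}\te|^2\rangle\lesssim G^{3/2}$, hence $\ka_\eta\lesssim G^{1/(2\al)}$ and $\ka_\sigma\lesssim G^{1/(2\al)}$.

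The lower bounds are also not obtainable the way you describe. "A nontrivial level of forcing" in \eqref{def_enstrophy} gives no lower bound on $\langle|\te|^2\rangle$ by itself. The paper instead tests \eqref{maineq} against $\la^{-\al}g$ to get \eqref{eq:eq1_la_nalpha_g}, controls the nonlinear term with the commutator estimate \eqref{ineq:bterm}, and arrives at a quadratic inequality $0\le A_2y^2+A_1y-A_3$ for $y=\langle|\te|^2\rangle^{1/2}$ whose positive root gives $\langle|\te|^2\rangle\gtrsim G$ once $A_2A_3\ge 2A_1^2$; this last condition is precisely what defines $\Gb_2=\frac{2}{C}(\bar\ka/\ka_0)^{(3\al-6)/2}$. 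In particular, Theorem \ref{thm:etau} and the threshold $\Gb_1$ play no role in the paper's proof, so your plan to take $\Gb_2\ge\Gb_1$ is addressing the wrong constraint. Your final step — passing between the $\ka_\eta$ and $\ka_\sigma$ bounds via \eqref{rel:ksigmaeta} — is sound, but only once two-sided bounds on both $\langle|\te|^2\rangle$ and $\langle|\la^{\al/2}\te|^2\rangle$ of the correct order in $G$ are in hand.
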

Note that by Proposition \ref{prop:keta_G} we can guarantee $\ka_\eta$ is large by taking $G$ sufficiently large. {Thus,} by Proposition \ref{main:prop2} (resp. Proposition \ref{NSEcase}), the technical assumption \eqref{rel:ksigmaeta} (resp. \eqref{etasiglog}) can be replaced by the commonly observed spectral assumption \eqref{withcKr}.
\begin{proof}
We first show that
\begin{equation} \label{thm4:q1}
    \left(\frac{\ka_0}{\bar{\ka}}\right)^{3-\frac{\al}{2}} G \lesssim \frac{\langle |\te|^2 \rangle}{\ga^2 \ka_0^{2\al-2}} \lesssim \left(\frac{1}{\beta}\right)^{\frac{3}{2}} \left(\frac{\bar{\ka}}{\ka_0}\right)^{\frac{\al-\beta}{2}} G 
\end{equation}
and
\begin{equation} \label{thm4:q2}
    \beta^{3/2} \left(\frac{\ka_0}{\bar{\ka}}\right)^{\frac{18-3\al-2\beta}{4}} G^{3/2} \lesssim \frac{\langle |\la^{\frac{\al}{2}}\te|^2 \rangle}{\ga^2 \ka_0^{3\al-2}} \lesssim \left(\frac{1}{\beta}\right)^{\frac{3}{4}} \left(\frac{\bar{\ka}}{\ka_0}\right)^{\frac{3\al-\beta}{4}} G^{\frac{3}{2}}
\end{equation}

Towards the upper bound in \eqref{thm4:q1}, we have from \eqref{eq:eq1_theta}
\begin{equation} \label{thm4:lathetabd}
    \langle |\la^{\frac{\al}{2}}\te|^2 \rangle = \frac{1}{\ga} \langle (g,\te) \rangle \le \frac{1}{\ga} |g| \langle |\te|^2 \rangle^{1/2} \le \ga \ka_0^{\frac{3\alpha-2}{2}} \bar{\ka}^{\frac{\alpha}{2}} G\langle |\te|^2 \rangle^{1/2} .
\end{equation}
From the upper bound in \eqref{rel:ksigmaeta}, we have $\kappa_\eta^{\alpha} \lesssim \frac{1}{\beta} \left(\frac{\ka_0}{\bar{\ka}}\right)^{\frac{\beta}{3}} \kappa_\sigma^{\alpha}$ and so together with \eqref{thm4:lathetabd}, we can get
\begin{equation} \label{thm4:sim1}
    1 \lesssim \frac{\langle |\la^{\frac{\al}{2}}\te|^2 \rangle^{2/3}}{(\ka_0/\ga)^{2/3}\langle |\te|^2 \rangle} \frac{1}{\beta} \left(\frac{\ka_0}{\bar{\ka}}\right)^{\frac{\beta}{3}} \;,
\end{equation}
and hence
\begin{equation} \label{thm4:al_bd}
 1 \lesssim \frac{(\ga \ka_0^{\frac{3\alpha-2}{2}} \bar{\ka}^{\frac{\alpha}{2}} G)^{2/3}}{(\ka_0/\ga)^{2/3}\langle |\te|^2 \rangle ^ {2/3}} \frac{1}{\beta} \left(\frac{\ka_0}{\bar{\ka}}\right)^{\frac{\beta}{3}}
\end{equation}which gives the upper bound in \eqref{thm4:q1}.

{For the lower bound, we apply} \eqref{eq:eq1_la_nalpha_g} and \eqref{ineq:bterm}, {to obtain}
\begin{align*}
    |\la^{-\frac{\al}{2}}g|^2 &\le \ga |g| \langle |\te|^2 \rangle ^ {1/2} + C \bar{\ka}^{3-\frac{\alpha}{2}}\ka_0^{-\beta} |\la^{-\frac{\alpha}{2}}g| \langle |\la^{\frac{\beta-2}{2}}\te|^2 \rangle \\&\le \ga \bar{\ka}^{\frac{\al}{2}} |\la^{-\frac{\alpha}{2}}g| \langle |\te|^2 \rangle ^ {1/2} + C \bar{\ka}^{3-\frac{\alpha}{2}}\ka_0^{-2} |\la^{-\frac{\alpha}{2}}g| \langle |\te|^2 \rangle.
\end{align*}
{This} can be written as 
\begin{equation*}
    \underbrace{\ga^2 \ka_0^{\frac{3\al-2}{2}} G}_{A_3} \le \underbrace{\ga \bar{\ka}^{\frac{\al}{2}}}_{A_1} \langle |\te|^2 \rangle ^ {1/2} + \underbrace{C \bar{\ka}^{3-\frac{\alpha}{2}}\ka_0^{-2}}_{A_2} \langle |\te|^2 \rangle .
\end{equation*}
Denoting $y = \langle |\te|^2 \rangle ^ {1/2}$, we have
\begin{equation*}
    0 \le A_2 y^2 + A_1 y - A_3.
\end{equation*}
{Hence}
\begin{equation*}
y \ge \frac{-A_1 + \sqrt{A_1^2 + 4A_2A_3}}{2A_2} 
\ge \frac{A_1}{A_2} \;,
\end{equation*}
provided {that} $A_2A_3 \ge 2A_1^2$.
This is equivalent to
\begin{equation*} 
    y^2 = \langle |\te|^2 \rangle \ge \frac{\ga^2}{2C} \frac{\ka_0^{\frac{3\al+2}{2}}}{\bar{\ka}^{3-\frac{\al}{2}}} G, \quad \text{whenever}\quad G \ge
        \frac{2}{C}\left(\frac{\bkap}{\kap_0}\right)^{\frac{3\al-6}2}\overset{\text{def}}{=}\Gb_2.
\end{equation*}
Rearranging gives the {claimed}  lower bound in \eqref{thm4:q1}. 

Finally, {to prove \eqref{thm4:q2}}, we can apply the lower bound from \eqref{thm4:q1} in \eqref{thm4:sim1} to obtain the lower bound {claimed in} \eqref{thm4:q2}. {The claimed upper bound in \eqref{thm4:q2} can be established in a similar way}, by applying the upper bound of \eqref{thm4:q1} into \eqref{thm4:lathetabd}.
\end{proof}

\subsection{Numerical tests of \eqref{G:keta}, \eqref{G:ksigma}}

We had observed directly in subsection \ref{ksigketacompute} that \eqref{etasiglog} holds for the NSE.   This could be expected from the spectrum's adherence to \eqref{withcKr} demonstrated in subsection \ref{computedspectrum}.  It follows that Theorem ~\ref{DFJketaksig} should apply for the NSE.  We examine the numerically computed values of $ \kappa_\eta $ and $ \kappa_\sigma $ to see just how close the lower and upper bounds are to each other.  Note that in our computations $\ka_0=1$.  In Figure~\ref{fig:NSE_G_plots} (left), we plot the quotients $R_\eta /\kappa_\eta $ and $L_\eta /\kappa_\eta $ and find that both bounds are roughly within an order of magnitude of $\kappa_\eta $ over four decades of the Grashof number.   The corresponding quotients for $\ka_\sigma$ in Figure~\ref{fig:NSE_G_plots} (right) show a somewhat wider gap of several orders of magnitude. The fact that the plots are nearly flat confirm that for a turbulent (Kraichnan spectrum) NSE flow both $ \kappa_\eta $ and $\ka_\sigma$ scale as $ G^{1/4}$ up to a log. 

\setlength{\intextsep}{1pt}
\begin{figure}[H]
\centerline{
\includegraphics[width=8.5cm, height=6.5cm]{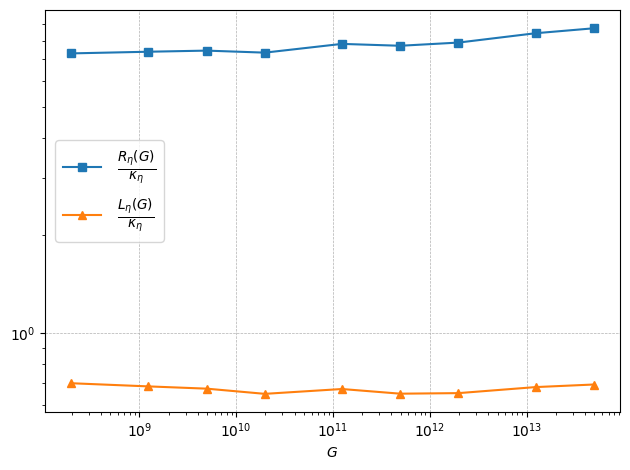} \
\includegraphics[width=8.5cm, height=6.5cm]{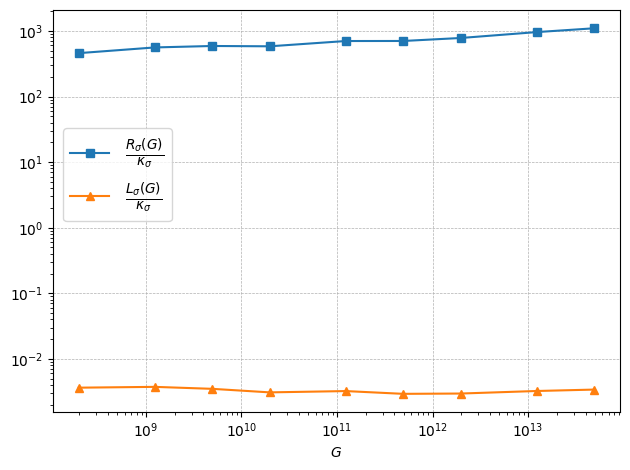}
}
\caption{NSE case, $N=16384$.
Left: Test of \eqref{ketaNSE}. Right: Test of \eqref{ksigNSE} , 
}
\label{fig:NSE_G_plots}
\end{figure}

The corresponding quotients from Theorem ~\ref{main:thm_G_wavenumbers} are plotted for the subcritical SQG in Figure \ref{fig:G_eta_sigma_grid}.   As $\alpha$ decreases toward the critical SQG case, the gap between the upper and lower bounds on both wavenumbers widens by roughly a factor of ten. 

\begin{figure}[H]
\centering

\centerline{
\includegraphics[width=5.5cm, height=5cm]{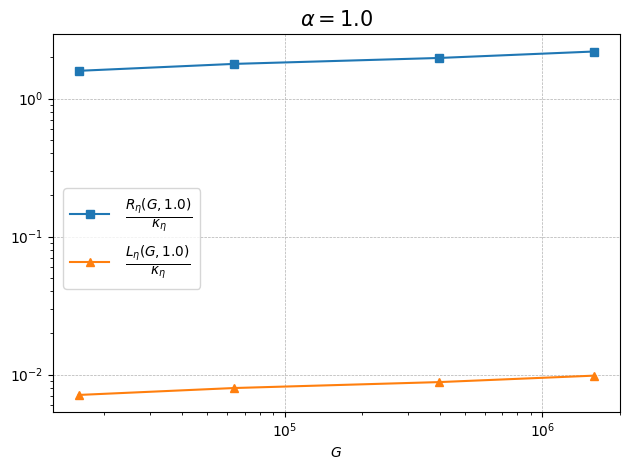}
\
\includegraphics[width=5.5cm, height=5cm]{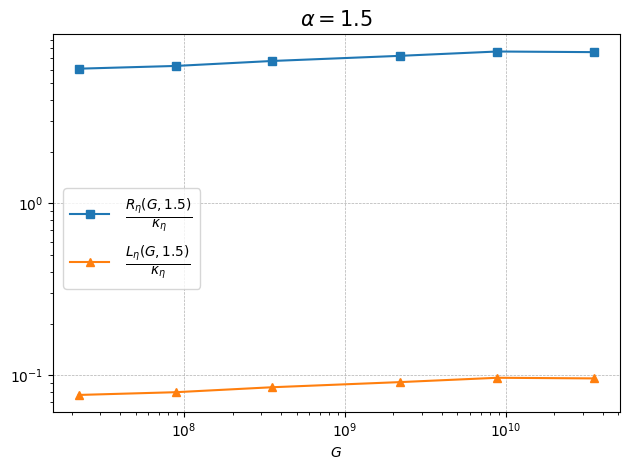}
\
\includegraphics[width=5.5cm, height=5cm]{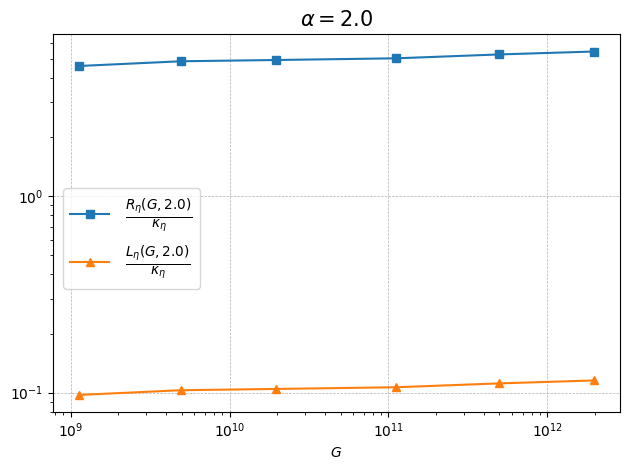}
}

\vspace{5pt} 

\centerline{
\includegraphics[width=5.5cm, height=5cm]{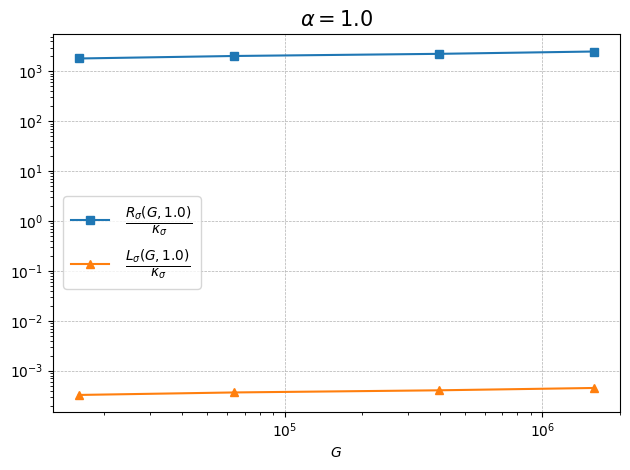}
\
\includegraphics[width=5.5cm, height=5cm]{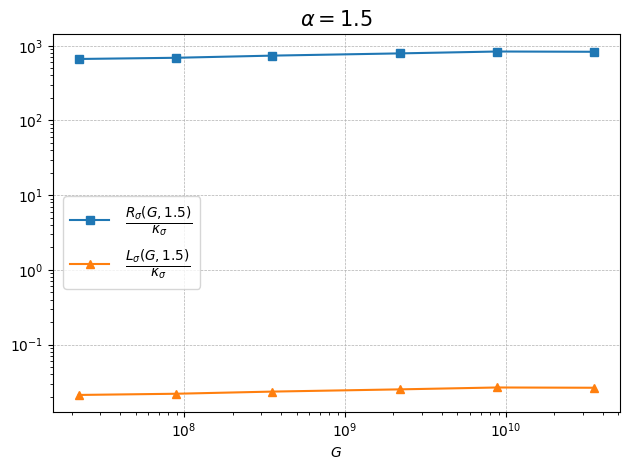}
\
\includegraphics[width=5.5cm, height=5cm]{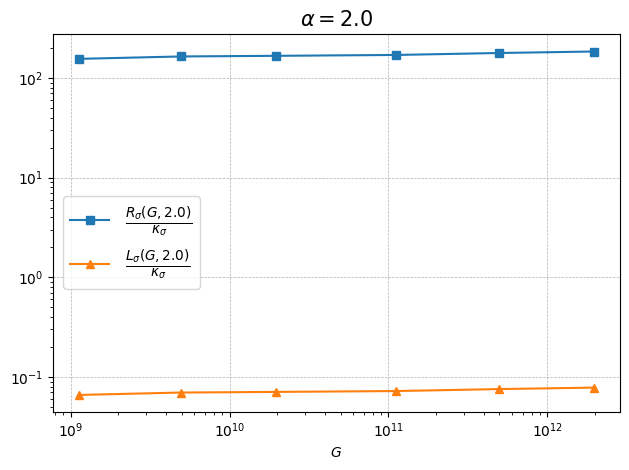}
}

\caption{Comparison of $ \kappa_\eta $ and $ \kappa_\sigma $ with bounds in terms of $ G $ for SQG, $N=16384$. }
\label{fig:G_eta_sigma_grid}
\end{figure}

\begin{figure}[H]
\centering

\centerline{
\includegraphics[width=5.5cm, height=5cm]{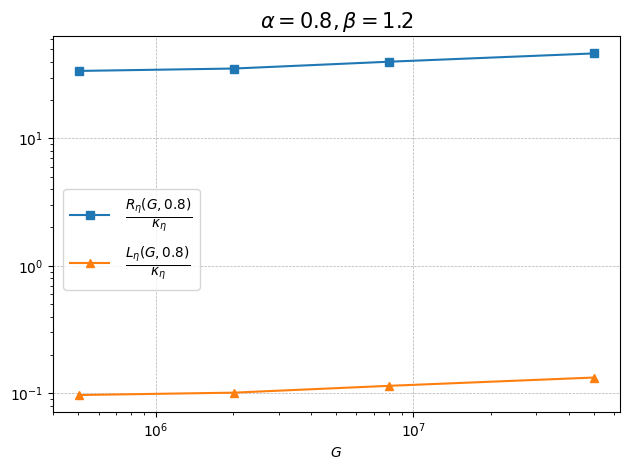}
\
\includegraphics[width=5.5cm, height=5cm]{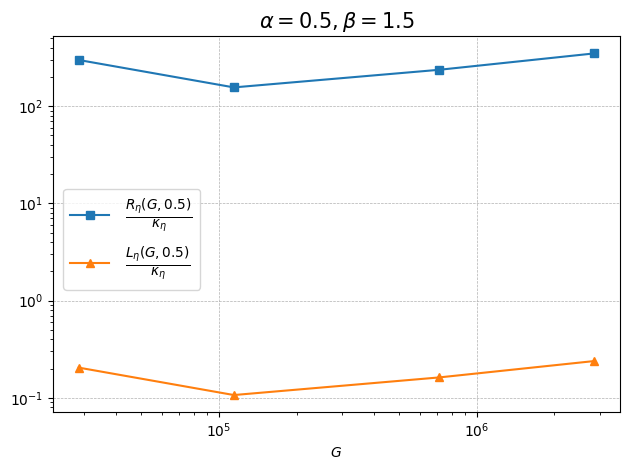}
\
\includegraphics[width=5.5cm, height=5cm]{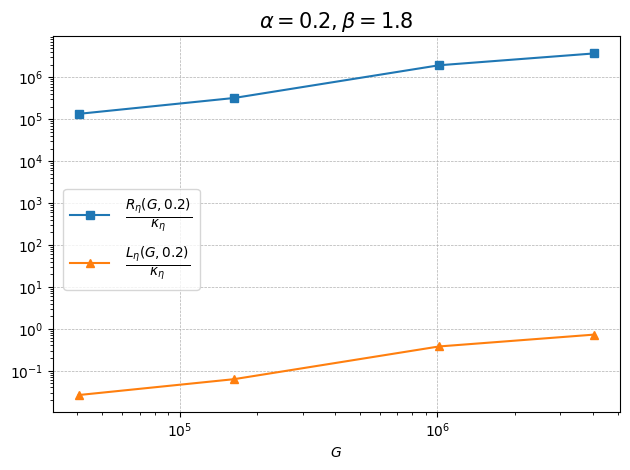}
}

\vspace{5pt} 

\centerline{
\includegraphics[width=5.5cm, height=5cm]{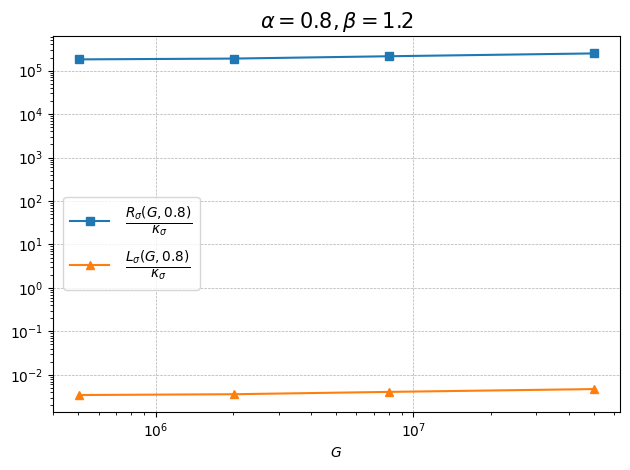}
\
\includegraphics[width=5.5cm, height=5cm]{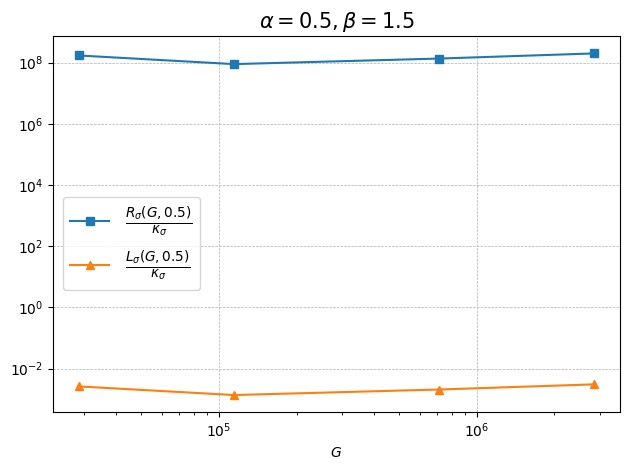}
\
\includegraphics[width=5.5cm, height=5cm]{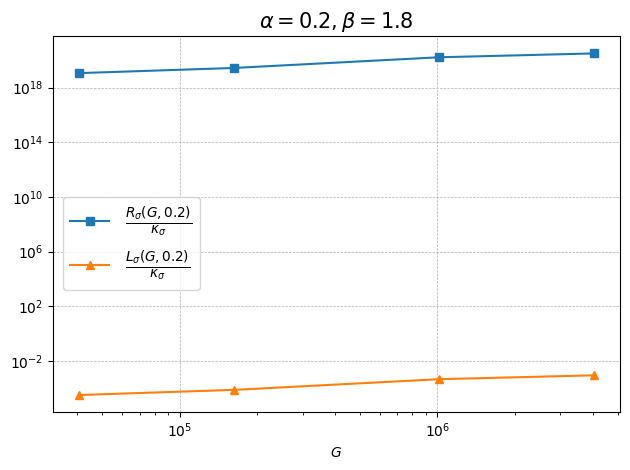}
}

\caption{Comparison of $ \kappa_\eta $ and $ \kappa_\sigma $ with bounds in terms of $ G $ for gSQG, $N=16384$. }
\label{fig:G_keta_ksigma_extended}
\end{figure}

In Figure~\ref{fig:G_keta_ksigma_extended} we consider three samples in the northwest quadrant of the $ (\alpha, \beta) $-plane: one below the critical line $ \beta = 1 + \alpha $, one on the critical line, and one far into the supercritical region.  We see a dramatic widening of the gap between them, particularly for $\ka_\sigma$.  We also note that the quotients increase significantly as $G$ increases, indicating that the wavenumbers grow more slowly than $G^{\frac{1}{2\alpha}}$.  Of course, we had noted a deviation from condition \eqref{rel:ksigmaeta} in the fully nonlinear region, so Theorem~\ref{main:thm_G_wavenumbers} does not apply in this case.

\appendix

\section{Energy Flux}\label{app:energy}

To compute the energy flux \eqref{def:energy:flux}, recall that the energy balance through frequency $\kap$ is (formally) given by
    \begin{align}\notag
        \frac{1}2\frac{d}{dt}|\Lam^{\frac{\be-2}{2}}p_\kap|^2+\gam|\Lam^{\frac{\al+\be-2}{2}}p_\kap|^2=-(u\cdotp\nabla \tht,\Lam^{\be-2}p_\kap)+(g,\Lam^{\be-2}p_\kap),
    \end{align}
where $p_\kap=P_\kap \tht$. Upon recalling that $u=\nabla^\perp\psi$ and 
$\psi=-\ka_0^{-\beta}\Lam^{\be-2}\tht$,
we apply the identities \eqref{eq:identities} and \eqref{psi_orthog} to argue
    \begin{align}
    -(u\cdotp\nabla \tht,\Lam^{\be-2}p_\kap)=\ka_0^\be(\nabla^\perp\psi\cdotp\nabla \tht,P_\kap\psi)=-\ka_0^\be(\nabla^\perp\psi\cdotp\nabla P_\kap\psi,\tht)=-\ka_0^\be(\nabla^\perp Q_\kap\psi\cdotp\nabla P_\kap\psi,\tht)\notag.
    \end{align}
    
Thus, for $q_\kap=Q_\kap\tht$, we have

   \begin{align}
     -(u\cdotp\nabla \tht,\Lam^{\be-2}p_\kap)&=-\ka_0^\be(\nabla^\perp Q_\kap\psi \cdotp\nabla P_\kap\psi,p_\kap)-\ka_0^\be(Q_\kap u\cdotp\nabla P_\kap\psi, q_\kap)\notag
     \\
     &= \ka_0^\be(\nabla^\perp P_\kap\psi \cdotp\nabla Q_\kap\psi,p_\kap)+\ka_0^\be(Q_\kap u\cdotp\nabla  q_\kap, P_\kap\psi)\notag
     \\
     &=-\ka_0^\be(P_\kap u \cdotp\nabla p_\kap,Q_\kap\psi)+\ka_0^\be(Q_\kap u\cdotp\nabla  q_\kap, P_\kap\psi)\notag
     \\
     &=(P_\kap u \cdotp\nabla p_\kap,\Lam^{\be-2}q_\kap)-(Q_\kap u\cdotp\nabla  q_\kap, \Lam^{\be-2}p_\kap)\notag.
    \end{align}
    which yields \eqref{def:energy:flux}.

\section{Littlewood-Paley decomposition}\label{sect:appendix} We give a brief introduction to the Littlewood-Paley decomposition of functions. We state the decomposition for $\mathbb{R}^2$ and point out that it is also valid in the case $\mathbb{T}^2$.
Let $\mathscr{S}(\RR^2)$ denote the space of Schwartz class functions on $\RR^2$ and $\mathscr{S}'(\RR^2)$ denote the space of tempered distributions. We denote by $\hat{f}$ or $\mathcal{F}(f)$, the Fourier transform of $f$, defined by
	\[\hat{f}(\xi)\overset{\text{def}}{=}\int e^{-2\pi i x\cdot \xi}f(x)dx,\quad f\in\mathscr{S}'(\RR^2).\] 
    Recall that for $f,g$, we have
	    \begin{align}\notag
	        (f,g)=(\hat{f},\hat{g}).
	    \end{align}
	The fractional laplacian operator, $\Lam^\si$ is defined as
	    \begin{align}\notag
	        \mathcal{F}(\Lam^\si f)(\xi)=|\xi|^\si\mathcal{F}(f),\quad \si\in\RR.
	    \end{align} 
     For $\si \in \mathbb{R}$, we define the Fourier-based homogeneous and inhomogeneous Sobolev spaces by
	    \begin{align}
	        &\Hdot^\si(\mathbb{R}^2)\overset{\text{def}}{=}\left\{f\in \mathscr{S}(\RR^2):\hat{f}\in L^2_{loc}, \Sob{f}{\Hdot^\si}\overset{\text{def}}{=}|\Lam^\si f|<\infty\right\},\label{def:hom:Sob:norm}\\
	        &H^\si(\mathbb{R}^2)\overset{\text{def}}{=}\left\{f\in \mathscr{S}(\RR^2):\hat{f}\in L^2_{loc}, \Sob{f}{H^\si}\overset{\text{def}}{=}|(I-\Delta)^{\si/2}) f|<\infty\right\}.\label{def:inhom:Sob:norm}
	    \end{align}

We define
	\begin{align*}
	\mathscr{Q}(\RR^2)\overset{\text{def}}{=}\left\{f\in \mathscr{S}(\RR^2): \int f(x)x^{\tau}\, dx=0, \quad \abs{\tau}=0,1,2,\cdots \right\}.
	\end{align*}
	Let $\mathscr{Q}(\RR^2)'$ denote the topological dual of $\mathscr{Q}(\RR^2)$. Then, $\mathscr{Q}(\RR^2)'$ can be identified with the space of tempered distributions modulo the vector space of polynomials on $\mathbb{R}^2$, denoted by $\mathscr{P}$, i.e.
	\begin{align*}
	\mathscr{Q}'(\RR^2)\cong\mathscr{S}(\RR^2)/\mathscr{P}.
	\end{align*}
	Let us denote by ${\Bcal}(r)$, the open ball centered at the origin of radius $r$ and by ${\Acal}(r_{1},r_{2})$, the open annulus centered at the origin with inner and outer radii $r_{1}$ and $r_{2}$. There exist two non-negative radial functions $\chi,\phi\in\mathscr{S}(\RR^2)$ with $\supp\chi\subset{\Bcal}(1)$ and $\supp\phi\subset{\Acal}(2^{-1},2)$ such that for $\phi_j(\xi)\overset{\text{def}}{=}\phi(2^{-j}\xi)$ and $\chi_j(\xi)\overset{\text{def}}{=}\chi(2^{-j}\xi)$, the following conditions are satisfied
	\begin{align*}
	    \begin{cases}
	    \sum_{j\in\ZZ}\phi_j(\xi)=1,\\
	    \chi+\sum_{j\geq0}\phi_j\equiv 1,\,\forall \xi\in\RR^2\setminus\{\mathbf{0}\},\\
	    \supp\phi_i\cap\supp\phi_j=\varnothing,\,\text{if}\,
	    |i-j|\geq2,\\
	    \text{and}\quad\supp\phi_i\cap\supp\chi =\varnothing.
	    \end{cases}
	\end{align*}
    We denote
    \begin{align}\notag            {\Acal}_{j}= {\Acal}(2^{j-1},2^{j+1}),\quad{\Acal}_{\ell,k}= {\Acal}(2^{\ell},2^{k}),\quad  {\Bcal}_j={\Bcal}(2^j).
        \end{align}
    Note that
	    \begin{align}\label{eq:rewrite:supp}
	        \supp\phi_j\subset{\Acal}_j,\quad \supp\chi_j\subset{\Bcal}_j.
	    \end{align}
	we denote by ${\lpj}$ and $S_{j}$, the (homogeneous) Littlewood-Paley dyadic blocks defined as
	\begin{align}\notag
	\mathcal{F}({\lpj}f)=\phi_{j}\mathcal{F}(f), \quad \mathcal{F}(S_{j}f)=\chi_{j}\mathcal{F}(f). 
	\end{align}
    By \eqref{eq:rewrite:supp}, we have
	\begin{align*}
	&\mathcal{F}({\lpj}f)|_{{\Acal}_j^c}=0,\quad
	\mathcal{F}(S_{j}f)|_{{\Bcal}_j^{c}}=0,
	\end{align*}
    For any $f \in \mathscr{S}(\RR^2)$, we have
    \begin{align*}
         f&=S_if+\sum_{j\geq i}\lpj f,\quad i\in\ZZ.
    \end{align*}
	and for any $f\in\mathscr{Q}(\RR^2)'$, we have
	    \begin{align*}
	        f&=\sum_{j\in\ZZ}\lpj f.
	    \end{align*}
We have the following characterization of the Sobolev seminorms
	    \begin{align*}
	       C^{-1}\Sob{f}{\dot{H}^\si}\leq \left(\sum_{j\in \mathbb{Z}}\left(2^{j\si}\nrm{\lpj f}_{L^2}\right)^{2}\right)^{\frac{1}{2}}\leq C\Sob{f}{\dot{H}^\si},
	    \end{align*}
	for some constant $C$ depending only on $\si$. 
    We recall the following inequality which quantifies the relation between the dyadic blocks and the fractional Laplacian operator.
	\begin{lemma}
	    [Bernstein inequalities]\label{T:Bernstein}
		Let $\si\in\RR$ and $1\le p \le q\le \infty$. Then
		\begin{align*}
	C^{-1}2^{\si j}\nrm{{\lpj}f}_{L^q}\le \nrm{\Lam^{\si}{\lpj}f}_{L^q}\le C 2^{\si j+2j(\frac{1}{p}-\frac{1}{q})}\nrm{{\lpj}f}_{L^p},
		\end{align*}
	 where $C>0$ is a constant that depends on $p,q$ and $\si$.
	\end{lemma}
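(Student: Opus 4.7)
The plan is to prove both inequalities by exploiting the frequency localization of $\lpj f$ to realize the operators $\Lambda^\sigma$ and $\Lambda^{-\sigma}$ acting on $\lpj f$ as convolutions with Schwartz kernels, after which Young's convolution inequality and a change of variables deliver the claim. First, I fix a ``fattened'' cutoff: choose $\tilde\phi\in C_c^\infty(\mathbb{R}^2)$ with $\supp\tilde\phi\subset\mathcal{A}(2^{-2},2^{2})$ and $\tilde\phi\equiv 1$ on $\supp\phi$, and set $\tilde\phi_j(\xi)=\tilde\phi(2^{-j}\xi)$, so that $\tilde\phi_j\equiv 1$ on $\supp\phi_j\subset\mathcal{A}_j$. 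Since $\mathcal{F}(\lpj f)$ is supported in $\mathcal{A}_j$, we may insert $\tilde\phi_j$ freely:
\[
\mathcal{F}(\Lambda^\sigma \lpj f)(\xi)=|\xi|^\sigma\tilde\phi_j(\xi)\,\mathcal{F}(\lpj f)(\xi),\qquad \mathcal{F}(\lpj f)(\xi)=|\xi|^{-\sigma}\tilde\phi_j(\xi)\,\mathcal{F}(\Lambda^\sigma \lpj f)(\xi).
\]

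Next, I introduce the two ``base'' symbols $L^{(\sigma)}(\xi)\overset{\text{def}}{=}|\xi|^\sigma\tilde\phi(\xi)$ and $L^{(-\sigma)}(\xi)\overset{\text{def}}{=}|\xi|^{-\sigma}\tilde\phi(\xi)$. Because $\tilde\phi$ is smooth and compactly supported \emph{away from the origin}, both $L^{(\pm\sigma)}$ are in $C_c^\infty(\mathbb{R}^2)$ and hence their inverse Fourier transforms $K^{(\pm\sigma)}\overset{\text{def}}{=}\mathcal{F}^{-1}(L^{(\pm\sigma)})$ belong to $\mathscr{S}(\mathbb{R}^2)$. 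A straightforward scaling identity then yields that the multiplier symbols above satisfy
\[
|\xi|^\sigma\tilde\phi_j(\xi)=2^{\sigma j}L^{(\sigma)}(2^{-j}\xi),\qquad |\xi|^{-\sigma}\tilde\phi_j(\xi)=2^{-\sigma j}L^{(-\sigma)}(2^{-j}\xi),
\]
so that the corresponding convolution kernels are $2^{(\sigma+2)j}K^{(\sigma)}(2^j\cdot)$ and $2^{(-\sigma+2)j}K^{(-\sigma)}(2^j\cdot)$, respectively.

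For the upper estimate, Young's convolution inequality with exponents $1+\tfrac{1}{q}=\tfrac{1}{r}+\tfrac{1}{p}$ gives
\[
\nrm{\Lambda^\sigma\lpj f}_{L^q}\le \nrm{2^{(\sigma+2)j}K^{(\sigma)}(2^j\cdot)}_{L^r}\nrm{\lpj f}_{L^p}=C\,2^{\sigma j+2j(1/p-1/q)}\nrm{\lpj f}_{L^p},
\]
where the scaling $\nrm{K^{(\sigma)}(2^j\cdot)}_{L^r}=2^{-2j/r}\nrm{K^{(\sigma)}}_{L^r}$ and the identity $2-\tfrac{2}{r}=\tfrac{2}{p}-\tfrac{2}{q}$ produce the advertised exponent; the constant $C=\nrm{K^{(\sigma)}}_{L^r}$ is finite and depends only on $\sigma,p,q$. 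For the lower estimate, I apply Young with $p=q$, $r=1$ to the second representation above and obtain $\nrm{\lpj f}_{L^q}\le C\,2^{-\sigma j}\nrm{\Lambda^\sigma\lpj f}_{L^q}$, which is exactly the left-hand inequality after rearrangement.

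There is no serious analytic obstacle; the only point requiring care is the verification that $L^{(\pm\sigma)}\in C_c^\infty(\mathbb{R}^2)$ (so that $K^{(\pm\sigma)}\in\mathscr{S}$ and in particular $K^{(\pm\sigma)}\in L^r$ for every $1\le r\le\infty$), which is why the cutoff $\tilde\phi$ must be supported strictly away from the origin. The identical argument applies verbatim on $\mathbb{T}^2$ upon replacing the Fourier transform with Fourier series and the Lebesgue measure on $\mathbb{R}^2$ with the normalized counting measure on $\mathbb{Z}^2$; Young's inequality for convolution on the torus and the same scaling identity yield the same constants, so the stated estimate is valid in both settings.
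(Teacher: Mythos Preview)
The paper does not supply a proof of this lemma; it is stated as a standard recalled fact (``We recall the following inequality\ldots'') in the appendix on the Littlewood--Paley decomposition. Your argument is the classical one---insert a fattened cutoff $\tilde\phi_j$ equal to $1$ on $\supp\phi_j$, realize $\Lambda^{\pm\sigma}$ on $\lpj f$ as convolution with the rescaled Schwartz kernel $K^{(\pm\sigma)}$, and apply Young's inequality---and it is correct on $\RR^2$.

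One small caveat: your final paragraph overstates the transfer to $\TT^2$. Dilation $x\mapsto 2^j x$ does not act on the torus, so the scaling identity $\nrm{K(2^j\cdot)}_{L^r}=2^{-2j/r}\nrm{K}_{L^r}$ is not literally available there and the argument is not ``verbatim.'' The standard fix is to note that the torus multiplier $m_j(k)=|k|^{\pm\sigma}\tilde\phi_j(k)$, $k\in\ZZ^2$, has as convolution kernel the periodization $\sum_{n\in\ZZ^2}2^{(\pm\sigma+2)j}K^{(\pm\sigma)}(2^j(x+n))$; since $K^{(\pm\sigma)}\in\mathscr{S}$, the $L^1(\TT^2)$ norm of this periodization is bounded by $\nrm{2^{(\pm\sigma+2)j}K^{(\pm\sigma)}(2^j\cdot)}_{L^1(\RR^2)}=2^{\pm\sigma j}\nrm{K^{(\pm\sigma)}}_{L^1}$, which immediately gives both the lower estimate and the $p=q$ case of the upper estimate. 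For $p<q$ one then combines this with the elementary bound $\nrm{\lpj f}_{L^\infty}\le C2^{2j/p}\nrm{\lpj f}_{L^p}$ (obtained the same way, or via $\nrm{\widehat{\lpj f}}_{\ell^1}$) and interpolation. This is a routine adjustment, and the conclusion stands.
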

    
\begin{proof}[Proof of Lemma \ref{JKM22lemma}]

    We define the sum
    \[ {\mathcal{L}}_{s,\ell}(f_1,f_2,f_3)\overset{\text{def}}{=}\sum_{(u,v)\in \mathbb{Z}^2}m_{s,\ell}(u,v)\widehat{f_1}(u)\widehat{f_2}(v-u)\overline{\widehat{f_3}(v)},
    \]
where 
\[m_{s,\ell}(u,v)\overset{\text{def}}{=}|v|^{-s}v-|v-u|^{-s}(v-u),\]
and observe that 
\[\lbn [\la^{-s}\nabla,f_1]f_2,f_3\rbn=\mathcal{L}_{s,\ell}(f_1,f_2,f_3).\]
Now let
    \begin{align}\label{def:A}
    \mathbf{A}(u,v)(\tau)\overset{\text{def}}{=}\tau v+(1-\tau)(v-u).
    \end{align}
Henceforth, we suppress the dependence of $\mathbf{A}$ on $u,v$. Observe that
    \begin{align}\label{E:meanvalue1}
        |m_{s,\ell}(u,v)|&=\abs{\int_{0}^{1}\frac{d}{{d\tau}}\left(\abs{\mathbf{A}(\tau)}^{-s}\mathbf{A}(\tau)\right)d\tau}\notag\\
        &=\abs{\int_{0}^{1}\left(-s\abs{\mathbf{A}(\tau)}^{-s-2}(\mathbf{A}(\tau)\cdot u)\mathbf{A}(\tau)+\abs{\mathbf{A}(\tau)}^{-s}u\right)d\tau}\notag\\
    &\le C\abs{u}\int_{0}^{1}\abs{\mathbf{A}(\tau)}^{-s}d\tau,
    \end{align}
where the fact $s\in(0,1)$ is invoked to obtain the last inequality. Since $\text{supp} \ \widehat{f_2}\subset \mathcal{A}_i$ and $\text{supp} \ \widehat{f_3}\subset \mathcal{A}_j$, we can assume that $\text{supp} \ \widehat{f_1}\subset \mathcal{B}_{i+k+2}$. We consider two cases: 
\subsubsection*{Case 1: $\text{supp} \ \widehat{f_1}\subset \mathcal{B}_{i-3}$}
For $u\in \mathcal{B}_{i-3}$, we have
\[|\mathbf{A}(\tau)|\ge |v-u|-\tau |u|\ge 2^{i-1}-2^{i-3}=3(2^{i-3})\ge \frac{3}{16}|v-u|.\]
For $\varphi\overset{\text{def}}{=}\frac{v-u}{|u|}$ and $\vartheta\overset{\text{def}}{=}\frac{u}{|u|}$, we therefore obtain 
\begin{align}\label{BoundEstimate1}
    \int_{0}^{1}\abs{\mathbf{A}(\tau)}^{-s}d\tau \le C |v-u|^{-\rho}|u|^{\rho-s}\int_{0}^{1}\frac{1}{|\varphi+\tau \vartheta |^{s-\rho}}d\tau
    \le C |v-u|^{-\rho}|u|^{\rho-s},
\end{align}
where we invoked Lemma 3.2 in \cite{JKM2022} for the last inequality. 
\subsubsection*{Case 2: $\text{supp} \ \widehat{f_1}\subset \mathcal{A}_{i-3,i+j+k}$}. Once again invoking Lemma 3.2 in \cite{JKM2022}, we have 
\begin{align}\label{BoundEstimate2}
    \int_{0}^{1}\abs{\mathbf{A}(\tau)}^{-s}d\tau \le C|u|^{-s}\le C_k |v-u|^{-\rho}|u|^{\rho-s}.
\end{align}
Using \eqref{BoundEstimate1} and \eqref{BoundEstimate2} in \eqref{E:meanvalue1}, we obtain
\begin{align*}
    |\mathcal{L}_{s,\ell}(f_1,f_2,f_3)|\le C \sum_{(u,v)\in \mathbb{Z}^2}|\widehat{\Lam^{1+\rho-s}f_1}(u)||\widehat{\Lam^{-\rho}f_2}(v-u)|{\widehat{f_3}(v)}|.
\end{align*}

Finally, applying the Cauchy-Schwarz inequality and Young's inequality gives us the claimed result.

\end{proof}

\bibliography{reference6Oct2025}
\bibliographystyle{plain} 

\end{document}